\documentclass[12pt]{article}

\usepackage[letterpaper,top=2.54cm,bottom=2.54cm,left=2.54cm,right=2.54cm,marginparwidth=1.75cm]{geometry}

\usepackage[T1]{fontenc}
\usepackage{authblk}
\usepackage{lmodern}
\usepackage{textcomp}
\usepackage{amsmath}
\usepackage{amssymb}
\usepackage{eqlist} 
\usepackage[nosepfour,warning,np,debug,autolanguage]{numprint}
\usepackage{acro}
\usepackage{booktabs}
\usepackage[final]{graphicx}
\usepackage[dvipsnames]{color}
\DeclareGraphicsExtensions{.pdf, .jpg}
\usepackage{setspace}

\usepackage{url}
\usepackage[shortlabels]{enumitem}
\usepackage{mathtools}

\usepackage{titlesec}

\usepackage[colorlinks = true, 
            urlcolor  = black,
            citecolor = black,
            linkcolor = black]{hyperref}

\usepackage[numbers,sort&compress,super]{natbib}

\makeatletter
\renewcommand{\@biblabel}[1]{#1.}
\makeatother

\usepackage{amsthm}
\usepackage{thmtools}

\usepackage[singlelinecheck=false]{caption}
\usepackage{subcaption}
\newcounter{lofdepth}
\newcounter{lotdepth}
\usepackage{comment}
\usepackage{tabularray}
\usepackage{multirow}
\usepackage[table,xcdraw]{xcolor}
\usepackage{enumitem}

\usepackage[utf8]{inputenc}
\usepackage{csquotes}
\usepackage[english]{babel}
\usepackage{soul}
\usepackage{chngcntr}

\declaretheorem[name=Theorem, numberwithin=]{theorem}
\declaretheorem[sibling=theorem,name=Definition,numberwithin=]{definition}
\declaretheorem[sibling=theorem,name=Lemma,numberwithin=]{lemma}
\declaretheorem[sibling=theorem,name=Proposition,numberwithin=]{prop}
\declaretheorem[sibling=theorem,name=Corollary,numberwithin=]{cor}

\makeatletter
\renewenvironment{proof}[1][\proofname]{%
  \par\pushQED{\qed}%
  \normalfont
  \topsep6pt \partopsep0pt
  \trivlist
  \item[\hskip\labelsep\bfseries #1\@addpunct{.}]%
}{%
  \popQED\endtrivlist\@endpefalse
}
\makeatother

\begin{document}

\title{\vspace{-1em}An MCMC-Based Method for Dynamic Causal Modeling of Effective Connectivity in Functional MRI} 

\author[1]{Kaitlyn R. Fales\thanks{Corresponding author: kfales@psu.edu}}
\author[1]{Hyebin Song}
\author[1,2]{Nicole A. Lazar}

\affil[1]{Department of Statistics, Pennsylvania State University}
\affil[2]{Huck Institutes of the Life Sciences, Pennsylvania State University}

\date{}

\maketitle

\vspace{-1em}

\noindent\textbf{Shortened Title:} MCMC-Based Dynamic Causal Modeling for fMRI Connectivity

\bigskip

\begin{singlespace}

\noindent\textbf{Abstract:} Effective connectivity analysis in functional magnetic resonance imaging (fMRI) studies directional interactions among brain regions and experimental stimuli. Dynamic causal modeling (DCM) is a widely used method to estimate effective connectivity, based on a state-space representation consisting of a latent neural signal model and an observation model transforming the neural signal into the observed blood-oxygen–level-dependent (BOLD) response. A standard DCM combines ordinary differential equation (ODE) dynamics for the latent signal with a complex neural-hemodynamic system for the observation model, and typically uses variational Bayes for parameter estimation. While physically well-motivated, this approach can lead to practical challenges such as inexact solutions and underestimated uncertainty. We introduce Canonical DCM (CDCM), a Markov chain Monte Carlo (MCMC)-based method that adopts a simpler observation model and the No-U-Turn Sampler for posterior sampling. The simpler observation model admits a piecewise analytic solution to the neural ODE, increasing computational efficiency and enabling explicit derivation of sufficient conditions for parameter identifiability. The results indicate that CDCM provides reliable uncertainty quantification and consistent estimation of parameters related to experimental inputs for simulated and real data. We use publicly available data from the Wellcome Centre for Human Neuroimaging and the Human Connectome Project (HCP) to benchmark CDCM against standard DCM methods and examine replicability of estimated connectivity patterns in small- and large-scale neuroimaging settings.

\bigskip

\noindent\textbf{Keywords:} functional neuroimaging, state-space model, identifiability, replicability, brain networks, human connectome project
    
\end{singlespace}

\section{Introduction}
State-space models defined by systems of ordinary differential equations (ODEs) are widely used to represent dynamic processes across scientific domains. Parameter estimation in these models is often challenging due to the presence of latent states, nonlinear dynamics, and partial or indirect observations.\cite{duan-etal-2020,stanhope-etal-2014,ljung-2010,schon-etal-2011} In such settings, distinct parameter sets may produce identical model outputs, raising the issue of parameter identifiability. Parameter identifiability generally refers to the ability to uniquely identify model parameters from observed data.\cite{wang-etal-2024,miao-etal-2011,villaverde-2019} Determining identifiability is a central problem in dynamical systems, with a large body of existing results, but one that remains inherently model-dependent. Identifiability is closely related to observability in the control systems literature, where observability is introduced first for linear systems,\cite{kalman-1960} and later extended to nonlinear systems.\cite{hermann-krener-1977} Bellman and {\AA}str{\"o}m\cite{bellman-astrom-1970} formalize the concept of structural identifiability for dynamical systems as the property that model parameters can be uniquely determined from noiseless observations of the system output. Without structural identifiability, reliable parameter estimation is fundamentally impossible, but even when parameters are identifiable under idealized conditions, finite and noisy data still complicate estimation in practice.\cite{raue-etal-2009,miao-etal-2011}

A large body of existing work studies identifiability and its implications for parameter estimation in dynamical systems. In system identification, identifiability of linear and affine state-space models depends heavily on input design and parameterization,\cite{duan-etal-2020,wang-etal-2024,stanhope-etal-2014,kailath-1980} while work in systems biology highlights the gap between theoretical identifiability and reliable estimation in finite or partially observed data.\cite{chis-etal-2011,miao-etal-2011,villaverde-2019,wieland-etal-2021,raue-etal-2009} The previous literature reflects a fundamental trade-off between model complexity and parameter identification.\cite{ljung-2010} More recent advances emphasize identifying estimable combinations of parameters \cite{ovchinnikov-etal-2021} and reparameterizing models to improve identifiability.\cite{ovchinnikov-etal-2025} 

These identifiability challenges are pronounced in studies of effective, or directed, brain connectivity in functional MRI (fMRI), where latent neural activity is inferred from indirect measurements. In neuroimaging, dynamic causal modeling (DCM) is a popular state-space framework in which latent neural dynamics are represented through a system of ODEs.\cite{friston-etal-2003} DCM models directed interactions between brain regions and their modulation by experimental inputs. DCM is inherently confirmatory: the state-space structure is specified \textit{a priori}, with many parameters fixed to zero according to hypothesized connections.\cite{friston-etal-2003,zeidman-etal-2019} Consequently, the framework does not support discovery of previously unmodeled interactions, increasing the importance of reliable and identifiable parameter estimation. Moreover, experimental design, including choices regarding the number, timing, and variation of stimuli, directly influences the ability to recover model parameters and test hypotheses.\cite{arand-etal-2015,zeidman-etal-2019} Although general design recommendations exist,\cite{friston-etal-2003, stephan-etal-2010, zeidman-etal-2019, kahan-foltynie-2013} they are often qualitative, leaving open questions about how design impacts identifiability and estimation in practice. Reliable parameter identification is also fundamental to replicability and reproducibility: without consistent recovery of model parameters across datasets or experimental realizations, inferred system dynamics may fail to generalize, limiting the scientific utility of the model.

Despite widespread use, theoretical understanding of parameter identifiability in DCM is limited. Most work relies on empirical evaluation through simulations and model recovery studies,\cite{marreiros-etal-2008, daunizeau-etal-2011, zeidman-etal-2019} or profile likelihood assessments.\cite{arand-etal-2015} Recent methodological developments aimed at improving computational scalability, including regression DCM \cite{frassle-etal-2021} and large-scale optimization strategies,\cite{zhuang-etal-2021} focus on efficient inference but do not establish formal identifiability guarantees. Similarly, spectral and frequency-domain extensions of DCM \cite{novelli-etal-2024} emphasize interpretability and links to functional connectivity but rely on empirical validation rather than theoretical analysis. Alternative causal modeling approaches, such as CausalMamba,\cite{bae-cha-2025} are aimed at addressing the computational intractability challenges of DCM. These methods rely on machine learning techniques to infer network connectivity and typically lack interpretability and explicit guarantees of parameter identifiability, further highlighting the need for theoretically grounded approaches to dynamical connectivity modeling.

The difficulty of formal identifiability analysis appears to arise, at least in part, from the complex structure of standard DCM, which couples the neural ODE with a nonlinear hemodynamic model of four additional differential equations.\cite{friston-etal-2000,friston-2002,stephan-etal-2007,zeidman-etal-2019} This complexity is emblematic of nonlinear state-space ODE models more generally, where latent dynamics interacting with observation models complicate theoretical analysis. Adding further model flexibility exacerbates the challenge; in the case of DCM, other modeling variations such as the two-state,\cite{marreiros-etal-2008} nonlinear,\cite{stephan-etal-2008} and stochastic DCM\cite{daunizeau-etal-2012} all substantially increase complexity. On the other hand, work in system identification and dynamical systems establishes conditions for parameter identifiability in linear and affine models.\cite{kailath-1980, duan-etal-2020, wang-etal-2024,stanhope-etal-2014} Similarly, systems biology literature develops frameworks for assessing structural and practical identifiability in nonlinear ODE models.\cite{arand-etal-2015,chis-etal-2011, miao-etal-2011,villaverde-2019,wieland-etal-2021,raue-etal-2009} Together, these lines of work suggest that, under appropriate simplifications, complex models such as DCM may be recast in a tractable form that permits analysis of parameter identifiability. Simplifications that admit piecewise or analytic solutions further facilitate both theoretical understanding and computational efficiency.\cite{chis-etal-2011,khalil-2002,sontag-1998,villaverde-2019,jacquez-simon-1993}

Most existing DCM implementations use variational Laplace (VL) for parameter estimation.\cite{zeidman-etal-2019,friston-etal-2003,marreiros-etal-2008,stephan-etal-2008,daunizeau-etal-2014} VL replaces the intractable posterior with a Gaussian approximation centered at the posterior mode and obtained through free energy optimization. While computationally efficient, the approximation imposes a unimodal, locally quadratic structure on the posterior that is a strong assumption for state-space models such as DCM.\cite{friston-etal-2006,zeidman-etal-2023} Furthermore, the free energy may have multiple local maxima, or even when the global maximum is reached, there is still a risk of biased inference.\cite{daunizeau-etal-2011}  

By contrast, Markov chain Monte Carlo (MCMC) methods avoid restrictive assumptions on posterior form and instead target the full posterior distribution, enabling more accurate representations of uncertainty and parameter dependencies. Although prior work shows that VL optimization and sampling-based approaches, such as MCMC, agree on model selection,\cite{friston-etal-2006} agreement does not imply equivalence of the inferred posteriors. Variational inference literature demonstrates these approaches often underestimate posterior uncertainty, particularly when the true posterior deviates from Gaussianity.\cite{blei-etal-2017} This limitation is especially problematic in fMRI, where inference is already challenged by indirect, noisy, and temporally coarse measurements of neural activity. 

Relevant work explores sampling-based approaches for DCM parameter estimation, particularly in neural mass models. Gradient-based MCMC methods are shown to more effectively explore the complex posterior geometries arising in DCM compared to gradient-free approaches.\cite{sengupta-etal-2015,sengupta-etal-2016} However, these approaches are applied to the full nonlinear DCM and therefore remain computationally demanding in practice, taking approximately 46 hours for 20,000 Hamiltonian Monte Carlo samples, but with much larger effective sample sizes than Langevin Monte Carlo, taking just over one hour for the same number of iterations.\cite{sengupta-etal-2016} While sampling-based methods are also used for estimating model evidence in DCM to perform Bayesian model selection, for example via thermodynamic integration,\cite{aponte-etal-2022} they similarly require repeated evaluation of the nonlinear state-space model. These limitations motivate the development of more computationally efficient approaches to sampling-based inference in DCM.

In this paper, we introduce Canonical DCM (CDCM), a simplified representation of the classical DCM from Friston et al\cite{friston-etal-2003} that retains its essential dynamical structure while reducing model complexity. Through the simplified framework, we make three main contributions. First, we show that the proposed model admits a piecewise analytic solution to the neural ODE, bypassing the need for numerical integration. The solution is exact and more computationally efficient than numerical integration for this class of state-space ODE models. Second, we derive sufficient conditions for parameter identifiability by exploiting the piecewise affine ODE solution made possible by the simplification. We can then connect the DCM setting to existing identifiability results from linear and affine dynamical systems. To the best of our knowledge, this is the first work to investigate and establish formal parameter identifiability in the context of DCM. Finally, we implement an exact parameter estimation procedure that allows reliable uncertainty quantification via MCMC, while improving computational scalability with the use of the piecewise analytic ODE solution. 

The remainder of the paper is organized as follows. In Section \ref{sec:motivation}, we present a motivating large-scale fMRI application that serves as a natural setting for evaluating the consistency and replicability of effective connectivity estimates. Section \ref{sec:methods} describes the proposed method, and Section \ref{sec:identification} provides sufficient conditions for unique parameter identification. In Section \ref{sec:simulation}, we present two simulation studies demonstrating the practical utility of the proposed approach under a range of settings. Section \ref{sec:application} presents two fMRI applications: the first compares CDCM with the classical DCM of Friston et al\cite{friston-etal-2003} using data from a single subject, and the second revisits the motivating example from Section \ref{sec:motivation}. We conclude with a discussion in Section \ref{sec:discussion}. 

\section{Motivating Data Example}\label{sec:motivation}

As a motivating example, we consider task fMRI data from the Human Connectome Project – Young Adult (HCP-YA) 2025 Release,\cite{vanessen-etal-2013,glasser-etal-2013} which highlights the practical challenges of assessing stability and replicability of effective connectivity estimates in DCM-based analyses. The HCP provides high-quality, publicly available fMRI data for over 1,000 healthy adults, with the 2025 release incorporating updated preprocessing, including spatial and temporal independent component analysis, multi-run denoising, and spin-echo intensity bias correction.\cite{hcpya2025} We focus on the social cognition task from the 100 Unrelated Subjects dataset, a Theory of Mind paradigm based on animated shapes.\cite{castelli-etal-2000, castelli-etal-2002, wheatley-etal-2007, white-etal-2011} In the task, participants view short clips of geometric shapes exhibiting either random (inanimate) or socially meaningful (animate) interactions and classify each clip accordingly. The design consists of alternating blocks of these conditions interleaved with fixation, and participants typically perform with high accuracy.\cite{barch-etal-2013,hillebrandt-etal-2014}

A prior study by Hillebrandt et al\cite{hillebrandt-etal-2014} applies a two-state DCM to the Q2 HCP data release to estimate directed connectivity between motion-sensitive (V5) and social cognition (posterior superior temporal sulcus; pSTS) regions.  The model includes two inputs corresponding to the experimental design: the ``All Motion'' input reflects the presence of any motion stimulus relative to fixation, while the ``Animate $>$ Inanimate'' contrast isolates the additional effect of viewing socially meaningful interactions. Together, the inputs define how effects of social meaning are modeled separately from basic motion processing.

In the two-state formulation, each region is represented by interacting excitatory and inhibitory neural subpopulations\cite{marreiros-etal-2008} rather than a single neural state, as in traditional DCM.\cite{friston-etal-2003} Using post-hoc Bayesian model selection over a large model space, they identify a fully connected model as having the highest posterior evidence and report consistent connectivity patterns across sessions and hemispheres. The selected model, with results as reported by Hillebrandt et al\cite{hillebrandt-etal-2014} is shown in Figure \ref{fig:hillebrandt-dcm}. The model includes bidirectional connectivity between V5 and pSTS; the ``All Motion'' input drives activity in both regions, and the ``Animate $>$ Inanimate'' contrast modulates all intra- and inter-regional connections. The consistency Hillebrandt et al\cite{hillebrandt-etal-2014} find in connectivity patterns may suggest that effective connectivity estimates for the social cognition task are stable and well-characterized.

\begin{figure}[t!]
    \centering
    \includegraphics[width=0.7\linewidth]{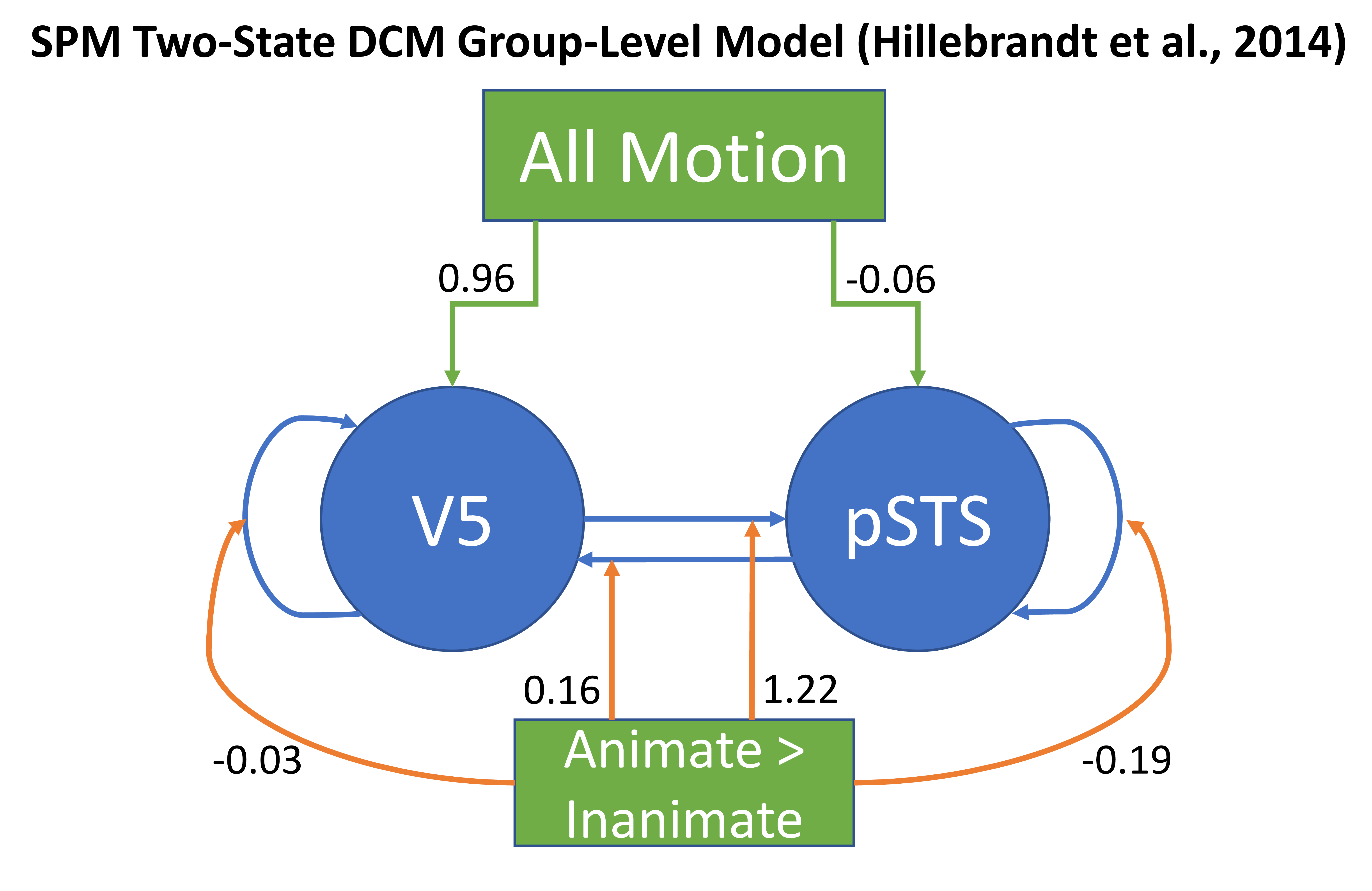}
    \caption{\protect\input{Figure-Captions/hillebrandt-dcm}}
    \label{fig:hillebrandt-dcm}
\end{figure}

However, several aspects of the previous analysis limit the strength of the conclusion. First, Figure \ref{fig:hillebrandt-dcm} demonstrates that results are only reported for a subset of parameters, and posterior uncertainty is not directly quantified or interpreted; uncertainty is instead summarized indirectly through ANOVA-based confidence intervals on selected contrasts. Second, the use of a two-state DCM introduces substantial model complexity beyond the standard one-state formulation. Because parameter identifiability is already difficult to assess even in one-state models,\cite{daunizeau-etal-2011} the complexity of a two-state model further exacerbates the problem. Third, the regional time series used for model estimation are constructed using subject-specific voxel sets defined via local activation maxima, introducing a selection mechanism where each subject’s data is derived from peak activation voxels. Variability in region definition introduces an additional source of bias in estimated connectivity.

Together, these considerations raise an important question: to what extent are the reported connectivity patterns replicable across choices in model specification, parameter estimation, and ROI construction? We therefore investigate replicability of the connectivity results given by Hillebrandt et al\cite{hillebrandt-etal-2014} using the updated 100 Unrelated Subjects dataset within the HCP-YA 2025 release. Between the updated data preprocessing and limited subject overlap (24 percent) with the earlier Q2 release used in Hillebrandt et al,\cite{hillebrandt-etal-2014} we assess replicability of results under meaningful data perturbations.

In the current work, we revisit this motivating example using CDCM, a framework that permits sufficient conditions for parameter identifiability. Rather than relying on model selection alone, we evaluate posterior uncertainty for all connectivity parameters to determine which connections are reliably supported. In addition, we construct regional time series using prespecified, consistent voxel sets across subjects, thereby removing variability in subject-specific ROI selection. By jointly addressing identifiability, uncertainty quantification, and ROI construction, the analysis provides a more rigorous assessment of the stability and replicability of effective connectivity estimates in a large-scale task fMRI setting.

\section{Methods}\label{sec:methods} 

The standard DCM framework from Friston et al\cite{friston-etal-2003} is formulated as a bilinear one-state neural model, in which each brain region is represented by a single latent neural state whose dynamics are governed by intrinsic connectivity parameters and modulatory effects of experimental inputs. Inputs enter both additively and through bilinear interactions with the state (e.g., products of inputs and states), allowing (experimental) condition-specific modulations of connectivity. Then, the traditional observation model describes the hemodynamic transformation from neural activity to the observed blood-oxygen-level-dependent (BOLD) response in fMRI. The Balloon model\cite{buxton-etal-1998} is adapted for use in DCM to account for the hemodynamics.\cite{friston-etal-2000,friston-2002,stephan-etal-2007} These dynamics are represented by a four-dimensional system of nonlinear ODEs with several parameters to estimate; for more details, see Appendix 5 of Zeidman et al.\cite{zeidman-etal-2019}

Our approach, Canonical DCM, differs from the traditional bilinear DCM \cite{friston-etal-2003} in two respects: 1) CDCM utilizes a simpler observation model, and 2) parameter estimation is accomplished using an MCMC-based approach. Instead of the four-dimensional nonlinear Balloon observation model, we propose convolving the latent neural signal with a fixed hemodynamic response function (HRF). There are several choices of HRF used in fMRI, but the most common, or canonical HRF, is the linear combination of two gamma functions,
\begin{equation}\label{eq:HRF}
        h(t;\,(\alpha_1,\alpha_2,\beta_1,\beta_2,c)) = \frac{\beta_1^{\alpha_1}}{\Gamma(\alpha_1)}t^{\alpha_1 -1}e^{-\beta_1t} - c\frac{\beta_2^{\alpha_2}}{\Gamma(\alpha_2)}t^{\alpha_2 -1}e^{-\beta_2t},
\end{equation}
with parameters $\alpha_1=6$, $\alpha_2=16$, $\beta_1=\beta_2 = 1$, and $c = 1/6$, as implemented in SPM (\url{https://www.fil.ion.ucl.ac.uk/spm/software/spm12/}), a popular neuroimaging software.\cite{henson-friston-2006,lindquist-etal-2009} This proposal offers two main advantages. With the simplification, the model retains only a single linear (affine) ODE, for which an analytical solution can be derived. The analytical solution provides the ability to explicitly analyze the identifiability of the ODE system and removes the reliance on numerical solvers, enabling faster MCMC-based computation as compared to MCMC for traditional DCM.\cite{sengupta-etal-2016}

There is an important relation between the two observation model approaches through the Volterra series expansion of the BOLD signal. The Volterra expansion is similar to Taylor expansion except that the Volterra series captures memory, making it applicable to the BOLD signal as it ``remembers'' dependencies in the signal.\cite{friston-etal-2000}  Friston\cite{friston-2002} shows that if the biophysical parameters in the Balloon model are treated as fixed, the state equations linearized, and the Volterra expansion is cut off after first-order terms, the linearized Balloon model is well-approximated by the convolution of neural activation with the canonical HRF. Therefore, our convolution approach can be considered a special case approximation of the Balloon model. The details of the relationship are given in Supplementary Material A.

The convolution of neural activation with the canonical HRF is a natural choice for simplifying DCM while maintaining its biophysical foundations. In a simulation study of practical identifiability for DCM, Arand et al\cite{arand-etal-2015} use the canonical HRF approach coupled with profile likelihood estimation to obtain reasonable parameter estimates for synthetic data generated using various scanning parameters (e.g., repetition time, scanning session duration, block duration). A profile likelihood scheme can be viewed as a ``worst case scenario'' approach for DCM since it does not incorporate any prior information \cite{raue-etal-2012} and helps motivate the use of the canonical HRF within a Bayesian framework.

\subsection{Model Specification}\label{sec:mdl}
We now specify the latent neural dynamics and observation model in CDCM. 
Let $\mathbf{z} = \{\mathbf{z}(t)\}_{t\ge 0}$ denote the latent neural signal, initialized at $\mathbf{z}(0)\in \mathbb{R}^d$, where $\mathbf{z}(t) \in \mathbb{R}^d$ is the vector of states corresponding to neural activation at time $t$ across $d$ ROIs. Following Friston et al\cite{friston-etal-2003},  we model the latent neural dynamic by the ODE
\begin{equation}\label{eq: z_dynamics}
     \dot{\mathbf{z}}(t) = (\mathbf{A} + \sum_{i=1}^m u_i(t) \mathbf{B}_i)\mathbf{z}(t) + \mathbf{Cu}(t),\,\,\,t \in (0,\infty),
\end{equation}
where the vector $\mathbf{u}(t)=[u_1(t),\dots,u_m(t)]^\top \in \mathbb{R}^m$ indicates which, if any, of the $m$ stimuli are present at time $t$. Specifically, $u_i(t)=1$ indicates that the $i$th stimulus is present at time $t$, and $u_i(t)=0$ otherwise, for $i=1,\dots,m$. 

In the neural dynamic equation \eqref{eq: z_dynamics},  $\mathbf{A} \in \mathbb{R}^{d \times d}$ parameterizes the connections between ROIs in the hypothesis (endogenous connectivity), $\mathbf{B}_i \in \mathbb{R}^{d \times d}$ parameterizes the modulation from stimulus $i$ to the (hypothesized) connections between ROIs for $i=1,\dots,m$, and $\mathbf{C} \in \mathbb{R}^{d \times m}$ parameterizes the driving connections from experimental stimuli directly to ROIs. 
Based on the research hypothesis, many values in $\mathbf{A}$, $\mathbf{B} = [\mathbf{B}_1,\hdots,\mathbf{B}_m]$, and $\mathbf{C}$ are hypothesized to be zero, and are not considered during estimation. We define $\text{vec}(\mathbf{A})_{\neq 0}$ as the vectorization of the nonzero elements of $\mathbf{A}$. The parameters of interest in $\mathbf{A}$, $\mathbf{B}$, and $\mathbf{C}$ are $\nu_A = \text{vec}(\mathbf{A})_{\neq 0}$, $\nu_B = \text{vec}(\mathbf{B})_{\neq 0}$, and $\nu_C = \text{vec}(\mathbf{C})_{\neq 0}$. We summarize the set of parameters for the neural signal model as $\theta_z = \{\nu_A,\nu_B,\nu_C\}$. From this point forward, we refer to $\theta_z \in \mathbb{R}^{p_{\theta_z}}$ as the neural parameters, or neural parameter vector. 

DCM is specified in terms of continuous time $t \in [0,\infty)$ since the true signals are continuous. However, in practice, we only observe the BOLD response, denoted as $\mathbf{y}(t) \in \mathbb{R}^d$ at time $t$, at equally spaced repetition times (TR). Let $r$ denote the repetition time, and define $t_j = rj$ for $j = 1, \dots, n$, where $n$ is the number of scans in the fMRI session. To account for the hemodynamics, we model the mean of the observed BOLD response at the observed time points as the sum of ROI-specific offsets and a convolution term based on the convolution of $\mathbf{z}$ with the canonical HRF $h(\cdot;\theta_y)$ defined in \eqref{eq:HRF}, with fixed parameter values $\theta_y = \{\alpha_1=6, \alpha_2=16, \beta_1= 1, \beta_2 = 1, c = 1/6\}$. For notational simplicity, we denote the HRF as $h(\cdot) = h(\cdot; (6,16,1,1,1/6))$. We define the convolution component of the mean BOLD response at the observed time points as
\begin{equation}\label{eqn:discrete-convolve}
    \mu(t_j) = \sum_{i=0}^j h(t_i)\mathbf{z}(t_{j-i}),
\end{equation}
where the sum truncates at $i=0$ since both the neural signal and the canonical HRF are zero prior to the start of the scan.

Combining the latent neural dynamics and the observation model, CDCM is specified by
\begin{align}\label{eqn:canonical-dcm}
     \dot{\mathbf{z}}(t) &= (\mathbf{A} + \sum_{i=1}^m u_i(t) \mathbf{B}_i)\mathbf{z}(t) + \mathbf{Cu}(t),\,\,\,t \in (0,\infty) \\ \label{eqn:canonical-dcm-obs}
    \mathbf{y}(t_j) &= \mu(t_j) +  \beta + \varepsilon(t_j),\,\,\, j \in \mathbb{N},
\end{align}
where $\beta \in \mathbb{R}^d$ is an intercept term accounting for the mean of the BOLD signal within each region, and the measurement errors $\varepsilon(t_j)$, for $j=1,\dots,n$, are iid multivariate Gaussian with variance matrix $\text{diag}(\sigma_1^2,\dots,\sigma_d^2)$. Let $\mathbf{s}^* = \mathbf{z}(0)$ denote the initial state. 
The unknown parameters are the neural parameters $\theta_z$, the initial system condition $\mathbf{s}^* \in \mathbb{R}^d$, the intercept vector $\beta \in \mathbb{R}^d$, and the measurement error variance $\{\sigma_1^2,\dots,\sigma_d^2\}$.

To constrain the dynamic system and prevent it from diverging, we impose strictly negative self-loop connections on the diagonal of $\mathbf{A}$, concentrating them around a baseline of $-0.5$ Hz, similar to other DCM implementations~(e.g., Zeidman et al).\cite{zeidman-etal-2019} We reparameterize the diagonal entries of $\mathbf{A}$ via an unconstrained parameter vector $\nu_A^D$, such that $\text{diag}(\mathbf{A}) = -0.5\exp(\nu_A^D)$, where the transformation enforces negativity and the prior controls deviation from the baseline. In contrast to SPM, which reparameterizes the diagonal entries of both $\mathbf{A}$ and $\mathbf{B}_i$ for $i = 1,\dots,m$, we reparameterize only the diagonal of $\mathbf{A}$ and impose tighter priors on the diagonals of $\mathbf{A}$ and $\mathbf{B}_i$. More details on the distinction between the two parameterizations and its implications are discussed in Supplementary Material B.

For prior specification, we separate the diagonal and off diagonal components of $\mathbf{A}$ and $\mathbf{B}$. Define $\nu_{(A,B)}^D = \mathrm{vec}(\nu_A^D, \text{diag}(\mathbf{B}))_{\neq 0}$ as the vectorization of all nonzero diagonal entries across $\mathbf{A}$ and $\mathbf{B}$, where $\nu_A^D$ denotes the unconstrained representation of $\text{diag}(\mathbf{A})$. Similarly, we define $\nu_{(A,B)}^O = \mathrm{vec}(\text{offdiag}(\mathbf{A}), \text{offdiag}(\mathbf{B}))_{\neq 0}$ as the vectorization of all nonzero off-diagonal entries across $\mathbf{A}$ and $\mathbf{B}$. We then specify priors on the neural parameters $\theta_z$ through $ \nu_{(A,B)}^D$, $\nu_{(A,B)}^O $, and $\nu_C$. Specifically, the priors are given by,
\begin{align*}
    \nu_{(A,B)}^D &\sim \text{N}_{dim(\nu_{(A,B)}^D)}(0,\sigma_{\nu^D}^2\mathbf{I}), \\
    \nu_{(A,B)}^O &\sim \text{N}_{dim(\nu_{(A,B)}^O)}(0,\sigma_{\nu^O}^2\mathbf{I}), \\
    \nu_C &\sim \text{N}_{dim(\nu_C)}(0,\sigma_{\nu^O}^2\mathbf{I}), \\
    \mathbf{s}^* &\sim \text{N}_d(0,\sigma_{s}^2\mathbf{I}), \\
    \beta &\sim \text{N}_d(0, \mathbf{I}), \\
    \varepsilon(t_j)|\{\sigma_1^2,\dots,\sigma_d^2\} &\sim \text{N}_d(0,\text{diag}(\sigma_1^2,\dots,\sigma_d^2)), \\
    \sigma_\ell &\sim \text{Exp}(0.5) \text{ for } \ell \in \{1,\dots,d\}, 
\end{align*} 
where $\sigma_{\nu^D} = 0.125$, $\sigma_{\nu^O} = 1$, and $\sigma_s = 0.3$. This leads to the following likelihood for $\mathbf{y}(t_j)$ for $j=1,\dots,n$:
\begin{equation*}
    \mathbf{y}(t_j) | \{\theta_z, \mathbf{s}^*, \beta,  \sigma_1^2,\dots,\sigma_d^2\} \sim \text{N}_d(\mu(t_j) + \beta , \text{diag}(\sigma_1^2,\dots,\sigma_d^2)).
\end{equation*}

\subsection{Piecewise Analytic Solution of the Neural Signal}\label{sec:neural-ode-sol}

The observation model simplification via the canonical HRF is the contribution that makes MCMC-based estimation more feasible, but adoption of CDCM also hinges on computational efficiency. While MCMC provides a more faithful characterization of posterior uncertainty for DCM, its computational demands hinder routine application. Here, we demonstrate that a simplified model formulation, combined with a piecewise analytic solution to the neural state equation, renders MCMC both tractable and efficient. Many fMRI experiments use a block design, where subjects experience a stimulus repeatedly over blocks of time \cite{tie-etal-2009}; it is the intuition behind the block design that enables a piecewise analytic solution for CDCM.

Each element of $\mathbf{u}(t)$ at time $t$ is in $\{0,1\}$ based on which stimuli are present during a particular block of time. We divide the experiment into $B+1$ distinct blocks, assigning a new block each time the set of stimuli changes. Within a block, the subject experiences the same stimulus repeatedly. Let $\mathcal{T}^{(b)}$ be the set of time points $t_j$ belonging to the $b$th block and let $t^{(b)}$ represent the minimum of $\mathcal{T}^{(b)}$, which is the starting time point of the block $b$ for $b=0,1,\dots,B$. Then, $\mathcal{T}^{(b)} = \{t^{(b)},\, t^{(b)}+r,\,t^{(b)}+2r,\,\dots,t^{(b+1)}-r\}$, where we recall that $r$ is the repetition time (TR). Also, we represent the experimental stimuli as $\Tilde{\mathbf{U}} := [\Tilde{\mathbf{u}}^{(0)\top},\hdots,\Tilde{\mathbf{u}}^{(B)\top}]^\top \in \mathbb{R}^{(B+1) \times m}$, where $\Tilde{\mathbf{u}}^{(b)} = [\Tilde{u}_1^{(b)},\hdots,\Tilde{u}_m^{(b)}]^\top \in \mathbb{R}^m$ is a stimulus indicator vector for all observations within block $b$ for $b \in \{0,1,\hdots,B\}$. 

Note that for any $t_j \in \mathcal{T}^{(b)}$, $\mathbf{u}(t_j) = \Tilde{\mathbf{u}}^{(b)}$. The trajectory of the ODE system (\ref{eqn:canonical-dcm}) either follows linear or affine ODE dynamics \textit{within each block}, as $\sum_{i=1}^m u_i(t) \mathbf{B}_i$ and $\mathbf{C}\mathbf{u}(t)$ are constant for $t \in \mathcal{T}^{(b)}$. This observation underlies the closed-form solution to the neural ODE and enables us to leverage existing results on identifiability for linear or affine ODE systems.\cite{stanhope-etal-2014, duan-etal-2020, wang-etal-2024} 
More concretely, for $t \in \mathcal{T}^{(b)}$, the dynamics in \eqref{eqn:canonical-dcm} become
\begin{align}\label{eq:z_dynamics_within_b}
    \dot{\mathbf{z}}(t) = \tilde{\mathbf{A}}^{(b)}\mathbf{z}(t) + \tilde{\mathbf{c}}^{(b)}
\end{align}
where $\Tilde{\mathbf{A}}^{(b)} := (\mathbf{A} + \sum_{i=1}^m \Tilde{u}_i^{(b)}\mathbf{B}_i) \in \mathbb{R}^{d \times d}$ and $\Tilde{\mathbf{c}}^{(b)}:=\mathbf{C}\Tilde{\mathbf{u}}^{(b)} \in \mathbb{R}^d$ are both constants within block $b$. Note that the equation \eqref{eq:z_dynamics_within_b} is linear if $\tilde{\mathbf{c}}^{(b)}=0$ and affine otherwise. It is well known that for an affine ODE \begin{equation}\label{eq: affine_ODE}
    \dot{\mathbf{v}}(t) = \mathbf{A}\mathbf{v}(t) + \mathbf{c},
\end{equation}
with the initial condition $\mathbf{v}(0)$, the closed-form solution is given by,
\begin{equation}\label{eq: affine_ODE_sol}
    \mathbf{v}(t) = e^{\mathbf{A}t}(\mathbf{v}(0)+\mathbf{A}^{-1}\mathbf{c}) - \mathbf{A}^{-1}\mathbf{c},
\end{equation}
where $e^{\mathbf{A}t}$ denotes the matrix exponential. 
Proposition \ref{prop:dcm_sol_trajectory} formalizes the observation, and shows that the solution of the dynamics in \eqref{eqn:canonical-dcm} is piecewise identical to a solution of the form \eqref{eq: affine_ODE_sol}.
\begin{prop}
    \label{prop:dcm_sol_trajectory}

Let $\mathbf{v}(t;\, \mathbf{v}(0), (\mathbf{A},\mathbf{c}))$ denote the solution at time $t\in \mathbb{R}_+$ of the affine system \eqref{eq: affine_ODE}.
Also let $\mathbf{z}(t; \mathbf{s}^*,\theta^*) $ denote the solution of the DCM dynamics in \eqref{eqn:canonical-dcm}, and $\theta^* = \{\mathbf{A},\mathbf{B}_1,\hdots,\mathbf{B}_m,\mathbf{C}\}$. For $b \in \{0,\dots,B\}$, with $\Tilde{\mathbf{A}}^{(b)}$ and $\Tilde{\mathbf{c}}^{(b)}$ defined in \eqref{eq:z_dynamics_within_b},
\begin{equation*}
    \mathbf{z}(t; \,\mathbf{s}^*,\theta^*) = \mathbf{v}(t-t^{(b)};\,\mathbf{z}(t^{(b)}),(\Tilde{\mathbf{A}}^{(b)},\Tilde{\mathbf{c}}^{(b)})),\,\, t \in [t^{(b)}, t^{(b+1)}).
\end{equation*}
In particular, the closed-form solution of $\mathbf{z}(t; \mathbf{s}^*,\theta^*) $ is given by
\begin{equation*}
    \mathbf{z}(t; \mathbf{s}^*,\theta^*) = e^{\Tilde{\mathbf{A}}^{(b)}(t-t^{(b)})}\{\mathbf{z}(t^{(b)})+(\Tilde{\mathbf{A}}^{(b)})^{-1}\Tilde{\mathbf{c}}^{(b)}\} - (\Tilde{\mathbf{A}}^{(b)})^{-1}\Tilde{\mathbf{c}}^{(b)},\,\,t \in [t^{(b)}, t^{(b+1)}).
\end{equation*}
\end{prop}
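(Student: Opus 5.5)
The plan is to argue block by block, by induction on $b$, using existence and uniqueness of solutions to linear ODEs with piecewise constant coefficients. We interpret \eqref{eqn:canonical-dcm} in the Carathéodory sense. Then $\mathbf{z}(\cdot;\mathbf{s}^*,\theta^*)$ is the unique absolutely continuous function with $\mathbf{z}(0)=\mathbf{s}^*$ that satisfies the ODE almost everywhere. In particular, $\mathbf{z}$ is continuous at each switching time $t^{(b)}$. This is the only subtle point, because $\mathbf{u}(t)$ jumps there. Existence and uniqueness hold because the right-hand side is affine in $\mathbf{z}$, with coefficients that are bounded and piecewise constant in $t$, hence globally Lipschitz in $\mathbf{z}$ uniformly on compact time intervals.

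\textbf{Step 1 (within-block reduction).} On $[t^{(b)},t^{(b+1)})$ we have $\mathbf{u}(t)=\tilde{\mathbf{u}}^{(b)}$. So $\mathbf{z}$ solves $\dot{\mathbf{z}}=\tilde{\mathbf{A}}^{(b)}\mathbf{z}+\tilde{\mathbf{c}}^{(b)}$ there, with value $\mathbf{z}(t^{(b)})$ at the left endpoint.

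\textbf{Step 2 (time shift).} Define $\mathbf{w}(s)=\mathbf{z}(t^{(b)}+s)$ for $s\in[0,t^{(b+1)}-t^{(b)})$. Since the system is autonomous within the block, $\mathbf{w}$ solves \eqref{eq: affine_ODE} with $(\mathbf{A},\mathbf{c})=(\tilde{\mathbf{A}}^{(b)},\tilde{\mathbf{c}}^{(b)})$ and $\mathbf{w}(0)=\mathbf{z}(t^{(b)})$. By uniqueness for the affine system, $\mathbf{w}(s)=\mathbf{v}(s;\mathbf{z}(t^{(b)}),(\tilde{\mathbf{A}}^{(b)},\tilde{\mathbf{c}}^{(b)}))$, which is the first claim.

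\textbf{Step 3 (closed form).} Verify \eqref{eq: affine_ODE_sol} directly, assuming $\tilde{\mathbf{A}}^{(b)}$ is invertible; the closed-form statement implicitly requires this. Differentiating gives $\dot{\mathbf{v}}=\mathbf{A}e^{\mathbf{A}t}(\mathbf{v}(0)+\mathbf{A}^{-1}\mathbf{c})$, and this equals $\mathbf{A}\mathbf{v}+\mathbf{c}$. At $t=0$ the expression reduces to $\mathbf{v}(0)$. Substituting into Step 2 yields the displayed formula.

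\textbf{Step 4 (induction).} Starting from $\mathbf{z}(t^{(0)})=\mathbf{s}^*$ (taking $t^{(0)}=0$), continuity at $t^{(b+1)}$ supplies the initial value for the next block. The value $\mathbf{z}(t^{(b+1)})$ is the left limit of the block-$b$ formula, so the recursion is well defined for all $b$.

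The main obstacle is justifying the solution concept and the continuity at block boundaries. The other delicate point is noting the invertibility assumption on $\tilde{\mathbf{A}}^{(b)}$ needed for the explicit formula. If $\tilde{\mathbf{A}}^{(b)}$ is singular, one instead writes $\int_0^{s}e^{\tilde{\mathbf{A}}^{(b)}\tau}d\tau\,\tilde{\mathbf{c}}^{(b)}$ in place of the $\mathbf{A}^{-1}$ terms.
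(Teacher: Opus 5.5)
Your proposal is correct and follows essentially the same route as the paper: on each block the input is constant, so the dynamics reduce to the affine system with $(\tilde{\mathbf{A}}^{(b)},\tilde{\mathbf{c}}^{(b)})$, the time-shifted trajectory is identified with $\mathbf{v}$, and the closed form follows by substitution. Your additions make explicit points the paper's proof leaves implicit: the Carath\'eodory solution concept with continuity at switching times, the uniqueness argument, and the caveat that the displayed formula needs $\tilde{\mathbf{A}}^{(b)}$ invertible (otherwise use $\int_0^{s} e^{\tilde{\mathbf{A}}^{(b)}\tau}\,d\tau\,\tilde{\mathbf{c}}^{(b)}$); the induction in Step 4 is not needed for the statement itself, since it is phrased in terms of the true value $\mathbf{z}(t^{(b)})$.
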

The proof of Proposition \ref{prop:dcm_sol_trajectory} is given in Supplementary Material C. To evaluate the likelihood at parameters $\mathbf{s}$, $\theta$, we compute $\mathbf{z}(t_j; \mathbf{s},\theta)$ for $j=1,\dots,n$. By leveraging Proposition \ref{prop:dcm_sol_trajectory}, we apply the closed-form solution to efficiently compute the ODE trajectory, therefore bypassing the need for numerical ODE integration. A schematic illustration of Proposition~\ref{prop:dcm_sol_trajectory} is provided in Figure~\ref{fig:prop1-dynamics}. The figure highlights how, within each block $b \in \{0,1,2,3\}$, the stimulus $\tilde{\mathbf{u}}^{(b)}$ partitions time into blocks over which the neural dynamics reduce to affine ODEs, and how the trajectory $\mathbf{z}(t)$ is obtained by solving each block in closed form and propagating through the boundary states $\mathbf{z}(t^{(b)})$.

\begin{figure}[t!]
    \centering
    \includegraphics[width=\linewidth]{Figures/Fales_Fig2.png}
    \caption{Schematic representation of Proposition~\ref{prop:dcm_sol_trajectory}. Constant stimulus blocks induce affine dynamics, and the trajectory $\mathbf{z}(t)$ is constructed by solving each block and propagating via $\mathbf{z}(t^{(b)})$.}
    \label{fig:prop1-dynamics}
\end{figure}

Proposition \ref{prop:dcm_sol_trajectory} is central to the CDCM framework because it makes MCMC-based estimation both tractable and efficient, and also, places the model in a form amenable to theoretical analysis. Notably, the piecewise analytic solution to the neural ODE enables the identifiability results we develop in Section \ref{sec:identification}, linking the block design structure directly to conditions under which parameters can be uniquely recovered.

\subsection{Individual and Group-Level Parametric Estimation Procedure}\label{sec:par-est}

Through the simplification of the observation model, CDCM involves fewer parameters and a simpler structure, enabling an exact parametric estimation procedure that is made more computationally efficient with the use of the piecewise analytic solution to the neural ODE. We use the No-U-Turn Sampler (NUTS) \cite{hoffman-gelman-2014} for posterior distribution sampling of network parameters in Stan \cite{stan-2024} with the \texttt{cmdstanr} interface.\cite{gabry-etal-2024} Initial values for the sampler are obtained via the Pathfinder algorithm.\cite{zhang-etal-2022} 

Convergence of the MCMC chains is assessed using the multivariate effective sample size (ESS) from Vats et al,\cite{vats-etal-2019} as implemented in the \texttt{mcmcse} package.\cite{mcmcse} The multivariate ESS is defined as,
\begin{equation*}
\widehat{\text{multiESS}}
= N \cdot \left( \frac{|\widehat{\Lambda}|}{|\widehat{\Sigma}|} \right)^{1/P},
\end{equation*}
where $N$ is the number of sampling draws, $P$ is the total number of parameters, $\widehat{\Lambda}$ is the sample covariance of the draws and $\widehat{\Sigma}$ is a consistent estimator of the asymptotic covariance matrix $\Sigma$. We note $P = p_{\theta_z} + 3d$, where $p_{\theta_z}$ is the number of neural parameters, and the remaining parameters are $\beta,\mathbf{s}^* \in \mathbb{R}^d$, and $\sigma_\ell^2$ for $\ell = 1,\dots,d$, resulting in an additional $3d$ parameters not of primary interest. We estimate the long-run covariance matrix $\Sigma$ using the multivariate moment least-squares (MLS) estimator from Song and Berg,\cite{song-berg-2025} implemented in the \texttt{momentLS} package \cite{momentLS} which estimates the auto- and cross-covariance sequence via a moment representation based on the reversibility of the Markov chain, and uses these to construct an asymptotic variance estimate.


Chains are considered converged once the estimated multivariate ESS satisfies the multivariate stopping criterion \cite{vats-etal-2019}:
\begin{equation*}
\widehat{\text{multiESS}} \ge W(P, \alpha_{ESS}, \varepsilon_{ESS}),
\end{equation*}
where $W(P, \alpha_{ESS}, \varepsilon_{ESS})$ depends on the dimension $P$, level $\alpha_{ESS}$, and relative precision $\varepsilon_{ESS}$. In practice, the stopping criterion is most sensitive to the choice of $\varepsilon_{ESS}$, with smaller values requiring substantially larger ESS to ensure small Monte Carlo standard errors. The effects of $P$ and $\alpha_{ESS}$ are comparatively modest, although increasing $P$ or decreasing $\alpha_{ESS}$ also requires longer chains to reach sampler convergence.

CDCM is a method for estimating a single-subject's directed connectivity, but most often, an fMRI study has many subjects, and the broader goal of the study involves examining connectivity patterns across the entire group. If we treat the individual subject-level parameters of interest as an independent ``study'', we can then use a standard multivariate normal–normal Bayesian hierarchical model to estimate the group-level DCM across all subjects.\cite{gelman-etal-2013,smith-spiegelhalter-1995}

For each subject $k = 1,\dots,K$, let $\hat{\theta}_{z,k} \in \mathbb{R}^{p_{\theta_z}}$ denote the estimated posterior means for subject $k$, and $\mathbf{S}_k \in \mathbb{R}^{p_{\theta_z} \times p_{\theta_z}}$ denote the within-subject posterior covariance matrix of $\hat{\theta}_{z,k}$, both estimated for every subject individually using CDCM. At the group-level, we control for differences between subjects by including subject-level covariates such as age, sex, fMRI task performance measures (if applicable), and other relevant sample characteristics. Let $\mathbf{b}_k \in \mathbb{R}^{q_s}$ denote the corresponding subject-level covariates for subject $k$. 

We model the true subject neural parameters $\theta_{z,k}$ hierarchically as
\begin{equation*}
    \theta_{z,k} \sim \text{N}_{p_{\theta_z}}(\boldsymbol{\alpha} + \Theta \mathbf{b}_k, \mathbf{T}),
\quad
\hat{\theta}_{z,k} \mid \theta_{z,k} \sim \text{N}_{p_{\theta_z}}(\theta_{z,k}, \mathbf{S}_k),
\end{equation*}
where $\boldsymbol{\alpha} \in \mathbb{R}^{p_{\theta_z}}$ are group-level intercepts indicating baseline group-level connectivity, $\Theta \ \in \mathbb{R}^{{p_{\theta_z}} \times q_s}$ are regression coefficients corresponding to the covariates, and $\mathbf{T} \in \mathbb{R}^{{p_{\theta_z}} \times {p_{\theta_z}}}$ is the between-subject covariance matrix. Because both levels are normal, the latent $\theta_{z,k}$ can be integrated out, yielding the marginalized likelihood,
\begin{equation*}
    \hat{\theta}_{z,k} \sim \text{N}_{p_{\theta_z}}(\boldsymbol{\alpha} + \Theta \mathbf{b}_k, \mathbf{T} + \mathbf{S}_k),
\end{equation*}
which is used for efficient estimation of group-level effects. Full details on the priors, covariance factorization, and sampling procedure are provided in Supplementary Material D.

\section{Identifiability of Model Parameters}\label{sec:identification}

We now turn to the theoretical contribution of this work, namely the development of conditions for parameter identifiability of CDCM. Loosely, identifiability for an ODE system refers to the ability to uniquely identify parameters from a whole trajectory for a given input and output, where a whole trajectory refers to the solution curve of the ODE considering all possible states.\cite{miao-etal-2011,wang-etal-2024} In many situations, we do not have access to the whole trajectory, we only have access to one, or a trajectory from one initial condition. Furthermore, even a single trajectory implies completely observing the system path in continuous time. This also is unlikely in practice and we only observe a set of discrete observations instead. In fMRI, we have a set of equally-spaced discrete observations sampled from a single trajectory. 

We discuss identifiability of CDCM in terms of error-free observations. Specifically, we consider if model parameters can be uniquely identified from inputs $\mathbf{U} = [\mathbf{u}(t_1)^\top,\hdots,\mathbf{u}(t_n)^\top]^\top \in \mathbb{R}^{n \times m}$ and observations $\mathbf{Y} = [\mathbf{y}(t_1)^\top,\hdots,\mathbf{y}(t_n)^\top]^\top \in \mathbb{R}^{n \times d}$, in the absence of measurement error. If two distinct sets of model parameters produce the same trajectory, consistent estimation of the model parameters is impossible, as these parameters are indistinguishable based on the observed data. We first establish a set of sufficient conditions for the identifiability of model parameters in the ODE system (\ref{eqn:canonical-dcm}) from the latent discrete $\mathbf{z}(t_j)$, as well as conditions for an injective mapping from $\mathbf{z}(t_j)$ to the mean of the observed data, $\mathbb{E}[\mathbf{y}(t_j)] = \mu(t_j) + \beta$. The injective mapping ensures that the parameters uniquely identified from $\mathbf{z}(t_j)$ are uniquely translated to the observable mean trajectory. 

The network parameters are $\theta^* = \{\mathbf{A},\mathbf{B}_1,\hdots,\mathbf{B}_m,\mathbf{C}\}$, along with initial system condition $\mathbf{s}^*$. We use $\theta^*$ here instead of $\theta_z$ because identifiability is defined at the level of the full dynamic system matrices; since $\theta_z$ corresponds to the nonzero elements of these matrices, identification of $\theta^*$ implies identification of $\theta_z$. For future reference, let $\mathbf{z}(t; \mathbf{s},\theta)$ denote the solution of the DCM dynamics in \eqref{eqn:canonical-dcm} at time $t$, given the initial state $\mathbf{s}$ and parameter $\theta$.  When $\mathbf{s}$ and $\theta$ are taken at the true values $\mathbf{s}^*$ and $\theta^*$, we suppress the explicit dependence on $\mathbf{s}^*,\theta^*$ and simply write  $\mathbf{z}(t) = \mathbf{z}(t; \mathbf{s}^*,\theta^*)$. We first define the identifiability that our theoretical result refers to, which is an extension of $(\mathbf{s}^*,\mathbf{A})$-identifiability in Wang et al.\cite{wang-etal-2024}
\begin{definition}[$(\mathbf{s}^*,\theta^*)$-identifiability]\label{def:identifiable}
    For $\mathbf{s}^* \in \mathbb{R}^d$ and $\theta^* =\{\mathbf{A},\mathbf{B}_1,\hdots,\mathbf{B}_m,\mathbf{C}\}$, we say the ODE system in (\ref{eqn:canonical-dcm}) is $(\mathbf{s}^*,\theta^*)$-identifiable from $\mathbf{z}(t_1),\hdots,\mathbf{z}(t_n)$ if for all $\mathbf{s}^\prime \in \mathbb{R}^d$ and $\theta^\prime =\{\mathbf{A}^\prime,\mathbf{B}_1^\prime,\hdots,\mathbf{B}_m^\prime,\mathbf{C}^\prime\}$ with $(\mathbf{s}^\prime,\theta^\prime) \neq (\mathbf{s}^*,\theta^*)$, there exists $j$ for $j \in \{1,\hdots,n\}$, such that $\mathbf{z}(t_j;\mathbf{s}^\prime,\theta^\prime) \neq \mathbf{z}(t_j;\mathbf{s}^*,\theta^*)$.
\end{definition}
Similar to Wang et al,\cite{wang-etal-2024} the definition of identifiability specifically refers to discrete error-free observations sampled from a single trajectory $\mathbf{z}(t;\mathbf{s}^*,\theta^*)$, meaning $\mathbf{s}^*$ and $\theta^*$ are fixed and $\mathbf{s}^\prime$ and $\theta^\prime$ are arbitrary (within the parameter space). That is, Definition \ref{def:identifiable} describes a property of a single system rather than a collective set of systems.

Now we discuss the $(\mathbf{s}^*,\theta^*)$-identifiability of CDCM. We make two distinct types of assumptions: 1) assumptions on the experimental design, and 2) conditions on the parameters and latent discrete observations. These assumptions are sufficient but not necessary for $(\mathbf{s}^*,\theta^*)$-identifiability of \eqref{eqn:canonical-dcm}.

The following assumptions concern the \textit{experimental design}:  
\begin{enumerate}[label=(A\arabic*), ref=(A\arabic*)]
    \setlength{\itemsep}{0em}
    \item\label{DCM-A1} The fMRI experiment utilizes a block design, and each block contains at least $d+1$ equally spaced observations following its initial within-block state $\mathbf{z}(t^{(b)})$. 
    \item\label{DCM-A2} The experimental design contains at least $m+1$ blocks
$b_1,\dots,b_{m+1}$ satisfying Assumption \ref{DCM-A1}, where $\mathcal{B}^* := \{b_1,\dots,b_{m+1}\} \subseteq \{0,\dots,B\}$. Moreover, the corresponding
stimulus combinations are unconfounded in the sense that
$$
\Tilde{\mathbf U}^* =
\begin{bmatrix}
1 & \widetilde{\mathbf u}^{(b_1)\top}\\
1 & \widetilde{\mathbf u}^{(b_2)\top}\\
\vdots & \vdots\\
1 & \widetilde{\mathbf u}^{(b_{m+1})\top}
\end{bmatrix}
\in \mathbb R^{(m+1)\times(m+1)}
$$
is invertible.
\end{enumerate}

Essentially, Assumptions \ref{DCM-A1}--\ref{DCM-A2} require that the fMRI experiment contains sufficiently long blocks to enable within-block identification of the latent dynamics, and that the experimental design includes sufficiently varied combinations of stimuli across blocks. In particular, Assumption \ref{DCM-A2} ensures that at least $m+1$ blocks are available whose stimulus configurations are linearly independent (after augmentation with an intercept), which prevents confounding between the baseline connectivity $\mathbf A$ and the modulatory effects $\mathbf B_1,\dots,\mathbf B_m$.

Regarding Assumption \ref{DCM-A1}, an fMRI block design is the natural choice, as the subject is repeatedly experiencing the same stimulus for a block of observations at a time, and each block is typically several TRs long, so \ref{DCM-A1} is easily satisfied. For DCM, it is recommended to keep the number of ROIs in the hypothesis fewer than 10, \cite{smith-etal-2013} meaning at worst, the block should be at least 10 TRs long, or about 20--30 s to satisfy Assumption \ref{DCM-A1}. In Assumption \ref{DCM-A2}, we require sufficiently varied combinations of experimental stimuli to prevent confounding effects. As a practical guideline, the total number of observations should be large enough to include at least $m+1$ blocks satisfying Assumption \ref{DCM-A1} (e.g., on the order of $(m+1)(d+1)$ observations). For example, even with a moderately complex model with $d=9$ and $m=3$, only a modest number of blocks and observations are needed, and both are easily achievable in typical fMRI experiments.

The remaining assumptions concern the \textit{parameters and latent discrete observations}:
\begin{enumerate}[label=(A\arabic*), ref=(A\arabic*)]
    \setcounter{enumi}{2}
    \setlength{\itemsep}{0em}
    \item\label{DCM-A3} For each $b \in \mathcal{B}^*$, $\Tilde{\mathbf{A}}^{(b)} = (\mathbf{A} + \sum_{i=1}^m \Tilde{u}_i^{(b)}\mathbf{B}_i) \in \mathbb{R}^{d \times d}$ has $d$ distinct, real eigenvalues.
    \item\label{DCM-A4} For each $b \in \mathcal{B}^*$, the data matrix $\mathbf{X}_0^{(b)} = [\mathbf{x}_0^{(b)},\dots,\mathbf{x}_{d}^{(b)}] \in \mathbb{R}^{(d+1) \times (d+1)}$ is invertible, where $\mathbf{x}_j^{(b)} := [\mathbf{z}(t^{(b)} + t_j),1]^\top=[\mathbf{v}^{(b)}_j,1]^\top$ for $j \in \{0,\dots,d\}$ are the first $d+1$ discrete observations from block $b$.
\end{enumerate}

These assumptions are made to ensure the identifiability of a linear (affine) ODE system within each block. The identifiability of such a system is equivalently characterized by whether the system's trajectory is constrained to a proper (affine) subspace space of $\mathbb{R}^d$.\cite{stanhope-etal-2014, duan-etal-2020} Conditions \ref{DCM-A3}--\ref{DCM-A4} guarantee that no such confinement occurs for the continuous trajectory, and discrete observations are made in a non-degenerative way, so that the the trajectory evaluated at discrete time points continues to span the full space. Similar assumptions are made in Duan et al, Stanhope et al, and Wang et al.\cite{stanhope-etal-2014, duan-etal-2020,wang-etal-2024}

\begin{theorem}
   \label{thm:canonical-dcm} 
Assume \ref{DCM-A1}--\ref{DCM-A4}. Then the ODE system in (\ref{eqn:canonical-dcm}) is $(\mathbf{s}^*,\theta^*)$-identifiable.
\end{theorem}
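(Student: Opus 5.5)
Suppose $(\mathbf{s}',\theta')$ produces $\mathbf{z}(t_j;\mathbf{s}',\theta')=\mathbf{z}(t_j)$ for all $j$. We will show $\theta'=\theta^*$ and $\mathbf{s}'=\mathbf{s}^*$ in three steps: identify each per-block affine system, then disentangle $\mathbf{A},\mathbf{B}_i,\mathbf{C}$ across blocks via \ref{DCM-A2}, and finally recover the initial state. By Proposition \ref{prop:dcm_sol_trajectory}, within block $b$ both trajectories are affine-ODE solutions started at the common observed value $\mathbf{z}(t^{(b)})$ (here we take $t^{(b)}$ to be a sampling time, as in \ref{DCM-A1}). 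Sampling at spacing $r$ gives the exact discrete recursion
\[
\mathbf{v}^{(b)}_{j+1} = \mathbf{G}^{(b)}\mathbf{v}^{(b)}_j + \mathbf{g}^{(b)},\qquad \mathbf{G}^{(b)} = e^{\tilde{\mathbf{A}}^{(b)}r},\quad \mathbf{g}^{(b)} = (e^{\tilde{\mathbf{A}}^{(b)}r}-\mathbf{I})(\tilde{\mathbf{A}}^{(b)})^{-1}\tilde{\mathbf{c}}^{(b)},
\]
and the analogous recursion holds for the primed parameters. (The expression for $\mathbf{g}^{(b)}$ should be read as $\int_0^r e^{\tilde{\mathbf{A}}^{(b)}s}ds\,\tilde{\mathbf{c}}^{(b)}$ if $\tilde{\mathbf{A}}^{(b)}$ is singular.)

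\textbf{Step 1 (block identification).} For $b\in\mathcal{B}^*$, \ref{DCM-A1} supplies the observations $\mathbf{v}^{(b)}_0,\dots,\mathbf{v}^{(b)}_{d+1}$. Stack the recursion as
\[
[\mathbf{G}^{(b)},\mathbf{g}^{(b)}]\,\mathbf{X}_0^{(b)} = [\mathbf{v}^{(b)}_1,\dots,\mathbf{v}^{(b)}_{d+1}].
\]
By \ref{DCM-A4} the matrix $\mathbf{X}_0^{(b)}$ is invertible, so $(\mathbf{G}^{(b)},\mathbf{g}^{(b)})$ is uniquely determined by the data, and the primed system must give the same pair. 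The next task is to pass from $\mathbf{G}^{(b)}=e^{\tilde{\mathbf{A}}^{(b)}r}$ back to $\tilde{\mathbf{A}}^{(b)}$. By \ref{DCM-A3}, $\tilde{\mathbf{A}}^{(b)}$ is diagonalizable with distinct real eigenvalues $\lambda_k$. Then $\mathbf{G}^{(b)}$ has distinct positive eigenvalues $e^{\lambda_k r}$, since $x\mapsto e^{xr}$ is injective on $\mathbb{R}$, and the same eigenvectors. The matrix $\tilde{\mathbf{A}}'^{(b)}$ commutes with $e^{\tilde{\mathbf{A}}'^{(b)}r}=\mathbf{G}^{(b)}$. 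Because $\mathbf{G}^{(b)}$ has simple spectrum, $\tilde{\mathbf{A}}'^{(b)}$ must be diagonal in the same eigenbasis, with eigenvalues $\mu_k$ satisfying $e^{\mu_k r}=e^{\lambda_k r}$.

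\textbf{Main obstacle.} The equation $e^{\mu_k r}=e^{\lambda_k r}$ alone allows $\mu_k=\lambda_k+2\pi i\ell/r$, which is exactly the aliasing problem of the matrix logarithm. This step needs the most care. My first attempt is to require real $\theta'$ and use that complex eigenvalues of a real matrix come in conjugate pairs. That fails as stated, since a conjugate pair $\lambda\pm 2\pi i/r$ could exponentiate to a repeated real eigenvalue. Here, however, $\mathbf{G}^{(b)}$ has \emph{distinct} eigenvalues, so a conjugate pair $\mu,\bar\mu$ would map to equal values $e^{\mu r}=e^{\bar\mu r}$, contradicting simplicity. Hence every $\mu_k$ is real, so $\mu_k=\lambda_k$ and $\tilde{\mathbf{A}}'^{(b)}=\tilde{\mathbf{A}}^{(b)}$, matching the argument of Wang et al.\cite{wang-etal-2024} for the principal real logarithm. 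With $\tilde{\mathbf{A}}^{(b)}$ fixed, $\int_0^r e^{\tilde{\mathbf{A}}^{(b)}s}ds$ has eigenvalues $(e^{\lambda_k r}-1)/\lambda_k$, or $r$ when $\lambda_k=0$, all nonzero, so this matrix is invertible and $\tilde{\mathbf{c}}^{(b)}$ is recovered from $\mathbf{g}^{(b)}$.

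\textbf{Step 2 (separating baseline, modulatory and driving effects).} For each $b\in\mathcal{B}^*$ we have $\tilde{\mathbf{A}}^{(b)}=\mathbf{A}+\sum_i\tilde u_i^{(b)}\mathbf{B}_i$. Entrywise this is a linear system whose coefficient matrix is $\tilde{\mathbf{U}}^*$ acting on the unknowns $(\mathbf{A},\mathbf{B}_1,\dots,\mathbf{B}_m)$. Since $\tilde{\mathbf{U}}^*$ is invertible by \ref{DCM-A2}, it yields $\mathbf{A}'=\mathbf{A}$ and $\mathbf{B}_i'=\mathbf{B}_i$. For $\mathbf{C}$, write $\tilde{\mathbf{c}}^{(b)}=\mathbf{C}\tilde{\mathbf{u}}^{(b)}=[\mathbf{0},\mathbf{C}]\,[1,\tilde{\mathbf{u}}^{(b)\top}]^\top$. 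Stacking over $\mathcal{B}^*$ gives $[\mathbf{0},\mathbf{C}](\tilde{\mathbf{U}}^*)^\top$, and invertibility again gives $\mathbf{C}'=\mathbf{C}$.

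\textbf{Step 3 (initial state).} With $\theta'=\theta^*$, the flow map from time $0$ to $t_1$ is an affine map $\mathbf{s}\mapsto \Phi\mathbf{s}+\boldsymbol{\phi}$. Its linear part $\Phi$ is a product of matrix exponentials, which is invertible. Therefore $\mathbf{z}(t_1;\mathbf{s}',\theta^*)=\mathbf{z}(t_1;\mathbf{s}^*,\theta^*)$ forces $\mathbf{s}'=\mathbf{s}^*$. Together with Steps 1 and 2 this gives $(\mathbf{s}',\theta')=(\mathbf{s}^*,\theta^*)$, contradicting the assumption that the two pairs differ, and establishes $(\mathbf{s}^*,\theta^*)$-identifiability.
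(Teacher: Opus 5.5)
Your proposal is correct and follows essentially the same route as the paper. You identify each block's $(\tilde{\mathbf A}^{(b)},\tilde{\mathbf c}^{(b)})$ from $[\mathbf G^{(b)},\mathbf g^{(b)}]=[\mathbf v_1^{(b)},\dots,\mathbf v_{d+1}^{(b)}](\mathbf X_0^{(b)})^{-1}$ plus uniqueness of the real logarithm, invert $\tilde{\mathbf U}^*$ (the paper's $\tilde{\mathbf U}^*\otimes\mathbf I_d$) to get $\mathbf A,\mathbf B_i,\mathbf C$, and invert the flow map up to $t_1=r$ to get $\mathbf s^*$. The main difference is that you prove the real-logarithm uniqueness and the invertibility of $\int_0^r e^{\tilde{\mathbf A}^{(b)}s}\,ds$ directly (commutation with the simple-spectrum $\mathbf G^{(b)}$ plus the conjugate-pair argument, which also covers a singular $\tilde{\mathbf A}^{(b)}$), whereas the paper cites Duan et al.'s Theorem 4.5 and Corollary 4.7.
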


The detailed proofs for the theoretical results are given in Supplementary Material C. We leverage Proposition \ref{prop:dcm_sol_trajectory}, together with results from Duan et al.\cite{duan-etal-2020}, to prove Theorem \ref{thm:canonical-dcm}. In summary, the proof proceeds in three steps. First, for each block $b \in \mathcal B^*$, we show that the block-specific parameters $\Tilde{\mathbf A}^{(b)}$ and $\Tilde{\mathbf c}^{(b)}$ are uniquely determined by a finite number of within-block observations. Second, using Assumption \ref{DCM-A2}, we select $m+1$ blocks whose stimulus configurations are linearly independent, and express the parameters $\mathbf A$, $\mathbf B_1,\dots,\mathbf B_m$, and $\mathbf C$ as solutions to a linear system involving the block-specific parameters. Finally, the initial condition $\mathbf{s}^*$ is recovered using the closed-form affine solution over $[0,r)$ under the initial stimulus setting. For all empirical analyses, we assume that no external stimuli are present during the interval $[0,r)$ prior to the first observation, but the initial system condition is identifiable regardless of if stimuli are present in $[0,r)$. Together, these steps show that all parameters $(\mathbf{s}^*,\theta^*)$ are uniquely determined by the discrete trajectory, establishing identifiability according to Definition \ref{def:identifiable}.

Theorem \ref{thm:canonical-dcm} provides a set of sufficient conditions for the $(\mathbf{s}^*,\theta^*)$-identifiability of CDCM parameters relating to latent neural trajectory $\mathbf{z}(t;\mathbf{s}^*,\theta^*)$. We also present the following theorem for the injective mapping from discrete $\mathbf{z}(t_j)$ to the observable mean $\mathbb{E}[\mathbf{y}(t_j)] =\mu(t_j) + \beta$, which gives the important translation from $(\mathbf{s}^*,\theta^*)$-identifiability in $\mathbf{z}(t_j)$ to identifiability from the observed BOLD signal $\mathbf{y}(t_j)$ for $j \in \{1,\hdots,n\}$. 

\begin{theorem}
    \label{thm:one-to-one}
If $h(r) \neq 0$, that is, the canonical HRF evaluated at the TR $r$ is nonzero, then the mapping
$$
\{\mathbf{z}(t_j); j=0,\dots,n-1\} \;\mapsto\; \{\mathbb{E}[\mathbf{y}(t_j)]; j \in \{1,\hdots,n\}\}
$$
with $\mathbb{E}[\mathbf{y}(t_j)] = \mu(t_j) + \beta$ is injective. In particular, the system defined by \eqref{eqn:canonical-dcm} and \eqref{eqn:canonical-dcm-obs} is $(\mathbf{s}^*,\theta^*)$-identifiable from $\{\mathbb{E}[\mathbf{y}(t_1)],\dots,\mathbb{E}[\mathbf{y}(t_n)]\}$ in the sense that for $\mathbf{s}^\prime \in \mathbb{R}^d$ and $\theta^\prime = \{\mathbf{A}^\prime,\mathbf{B}_1^\prime,\hdots,\mathbf{B}_m^\prime,\mathbf{C}^\prime\}$ with $(\mathbf{s}^\prime,\theta^\prime) \neq (\mathbf{s}^*,\theta^*)$, there exists $j \in \{1,\hdots,n\}$ such that
$$
\mathbb{E}[\mathbf{y}(t_j;\mathbf{s}^\prime,\theta^\prime)] \neq \mathbb{E}[\mathbf{y}(t_j;\mathbf{s}^*,\theta^*)].
$$
As $\beta = \mathbb{E}[\mathbf{y}(t_j)] - \mu(t_j)$ for any $j$, the intercept $\beta$ is uniquely determined once $\mu(t_j)$ is identified.
\end{theorem}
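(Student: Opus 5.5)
The plan is to write the convolution \eqref{eqn:discrete-convolve} as a linear map in matrix form and show that this map is invertible. Since $t_0=0$ and the canonical HRF \eqref{eq:HRF} with $\alpha_1=6,\alpha_2=16$ satisfies $h(0)=0$ (both gamma terms carry positive powers of $t$), the $i=0$ term drops out. Hence for $j=1,\dots,n$,
\begin{equation*}
\mu(t_j)=\sum_{i=1}^{j} h(t_i)\,\mathbf{z}(t_{j-i}).
\end{equation*}
Stack $\mathbf{Z}=[\mathbf{z}(t_0)^\top,\dots,\mathbf{z}(t_{n-1})^\top]^\top$ and $\mathbf{M}=[\mu(t_1)^\top,\dots,\mu(t_n)^\top]^\top$, both in $\mathbb{R}^{n\times d}$. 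Then $\mathbf{M}=\mathbf{H}\mathbf{Z}$, where $\mathbf{H}\in\mathbb{R}^{n\times n}$ is the lower-triangular Toeplitz matrix with $(j,k)$ entry $h(t_{j-k+1})$ for $k\le j$ and $0$ otherwise. The map acts column by column, i.e.\ separately for each ROI. Its diagonal is constant and equal to $h(t_1)=h(r)$, so $\det\mathbf{H}=h(r)^n\neq 0$ under the hypothesis. Concretely, forward substitution
\begin{equation*}
\mathbf{z}(t_{j-1})=h(r)^{-1}\Bigl(\mu(t_j)-\sum_{i=2}^{j}h(t_i)\mathbf{z}(t_{j-i})\Bigr)
\end{equation*}
recovers $\mathbf{z}(t_0),\dots,\mathbf{z}(t_{n-1})$ recursively. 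Adding the fixed offset $\beta$ is a translation, so $\mathbf{Z}\mapsto \mathbf{H}\mathbf{Z}+\mathbf{1}\beta^\top$ is injective for any fixed $\beta$.

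Next I would compose this injectivity with Theorem \ref{thm:canonical-dcm}. Take $(\mathbf{s}',\theta')\neq(\mathbf{s}^*,\theta^*)$. Theorem \ref{thm:canonical-dcm} gives a sampled time at which the two latent trajectories differ, and injectivity of $\mathbf{H}$ then transfers this to a difference in some $\mathbb{E}[\mathbf{y}(t_j)]$. Finally, once $\mu(t_j)$ is known (equivalently $\mathbf{Z}$ together with the known $h$), the intercept is read off as $\beta=\mathbb{E}[\mathbf{y}(t_j)]-\mu(t_j)$ for any $j$.

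There is an index mismatch to handle: Theorem \ref{thm:canonical-dcm} is phrased with $\mathbf{z}(t_1),\dots,\mathbf{z}(t_n)$, while $\mathbf{H}$ recovers $\mathbf{z}(t_0),\dots,\mathbf{z}(t_{n-1})$. I would resolve this by observing that the proof of Theorem \ref{thm:canonical-dcm} only uses the $d+1$ within-block samples of the blocks in $\mathcal{B}^*$ and the solution on $[0,r)$. Including $t_0$, where $\mathbf{z}(t_0)=\mathbf{s}^*$ directly, only adds information. So it suffices that the blocks in $\mathcal{B}^*$ lie within $\{t_0,\dots,t_{n-1}\}$, i.e.\ the last sample is not needed. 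I would state this explicitly, or alternatively note that Assumptions \ref{DCM-A1}--\ref{DCM-A4} are to be read on the recovered samples.

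The step I expect to be the main obstacle is $\beta$. If $\beta$ is unknown, a shift $\beta\mapsto\beta+\delta$ can be absorbed by $\mathbf{Z}\mapsto\mathbf{Z}-\mathbf{H}^{-1}\mathbf{1}\delta^\top$. So strictly speaking, injectivity holds only for $\mathbf{Z}\mapsto\mathbb{E}[\mathbf{Y}]$ at a given $\beta$, which is how the statement treats the mean ($\beta$ recovered ``once $\mu$ is identified''). To close this gap I would first argue that the perturbed sequence $\mathbf{Z}-\mathbf{H}^{-1}\mathbf{1}\delta^\top$ must itself be a discrete trajectory of \eqref{eqn:canonical-dcm} for some $(\mathbf{s}',\theta')$. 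I would then use the within-block affine structure from Proposition \ref{prop:dcm_sol_trajectory}: under \ref{DCM-A4} the block parameters are determined from the samples. I would try to show that the non-constant, HRF-shaped correction $\mathbf{H}^{-1}\mathbf{1}$ is incompatible with a single affine flow within each block unless $\delta=0$. If this fails in general, I would present the result as injectivity for fixed $\beta$, matching the statement's final sentence.
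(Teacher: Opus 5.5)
Your proof matches the paper's: $h(0)=0$ removes the $i=0$ term, the convolution becomes a lower-triangular Toeplitz system with constant diagonal $h(r)$, so $\det\mathbf{H}=h(r)^n\neq 0$, and $\beta$ enters as a translation by a fixed vector. Your two extra remarks are valid refinements that the paper glosses over. First, the recovered samples are $\mathbf{z}(t_0),\dots,\mathbf{z}(t_{n-1})$, not the $t_1,\dots,t_n$ of Definition~\ref{def:identifiable}. Second, injectivity holds only with $\beta$ held fixed, and that is exactly how the paper's proof and the statement's identifiability claim treat it. So your fallback is what the theorem actually asserts, and the speculative joint-$\beta$ argument is not needed.
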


We prove Theorem \ref{thm:one-to-one} by using the definition of discrete time convolution \eqref{eqn:discrete-convolve} and explicitly writing out the terms of $\mu(t_j)$. The discrete convolution can be represented in matrix form, and the matrix form of the terms involving the canonical HRF reduces to a lower triangular matrix and thus, is invertible. Matrix invertibility provides a one-to-one mapping from discrete dynamics $\{\mathbf{z}(t_j);j=0,\dots,n-1\}$ to $\{\mu(t_j); j \in \{1,\hdots,n\}\}$. Since $\mathbb{E}[\mathbf{y}(t_j)] = \mu(t_j) + \beta$ with $\beta$ constant across time, the addition of $\beta$ preserves injectivity and is itself identifiable from the mean trajectory. Therefore, the mapping from $\{\mathbf{z}(t_j);j=0,\dots,n-1\}$ to $\{\mathbb{E}[\mathbf{y}(t_j)]; j \in \{1,\hdots,n\}\}$ remains injective. For a standard TR (2--3 s), the nonzero condition for the HRF evaluated at the TR for Theorem \ref{thm:one-to-one} is satisfied. Therefore, if Assumptions \ref{DCM-A1}--\ref{DCM-A4} are met, along with a standard choice of TR, the parameters of CDCM are $(\mathbf{s}^*,\theta^*)$-identifiable from $\{\mathbb{E}[\mathbf{y}(t_j;\mathbf{s}^*,\theta^*)];j=1,\dots,n\}$.

\section{Simulation Studies}\label{sec:simulation}

We conduct two simulation studies to evaluate the practical utility of CDCM, with the aims of assessing computational efficiency as model complexity increases, and evaluating robustness to model misspecification. The first simulation in Section~\ref{sec:comp-sim} examines parameter recovery and computational efficiency as the number of ROIs increases, with particular emphasis on quantifying the computational gains from the piecewise analytic solution relative to numerical integration. The second simulation in Section~\ref{sec:balloon-sim} investigates performance under three variations of model misspecification, including when the hemodynamic data-generating process uses the Balloon model versus the canonical HRF used in CDCM. 

\subsection{Computational Efficiency}\label{sec:comp-sim}

In this subsection, we empirically investigate how CDCM scales with the number of ROIs, $d$, and quantify the computational gains of utilizing the piecewise analytic solution of the neural ODE (Proposition \ref{prop:dcm_sol_trajectory}) by comparing sampling time where the ODE is solved analytically versus numerically. Among the available solvers,\cite{stan-2024} the Cash-Karp Runge-Kutta (CKRK) method is the fastest for our setting and is used as the baseline for comparison.

\begin{figure}[t!]
    \centering
    \includegraphics[scale=0.45]{Figures/Fales_Fig3.png}
    \caption{\protect\input{Figure-Captions/sim-model}}
    \label{fig:sim-model}
\end{figure}

We consider two experimental settings: a simple and a complex model, both under the same block design. The experiment alternates between two stimuli (U1, U2) and rest in 10-TR blocks, with a TR of 2 s and a total of 150 scans. The design satisfies Assumptions \ref{DCM-A1}--\ref{DCM-A2}, with the remaining assumptions verified after specifying the model hypotheses. The corresponding state-space representations and true parameter values (aside from ROI self-loops) are shown in Figure \ref{fig:sim-model}; the complex model consists of three copies of the simple model chained together. The values in Figure \ref{fig:sim-model} define $\mathbf{A}$, $\mathbf{B}$, and $\mathbf{C}$, along with the reparameterized $\text{diag}(\mathbf{A}) = (-0.55) \cdot \mathbf{1}_d$. The initial condition is set to $\mathbf{s}^* = 0.1 \cdot \mathbf{1}_d$, reflecting nonzero baseline activity. Given these specifications, we verify that Assumptions \ref{DCM-A3}--\ref{DCM-A4} are satisfied.

Neural signals are simulated from the models, convolved with the canonical HRF, and Gaussian noise is added to match the estimated signal-to-noise ratio (SNR) of the attention to visual motion dataset (more discussion on this dataset in Section \ref{sec:motion}).\cite{buchel-friston-1997} To estimate the SNR from the attention to motion data, we fit a cubic smoothing spline to each ROI time series and treat the fitted values as the underlying signal. The regional SNR is estimated as the ratio of the standard deviation of the fitted signal to that of the residuals. Averaging SNR across regions, we obtain an SNR of $1.68$, which we use to set the noise variance $\sigma_\ell^2$ in each simulated region $\ell=1,\dots,d$ as
\begin{equation*}
    \sigma_\ell^2 =
   \frac{s^2_{\text{Truth},\ell}}{\text{SNR}^2} =
    \frac{s^2_{\text{Truth},\ell}}{1.68^2}, 
\end{equation*}
where $s^2_{\text{Truth},\ell}$ is the sample variance of $\mathbf{y}_{\text{Truth},\ell}  = (\mu_\ell(t_j))_{j=1}^n \in \mathbb{R}^n$, since we set $\beta = \mathbf{0}_d$ in the simulations.

We use CDCM to estimate network parameters for 5,000 warm-up iterations and 3,000 sampling iterations for each setting. For the simulation, the stopping criterion is the fixed number of iterations, not the multivariate ESS. The results for each scenario are averaged across 50 repeated simulations shown in Table \ref{tab:comp-sim}, with computation time in minutes. 

In assessing computational efficiency of the piecewise analytic solution compared to numerical integration, we find that the average computation time for the simple model using the piecewise analytic solution is 1.62 minutes, approximately twenty-two times faster than the CKRK solver, taking 35.46 minutes on average. For the complex model, the analytic solution finishes sampling in 4.4 hours on average, while remaining approximately four times faster than numerical integration despite the increased model complexity. In contrast, one of the simulations for the CKRK solver hit the 48-hour cluster wall-time at 80 percent completion, and the results in Table \ref{tab:comp-sim} are averaged across the remaining 49 simulations. Overall, computation time increases with the number of ROIs, $d$, for both approaches, although the piecewise analytic solution consistently provides substantial computational gains relative to numerical integration. The main computational burden comes from evaluating the likelihood, which involves solving the neural ODE at each iteration. Solving the ODE requires computing a matrix exponential within each time block, which scales at $O(d^3)$.\cite{moler-vanloan-2003}

In practice, DCM is typically applied to small, hypothesis-driven networks, often involving only a few regions,\cite{zeidman-etal-2019} and effective connectivity analyses are rarely considered for networks beyond eight to ten regions due to computational limitations.\cite{smith-etal-2013} Moreover, overly complex models are discouraged, as they can be difficult to estimate and interpret reliably.\cite{stephan-etal-2010} Therefore, the upper limit of $d=6$ in the current simulation is aligned with common practice, and while CDCM can in principle be applied to somewhat larger networks, the associated computational cost increases substantially and such models may be difficult to justify scientifically.

\begin{table}[t!]
\centering
\caption{Mean computation time (in minutes) across 50 simulations comparing the piecewise analytic solution to the ODE to the CKRK numeric solver for the simple and complex models (5,000 warm-up and 3,000 sampling iterations). Also included is the number of total parameters and neural parameters $(\theta_z)$ being estimated for each case, neural parameter coverage (the proportion of neural parameters captured within the HPD intervals on average), and average HPD interval length for a nominal coverage of 0.95. Note that for the complex model, one simulation replicate for the CKRK solver hit the wall-time of 48 hours prior to finishing sampling, and results shown are averaged across the 49 completed simulations.}
\resizebox{\textwidth}{!}{%
\begin{tabular}{cccccccc}
\hline
\textbf{\begin{tabular}[c]{@{}c@{}}ODE \\ Solution/Solver\end{tabular}} & \textbf{\begin{tabular}[c]{@{}c@{}}Total\\ Param.\end{tabular}} & \textbf{\begin{tabular}[c]{@{}c@{}}Neural\\ Param.\end{tabular}} & \textbf{\begin{tabular}[c]{@{}c@{}}Warm-up\\ (minutes)\end{tabular}} & \textbf{\begin{tabular}[c]{@{}c@{}}Sampling\\ (minutes)\end{tabular}} & \textbf{\begin{tabular}[c]{@{}c@{}}Total\\ (minutes)\end{tabular}} & \textbf{\begin{tabular}[c]{@{}c@{}}Neural\\ Parameter\\ Coverage\end{tabular}} & \textbf{\begin{tabular}[c]{@{}c@{}}Neural\\ Avg. Interval\\ Length\end{tabular}} \\ \hline
\multicolumn{8}{c}{\textbf{Simple Model}} \\ \hline
Piecewise Analytic & 12 & 6 & 1.10 & 0.52 & 1.62 & 0.967 & 0.353 \\
CKRK Numeric & 12 & 6 & 24.37 & 11.09 & 35.46 & 0.973 & 0.352 \\ \hline
\multicolumn{8}{c}{\textbf{Complex Model}} \\ \hline
Piecewise Analytic & 40 & 22 & 190.92 & 70.92 & 261.85 & 0.977 & 0.558 \\
CKRK Numeric & 40 & 22 & 819.21 & 339.69 & 1158.90 & 0.977 & 0.558 \\ \hline
\end{tabular}}
\label{tab:comp-sim}
\end{table}

Table \ref{tab:comp-sim} also shows the average neural parameter coverage, or the proportion of true neural parameters captured within the 95\% highest posterior density (HPD) intervals on average, and the average HPD interval length across parameters and chains. These results validate CDCM's ability to recover true parameters from noisy observation under the correctly specified model. 
For both models, coverage is close to the nominal 0.95, with wider intervals for the complex model. 
We note that the HPD intervals in Table~\ref{tab:comp-sim} differ slightly between the analytic and numeric solutions. Although the two solutions to the neural ODE agree up to a small numerical tolerance for any given parameter, these small differences in the likelihood, through repeated evaluations, lead to slightly different proposal trajectories and thus minor differences in the resulting posterior distributions.

\subsection{Robustness to Misspecification}\label{sec:balloon-sim}

In Section \ref{sec:comp-sim}, we show the performance of CDCM under a correctly specified model. In this subsection, we further examine parametric estimation performance under various types of model misspecification. 
In particular, we assess CDCM's robustness to misspecification in comparison to the one-state, bilinear DCM of Friston et al,\cite{friston-etal-2003} implemented in SPM. The simulation varies three types of misspecification: 1) observation model, 2) diagonal parameterization, and 3) hemodynamic delay.

Of the three types of misspecification we consider here, observation model misspecification is the most important, as the true hemodynamic data-generating process in fMRI is unknown. To study the impact of observation model misspecification, we generate data from two mechanisms: the nonlinear Balloon model used in SPM and the simplified observation model used in CDCM. This setting allows us to assess how CDCM compares to a more complex representation of the hemodynamic model. For the Balloon model, SPM internally represents the hemodynamic parameters on the log scale to enforce positivity, and we use the values $\theta_y = \{\log(\kappa) = -0.2, \log(\tau_h) = [-0.2, -0.3]^\top, \log(\varepsilon_h) = 0.15\}$. These correspond to the natural-scale parameters $\kappa$, the decay rate of the vasoactive signal, $\tau_h \in \mathbb{R}^d$ ($d=2$, as we simulate from a two-ROI system), the mean transit time (s), representing the average time blood spends in the venous compartment, and $\varepsilon_h$, the ratio of intra- to extravascular signal contributions. For more information on the Balloon model, we refer the reader to Appendix 5 of Zeidman et al.\cite{zeidman-etal-2019} Just as in CDCM, SPM also estimates the intercept vector $\beta \in \mathbb{R}^d$, and for the simulation, we set $\beta = \mathbf{0}_d$ for both observation models. For CDCM's observation model, as mentioned in Section \ref{sec:mdl}, all hemodynamic parameters in the canonical HRF are fixed values and there are no additional parameters to specify. 

Beyond observation model misspecification, we consider two other forms of misspecification, the first of which being diagonal parameterization. As mentioned in Section \ref{sec:mdl} (and elaborated on in Supplementary Material B), we adopt a different diagonal parameterization strategy in CDCM than SPM, where CDCM's parameterization concerns only the diagonal entries of $\mathbf{A}$ with a tight prior on the diagonal of $\mathbf{B}$, but SPM's parameterizes the diagonal of $\mathbf{A} + \sum_{i=1}^m u_i(t)\mathbf{B}_i$ directly. Both approaches are similar in that they are designed to regulate system dynamics from diverging, but they do imply a difference in neural ODE model structure when modulation of self-loops is part of the state-space hypothesis. Therefore, it is worth investigating the impact of these two parameterization choices on parametric estimation performance. We consider two neural parameter settings reflecting differences between CDCM and SPM. In the first, only $\text{diag}(\mathbf{A})$ is parameterized, meaning the neural models are equivalent; in the second setting, we additionally include a modulatory effect in $\text{diag}(\mathbf{B})$, implying misspecification under the opposing estimation procedure. Both settings use the same experimental design as in Section \ref{sec:comp-sim}. For the baseline case, the (untransformed) neural parameters are
\begin{equation*}
\mathbf{A} =
\begin{bmatrix}
-0.1 & 0.3 \\
0.4 & 0.15
\end{bmatrix}, \quad
\mathbf{B}_1 =
\begin{bmatrix}
0 & 0 \\
0 & 0
\end{bmatrix}, \quad
\mathbf{B}_2 =
\begin{bmatrix}
0 & -0.2 \\
0 & 0
\end{bmatrix}, \quad
\mathbf{C} =
\begin{bmatrix}
0.7 & 0 \\
0 & 0
\end{bmatrix}.
\end{equation*}
In the second setting, we additionally set $(\mathbf{B}_2)_{22} = 0.05$.  

The third type of misspecification we consider is the type of differential equation used in the model. SPM's DCM formulation uses delay differential equations to account for the time required for signals to travel between states, and defaults to fixed delays of $\text{TR}/2$ seconds.\cite{friston-etal-2003} CDCM does not use delay differential equations, so the two different differential equation approaches allow us to assess the impact of misspecified hemodynamic delays on parameter estimation performance. All together, the types of misspecification yield six combinations across observation model, diagonal parameterization, and delay inclusion. 

For each combination, we simulate neural signals, generate BOLD responses under the specified observation model, and add Gaussian noise targeting an SNR of 1.68, as in the first simulation. Due to differences in observation models, signals generated under the Balloon model have larger amplitudes than those from CDCM, and adding noise at the target variance can produce simulated BOLD series with a range exceeding four units. Since SPM internally rescales inputs exceeding this threshold,\cite{zeidman-etal-2019} we avoid the additional rescaling by scaling the additive noise when necessary. Specifically, after drawing Gaussian noise at the target variance, we check whether the resulting simulated BOLD series exceeds a range of four units. If so, we multiply the sampled noise by the largest constant that keeps the simulated series within the acceptable range. Chain length for CDCM is determined by multivariate ESS convergence for $\alpha_{ESS} = \varepsilon_{ESS} = 0.05$, and for every setting, we conduct 50 repeated simulations.

\begin{table}[t!]
\centering
\caption{Simulation results between CDCM and SPM for each type of misspecification, where the first six rows indicate model misspecification, and the next six rows are the correctly specified models. Results are averaged across 50 simulations. We generate data using both SPM's and CDCM's data-generating model (DGM), incorporating diagonal parameters in $\mathbf{A}$ only, or both in $\mathbf{A}$ and $\mathbf{B}$, and for SPM's DGM, conduct the simulation both assuming no stimulus delays between states, and 1 second delays. Results shown include the parameter coverage, or the proportion of true parameters captured within the 95\% HPD interval, the average HPD interval length across parameters, as well the MSE of the predicted BOLD signal vs. the true simulated signal for each condition. The predicted curve is generated using the posterior mean estimates.}
\resizebox{\textwidth}{!}{%
\begin{tabular}{ccccccc}
\hline
\textbf{\begin{tabular}[c]{@{}c@{}}Estimation\\ Procedure\end{tabular}} & \textbf{DGM} & \textbf{Diag.} & \multicolumn{1}{c|}{\textbf{Delays}} & \textbf{\begin{tabular}[c]{@{}c@{}}Parameter\\ Coverage\end{tabular}} & \textbf{\begin{tabular}[c]{@{}c@{}}Avg. Interval \\ Length\end{tabular}} & \textbf{\begin{tabular}[c]{@{}c@{}}MSE of Pred vs.\\ True Curve\end{tabular}} \\ \hline
\multicolumn{7}{c}{\textbf{Misspecified Models}} \\ \hline
CDCM & SPM & \textbf{A} & \multicolumn{1}{c|}{None} & 0.750 & 0.236 & 0.0360 \\
CDCM & SPM & \textbf{A} \& \textbf{B} & \multicolumn{1}{c|}{None} & 0.643 & 0.242 & 0.0358 \\
CDCM & SPM & \textbf{A} & \multicolumn{1}{c|}{1 s} & 0.750 & 0.236 & 0.0360 \\
CDCM & SPM & \textbf{A} \& \textbf{B} & \multicolumn{1}{c|}{1 s} & 0.643 & 0.242 & 0.0358 \\
SPM & CDCM & \textbf{A} & \multicolumn{1}{c|}{None} & 0.490 & 0.237 & 0.0021 \\
SPM & CDCM & \textbf{A} \& \textbf{B} & \multicolumn{1}{c|}{None} & 0.357 & 0.212 & 0.0021 \\ \hline
\multicolumn{7}{c}{\textbf{Correctly Specified Models}} \\ \hline
SPM & SPM & \textbf{A} & \multicolumn{1}{c|}{None} & 0.703 & 0.200 & 0.0019 \\
SPM & SPM & \textbf{A} \& \textbf{B} & \multicolumn{1}{c|}{None} & 0.720 & 0.182 & 0.0019 \\
SPM & SPM & \textbf{A} & \multicolumn{1}{c|}{1 s} & 0.703 & 0.200 & 0.0019 \\
SPM & SPM & \textbf{A} \& \textbf{B} & \multicolumn{1}{c|}{1 s} & 0.720 & 0.182 & 0.0019 \\
CDCM & CDCM & \textbf{A} & \multicolumn{1}{c|}{None} & 0.960 & 0.311 & 0.0061 \\
CDCM & CDCM & \textbf{A} \& \textbf{B} & \multicolumn{1}{c|}{None} & 0.966 & 0.313 & 0.0058 \\ \hline
\end{tabular}}
\label{tab:Balloon-sim}
\end{table}

We present the results, averaged across the 50 simulations, in Table \ref{tab:Balloon-sim}. In terms of parameter coverage, or the proportion of true neural parameters captured within the 95\% HPD intervals, there is a drop in coverage for both CDCM and SPM under observation model misspecification. However, CDCM's performance declines less than that of SPM's, and even under correct model specification, SPM's intervals undercover compared to the nominal coverage of 0.95. The average interval length for SPM is shorter than CDCM in every case, indicating that posterior uncertainty may be underestimated. When CDCM is estimating a correctly specified model, average interval length is wider and overcovering parameters; the average length is larger than those of the misspecified observation model, but a likely explanation is the SNR difference between the two sets of simulated data. Also, we do not observe an impact of signal delay specification on the simulation results.

We also use the estimated posterior means to construct the predicted BOLD signal curves and compute the mean-squared error (MSE) between the predicted and true signals in Table \ref{tab:Balloon-sim}. In all cases, the MSE is small, indicating the models are performing well, even when parameter coverage is lower for the misspecified variations. We observe that SPM has a smaller MSE on average across all combinations compared to CDCM. This is not surprising, given that SPM has four free parameters in the hemodynamic portion of the observation model compared to zero in CDCM.

\begin{figure}[t!]
    \centering
    \includegraphics[width=0.8\linewidth]{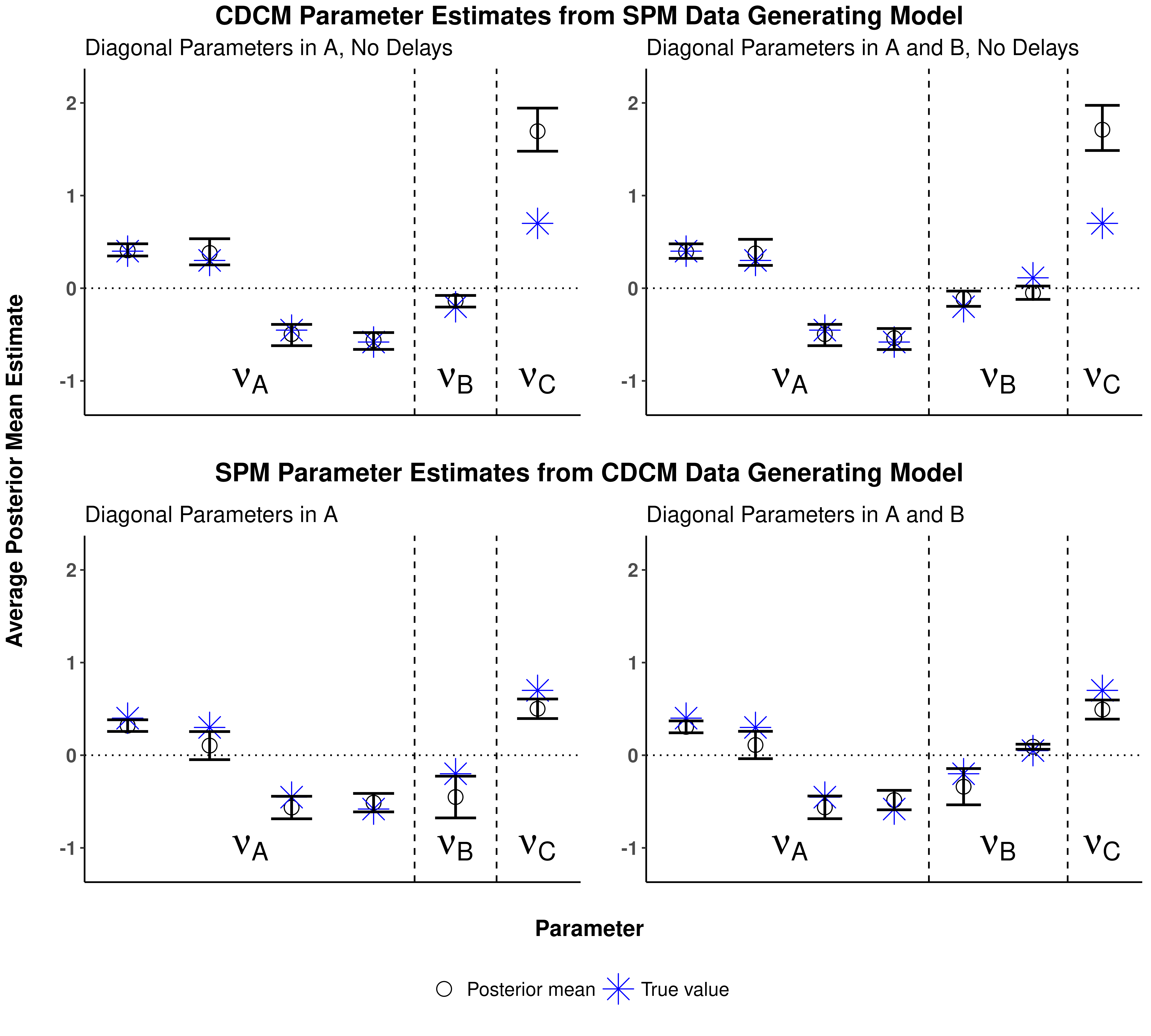}
    \caption{\protect\input{Figure-Captions/misspec-result}}
    \label{fig:misspec-result}
\end{figure}

To better understand if there are patterns across the parameters for which coverage drops, we perform parameter-wise comparisons for the misspecified models. In Figure \ref{fig:misspec-result}, we plot the true values, posterior mean estimates, and 95\% HPD intervals for the neural parameters. Across conditions for the correctly specified model results, as well as the 1 s delay variations, we observe qualitatively similar results, therefore, we defer the other comparison figures to Supplementary Material F. 

Since CDCM and SPM both reparameterize the diagonal elements of $\mathbf{A}$, and SPM additionally reparameterizes $\mathbf{B}$, they are transformed back to their original scale for direct comparison. For CDCM, we transform posterior draws $\hat\nu_{A,\ell,s}^D$ for the $\ell$th diagonal of $\mathbf{A}$ via
$$
a_{\ell,s} = -0.5 \exp(\hat\nu_{A,\ell,s}^D),
$$
where $\hat\nu_{A,\ell,s}^D$ denotes posterior draw $s \in \{1,\dots,N\}$ on the unconstrained scale, and $N$ is the number of sampling draws.
For SPM, variational Laplace returns a Gaussian approximation to the posterior of the reparameterized diagonal elements of $\mathbf{A}$ and $\mathbf{B}$.
Since the transformation from the reparameterized diagonal elements to the original diagonal elements of $\mathbf{A}$ and $\mathbf{B}_i$ is nonlinear, the posterior mean and variance on the original scale cannot be obtained by simply transforming the Gaussian posterior mean and variance. Therefore, we additionally generate $N = 20{,}000$ draws from the estimated posterior distributions. We denote the corresponding unconstrained draws by $\bar{a}_{\ell,s}$ and $\bar{b}_{\ell,s}^{(i)}$ for the diagonal elements of $\mathbf{A}$ and $\mathbf{B}_i$, $i = 1,\dots,m$. The transformed diagonal elements are computed as
$$
a_{\ell,s} = -0.5 \exp(\bar{a}_{\ell,s}), \qquad
b_{\ell,s}^{(i)} = -0.5\, a_{\ell,s} \left[\exp\big(\bar{b}_{\ell,s}^{(i)}\big) - 1\right].
$$
After transformation, posterior means and variances are summarized in the original scale of the diagonal elements of $\mathbf{A}$ and $\mathbf{B}$. Additional details on SPM's parameterization are provided in Supplementary Material B.

The top panels show CDCM’s performance when estimating data generated under SPM’s observation model. Across all conditions (Table \ref{tab:Balloon-sim}), CDCM accurately recovers parameters in $\mathbf{A}$, while recovery for $\mathbf{B}$ is more variable: the off-diagonal effect is covered, on average, whereas the diagonal effect (right panels) is not recovered, with intervals including zero. For $\mathbf{C}$, the posterior mean consistently has the correct sign but substantially overestimates the true magnitude. In contrast, the bottom panels show that SPM recovers the correct sign for all parameters, including the diagonal effect in $\mathbf{B}$, but exhibits poor interval coverage, with estimates often narrowly missing the true values. As with CDCM, the largest discrepancies occur in $\mathbf{C}$; however, SPM systematically underestimates its magnitude. The difference is likely driven by the underlying hemodynamic models, as Balloon-based signals exhibit larger amplitudes than those generated under CDCM, implying that scaling primarily impacts estimates in $\mathbf{C}$.

\section{Real Data Applications}\label{sec:application}

We consider two fMRI data applications to evaluate CDCM in relation to existing DCM implementations. Our goals are threefold: 1) compare the agreements and disagreements between CDCM and standard DCM implementations, 2) determine which connections are credibly different from zero based on posterior uncertainty across methods, and 3) examine the replicability of estimated connectivity patterns across subjects and modeling choices. The first application analyzes data from a single subject, providing a controlled setting to directly compare parameter estimates and uncertainty quantification between methods. The second revisits the motivating large-scale application introduced in Section \ref{sec:motivation}, allowing assessment of the stability and replicability of connectivity patterns across subjects, sessions, and experimental conditions. Jointly, the connectivity analyses of these data connect method comparison, uncertainty quantification, and replicability.

\subsection{Attention to Visual Motion Task}\label{sec:motion}

The attention to visual motion study from B\"{u}chel and Friston,\cite{buchel-friston-1997} with data provided by the Wellcome Centre for Human Neuroimaging, provides a single subject's fMRI data for a task involving three experimental stimuli. The hypothesis from the study is shown in Figure \ref{fig:motion-model}. There are three ROIs: V1, V5, and the superior parietal cortex (SPC). These ROIs respond to visual stimulation, with V1 sensitive to general input, V5 selective for motion, and SPC engaged during attention to motion. The Photic condition corresponds to viewing static dots, representing the most general level of visual stimulation. The Motion condition extends this by introducing moving dots, and the Attention condition further builds on Motion by requiring focused engagement. Thus, the conditions are hierarchically related, progressing from general visual stimulation (Photic), to motion processing (Motion), to attentional engagement (Attention). Assumptions \ref{DCM-A1}--\ref{DCM-A2} are satisfied for the design, but given these are real data, we cannot verify \ref{DCM-A3}--\ref{DCM-A4} directly. The TR is 3.22 seconds and there are 360 scans in the session. 

\begin{figure}[tb!]
    \centering
    \includegraphics[scale=0.65]{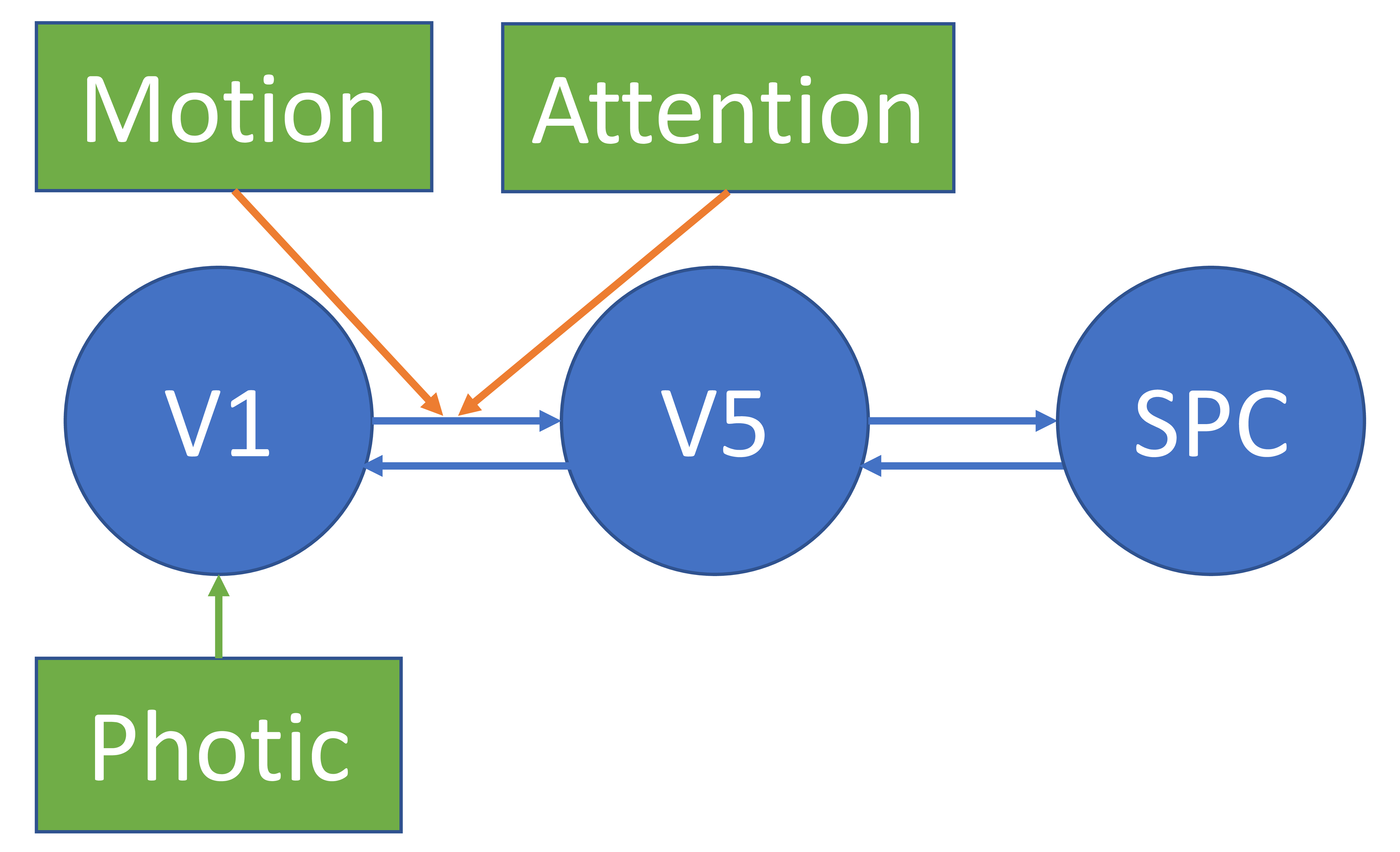}
    \caption{\protect\input{Figure-Captions/motion-model}}
    \label{fig:motion-model}
\end{figure}

Time series extraction is performed in SPM following the DCM tutorial procedures (SPM12 Manual, Chapter 36).\cite{spm12manual} For the single-subject application, voxels are selected based on peak task activation within 8 mm spheres around each ROI, defined as $p < 0.001$ (uncorrected) from a voxelwise task-activation general linear model (GLM). The first principal component of these voxels is used as the representative time course, and the resulting signals are scaled to ensure $\text{range}(\mathbf{Y}) \leq 4$ for numerical stability.

We compare CDCM against two existing DCM implementations, SPM and VBA,\cite{daunizeau-etal-2014} on the attention to visual motion data. Although the SPM12 software\cite{spm12manual} includes multiple DCM variations, when we refer to SPM, we are referring to the traditional, one-state, bilinear DCM from Friston et al.\cite{friston-etal-2003} VBA is a Matlab toolbox for variational Bayesian inversion of nonlinear state-space models, including DCM-style formulations. Although it uses the same bilinear neural model structure as SPM, it differs in prior specification, ODE numerical integration, and variational optimization scheme, and defaults to using a linearized variant of the Balloon observation model. 
For SPM and VBA under the default settings, both optimizations converge in approximately one minute. The CDCM results are from 8,000 MCMC iterations (after 5,000 warm-up iterations), taking 49.72 minutes total, and achieving a multivariate ESS of 9,313.23, corresponding to a precision of $\varepsilon_{ESS} = 0.048$ for 19 total parameters and $\alpha_{ESS} = 0.05$.

\begin{figure}[t!]
    \centering
    \includegraphics[width =.8\linewidth]{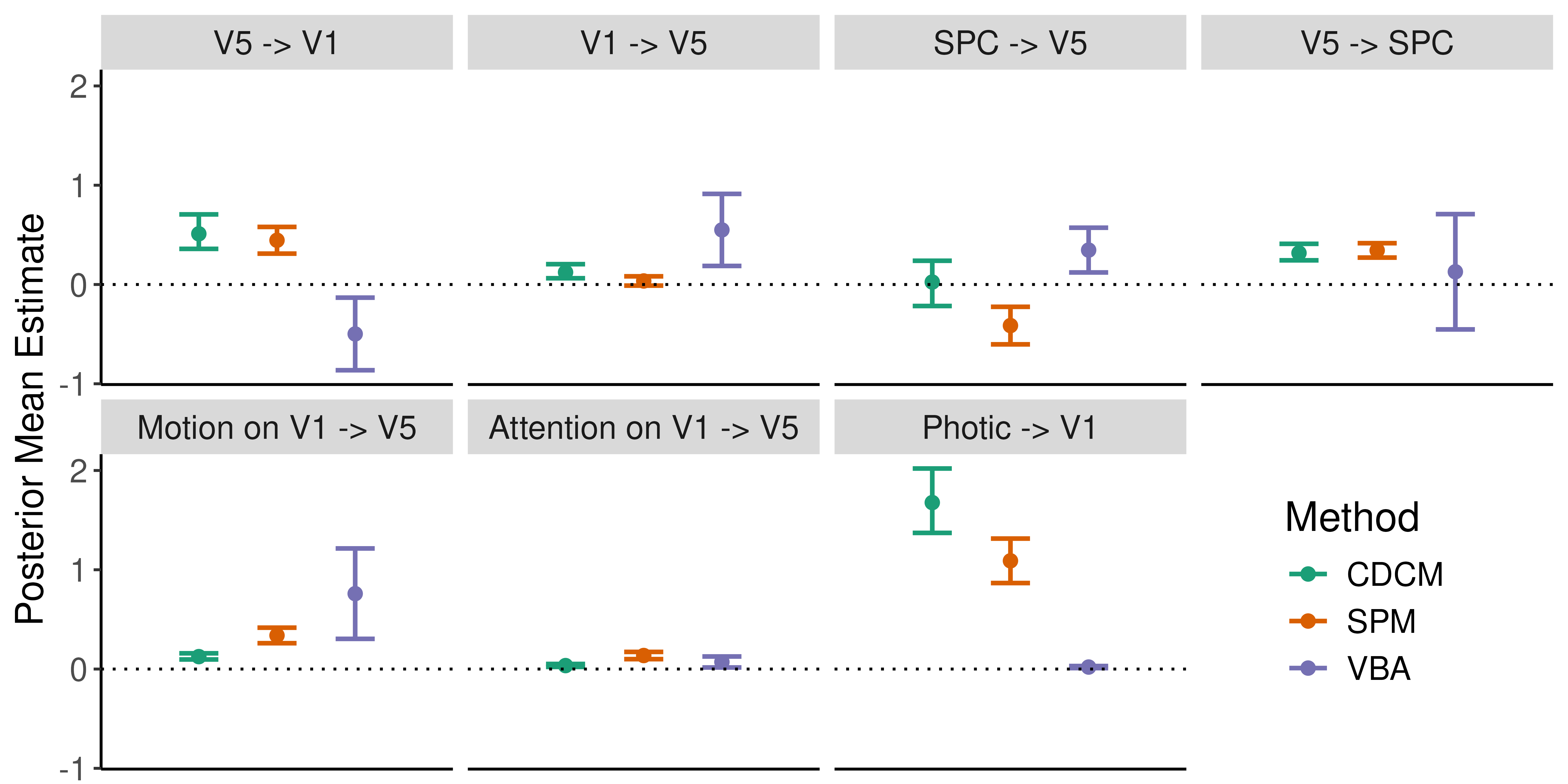}
    \caption{\protect\input{Figure-Captions/motion-results-comp}}
    \label{fig:motion-results-comp}
\end{figure}

Figure \ref{fig:motion-results-comp} shows posterior means and 95\% HPD intervals for seven of the ten neural parameters; the three diagonal elements of $\mathbf{A}$ are omitted as they are not of primary interest. For most parameters, the results from CDCM and SPM are in agreement. The only difference between CDCM and SPM is for the connection from SPC to V5, where SPM's interval is entirely negative, whereas CDCM's interval covers zero. Also, both intervals are strongly positive for the Photic stimulus to V1, and given the scaling differences between CDCM and SPM regarding the $\mathbf{C}$ matrix (as described in Section \ref{sec:balloon-sim}), it is unsurprising that CDCM estimates a larger value than SPM. 

In contrast, while VBA produces similar estimates for several connections, it differs more noticeably for others. The most prominent difference is for the direct photic input to V1, as it is estimated to be near zero under VBA, in contrast to the positive effects estimated by both CDCM and SPM. The discrepancy is unlikely to reflect an absence of stimulus-driven response, but rather a difference in how the VBA implementation allocates stimulus-related variance across model components. Although the input specification $\mathbf{U}$ is standard and identifiable, the direct input effect may be distributed differently across the neural and observation components under alternative parameterizations. Using the nonlinear hemodynamic option in VBA produces a similar estimate, suggesting that this behavior is not attributed to the choice of hemodynamic model. Finally, assuming the posterior means represent the ground truth, simulation using CDCM estimates indicates that Assumptions \ref{DCM-A3}--\ref{DCM-A4} are satisfied.

\begin{figure}[t!]
    \centering
    \includegraphics[width=0.8\linewidth]{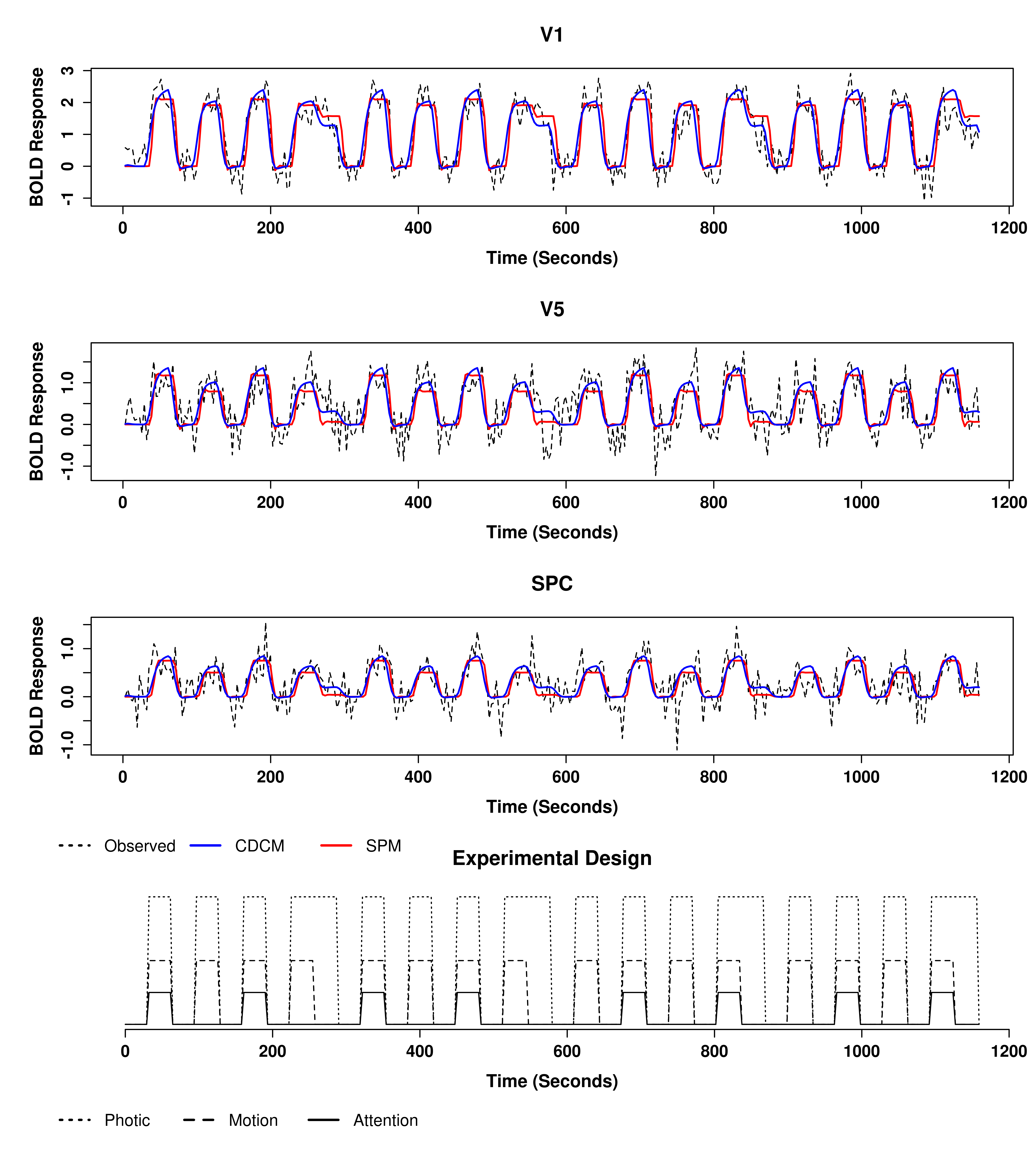}
    \caption{\protect\input{Figure-Captions/est-vs-obs}}
    \label{fig:est-vs-obs}
\end{figure}

Next, we compare the predicted versus the observed BOLD signal in each of the ROIs between CDCM and SPM only, as SPM is the more commonly used DCM implementation. The predicted curves for each method are generated using the posterior means shown in Figure \ref{fig:motion-results-comp}. The results are shown in Figure \ref{fig:est-vs-obs}, where the dotted curve shows the observed data, the blue line is the prediction from CDCM, and the red line is from SPM; both models fit the data well. 

\begin{table}[t!]
\centering
\caption{Mean squared error (MSE), block bootstrap estimated standard error (SE), and 95\% block bootstrap confidence intervals (CI) between the predicted and the observed BOLD response for CDCM and SPM in each ROI. The block bootstrap accounts for the temporal dependence within fMRI data (10 TR block length, 10,000 bootstrap replicates). For each ROI, the intervals between the two methods overlap, indicating that neither method is superior in estimating the data.} 
\begin{tabular}{l|cc|cc}
\hline
\multicolumn{1}{c|}{\textbf{ROI}} &
\multicolumn{1}{c}{\textbf{CDCM MSE (SE)}} &
\multicolumn{1}{c|}{\textbf{CDCM 95\% CI}} &
\multicolumn{1}{c}{\textbf{SPM MSE (SE)}} &
\multicolumn{1}{c}{\textbf{SPM 95\% CI}} \\ \hline
V1  & 0.232 (0.019) & (0.198, 0.272) & 0.199 (0.027) & (0.148, 0.253) \\
V5  & 0.177 (0.015) & (0.151, 0.210) & 0.178 (0.014) & (0.152, 0.207) \\
SPC & 0.092 (0.010) & (0.076, 0.113) & 0.090 (0.009) & (0.075, 0.108) \\ \hline
\end{tabular}
\label{tab:motion-mse}
\end{table}

In terms of MSE between the data and each of the predicted curves in each ROI, Table \ref{tab:motion-mse} provides the point estimates, as well as block bootstrap estimated standard errors and 95\% confidence intervals to provide a sense of error variation from one point on the curve to the next. For each ROI, the intervals overlap, indicating that neither method is superior in fitting the data. Overall, these results support CDCM as a reliable alternative to existing DCM approaches, achieving comparable estimates and predictive fit while enabling clearer assessment of uncertainty and stability.

\subsection{Human Connectome Project (HCP) Social Cognition Task}\label{sec:hcp}

For the second application, we revisit the motivating example from Section \ref{sec:motivation}, with the goals of assessing the stability and replicability of effective connectivity patterns across methods, subjects, sessions, and hemispheres for the social cognition task using the updated HCP-YA 2025 data. Using previous Q2 HCP data, Hillebrandt et al\cite{hillebrandt-etal-2014} find that the full model with all possible connections has the highest evidence (Figure \ref{fig:hillebrandt-dcm}), but only report posterior mean estimates for the $\mathbf{B}$ and $\mathbf{C}$ matrix connections and do not directly report posterior SD for any estimates; we use this full model as our state-space hypothesis and confirm, using the posterior uncertainty associated with the connectivity estimates, which connections are reliably supported, and assess if results replicate across sessions and hemispheres.

\subsubsection{Data and Analysis Setup}

In the HCP task implementation, participants view 20 s video clips of geometric shapes that either move randomly or interact in an animated manner. Following each clip, participants indicate whether the shapes exhibit an animated (mental) interaction, no interaction, or are unsure. Each session consists of alternating blocks of ``mental interaction'' and ``random interaction'' conditions, interleaved with 15 s fixation blocks.\cite{barch-etal-2013} Each participant has data from two task sessions (TR = 0.72 s, 2 mm isotropic resolution, 274 scans per session) one with right-to-left (RL) phase encoding, and the other with left-to-right (LR). We model all connections given in Figure \ref{fig:hillebrandt-dcm} and define the input matrix $\mathbf{U}$ using a hierarchical stimulus structure, similar to the structure we use in Section \ref{sec:motion}. The All Motion stimulus indicates when any video is active during the scan, and the Animate stimulus indicates the subset of animate motion videos.  Under the design, Assumptions \ref{DCM-A1}--\ref{DCM-A2} are satisfied.

Instead of selecting voxels for each ROI based on subject-specific task-activation thresholds, we extract the regional time courses using the group-level activation MNI coordinates as given by Hillebrandt et al,\cite{hillebrandt-etal-2014} and define ROI cubes of voxels around these coordinates. For V5, we use MNI $(44,-64,4)$ mm (right middle temporal gyrus) and $(-44,-74,4)$ mm (left middle occipital gyrus) for the right and left hemispheres, respectively. For the pSTS, we use MNI $(54,-50,16)$ mm (right middle temporal gyrus) and $(-56,-52,10)$ mm (left middle temporal gyrus). We define $14 \times 14 \times 14 \text{ mm} = 2744 \text{ mm}^3$ cubes, equivalent to 343 voxels. The regional time courses are extracted by taking the first principal component of all voxels in the ROI, flipping the sign as needed such that the average voxel loading is positive. The range of the representative BOLD time courses exceeds four units, so the data for each subject are scaled accordingly, just as in Section \ref{sec:motion}.

Effective connectivity is estimated using CDCM and SPM (one-state bilinear DCM) on all session $\times$ hemisphere combinations for every subject. The chains in CDCM use a warm-up of 5,000 iterations with a target NUTS acceptance probability of $0.9$, and multivariate ESS for  $\alpha_{ESS}  = \varepsilon_{ESS} = 0.05$ determines sampler convergence. All 400 chains are run until convergence, a maximum of 100,000 iterations, or a wall-time of 48 hours, whichever comes first. Of the 400 chains, 11 encounter either divergence or maximum tree-depth warnings and require additional sampler tuning, consisting of adjustments to the target acceptance probability and, in some cases, a longer warm-up period. After tuning, 393 chains achieve the multivariate ESS required for a precision of $\varepsilon_{ESS} = 0.05$, and the remaining seven achieve a slightly lower but still acceptable precision between 0.05 and 0.1. Thus, all chains are retained for analysis, and a table with the additional tuning requirements is provided in Supplementary Material F. As for SPM, 15 of the 400 VL optimizations do not converge under the default maximum number of iterations of 128, but all converge within 512 iterations.

Four separate group-level models, one for each session $\times$ hemisphere combination, are estimated for both methods using five parallel chains, each with 1,000 warm-up and 5,000 sampling iterations. We include participant age category (participants' exact age is restricted), gender, and the Penn Progressive Matrices (PMAT24\_A\_CR) score as covariates.\cite{vanessen-etal-2013} The PMAT score is the number of correct responses on the 24-item Penn Progressive Matrices behavioral assessment \cite{bilker-etal-2012} abbreviated by Van Essen et al\cite{vanessen-etal-2013} for the HCP. The PMAT assessment provides a measure of subjects' fluid cognitive intelligence. Age and gender are effect-coded and the PMAT score is standardized, making the interpretation of the group-level intercept coefficient estimates $\hat{\boldsymbol{\alpha}} \in \mathbb{R}^{p_{\theta_z}}$ in terms of the ``average subject''.

\subsubsection{Results}

\begin{figure}[t!]
    \centering
    \includegraphics[width=0.9\linewidth]{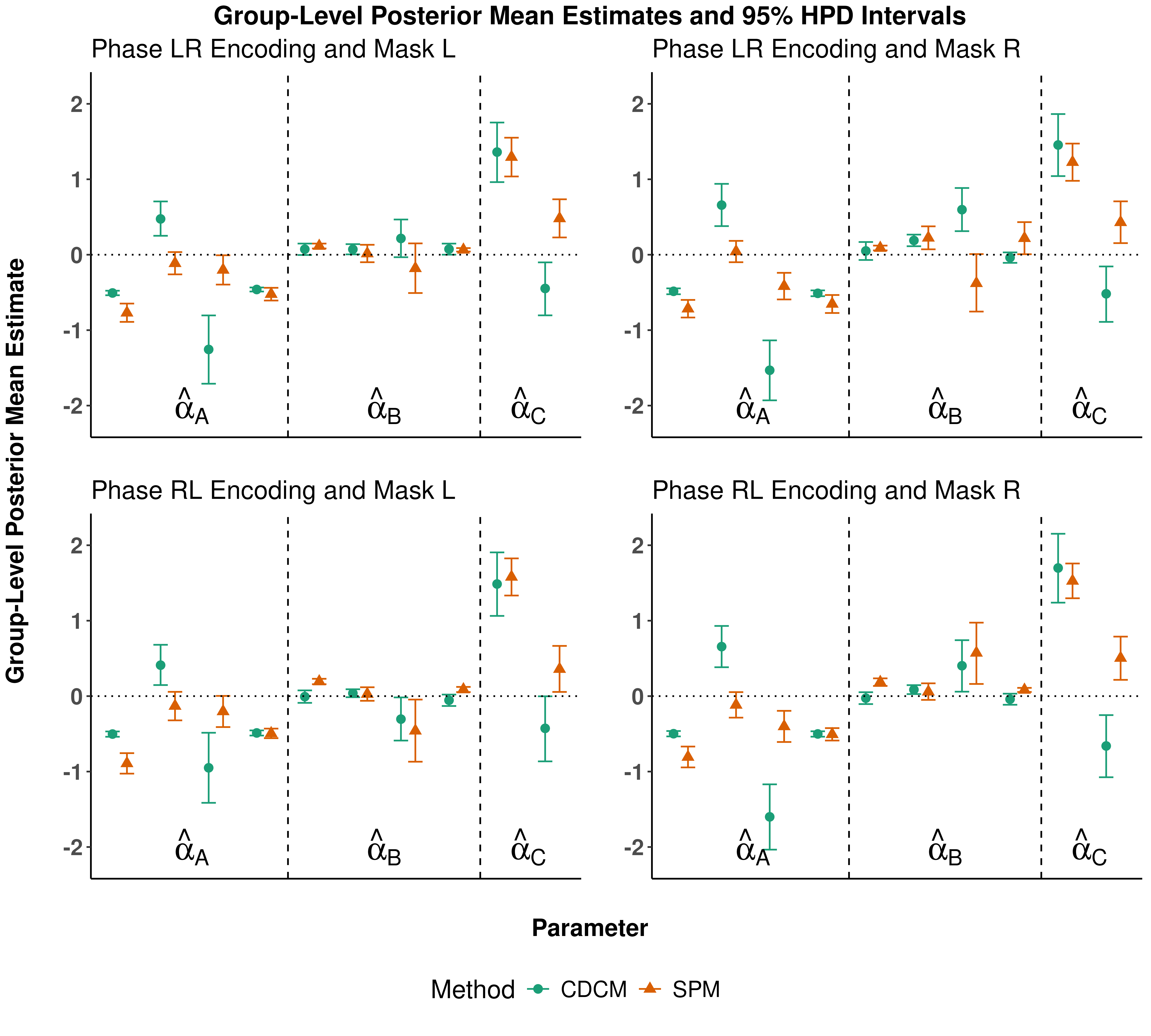}
    \caption{\protect\input{Figure-Captions/group-alpha-comparison}}
    \label{fig:group-alpha-comparison}
\end{figure}

We present group-level posterior means and 95\% HPD intervals for both methods in Figure \ref{fig:group-alpha-comparison}, with panels corresponding to session (LR, RL phase encoding) and hemisphere (mask L, R). Recall, the first goal of the empirical study is to compare agreements and disagreements between CDCM and standard DCM implementations, where greater agreement indicates more replicable connectivity patterns. Overall, CDCM and SPM’s one-state bilinear DCM show substantial agreement, particularly in connectivity direction, with only modest differences in magnitude (e.g., the third position of $\hat{\boldsymbol{\alpha}}_A$).

Of the ten neural parameters, three show disagreement or are inconclusive due to uncertainty. In the second position of $\hat{\boldsymbol{\alpha}}_A$, CDCM estimates a positive effect, whereas SPM’s interval includes zero. In the third position of $\hat{\boldsymbol{\alpha}}_B$, both methods are uncertain for LR phase encoding with mask L; for mask R, CDCM’s interval is entirely positive while SPM’s is mostly negative, though still inconclusive. Most notably, the methods disagree on the direction of the driving input from All Motion to pSTS (second position of $\hat{\boldsymbol{\alpha}}_C$), with CDCM estimating a negative effect and SPM a positive one. This is the only group-level sign disagreement for which both intervals exclude zero.

Given the posterior uncertainty in all other potential disagreements, the results regarding the first goal demonstrate one key disagreement between the two one-state methods. We compare CDCM and SPM further by investigating the observed discrepancies through a targeted simulation study using the group-level posterior means from each method as the ground truth. In summary, we find that both methods recover the true parameter signs reliably and produce estimates close to the truth across all simulation settings, although both approaches exhibit overconfident intervals with poor coverage. Performance is driven more by whether the assumed parameter values align with the estimation method than by whether the data-generating model is correctly specified. For the key parameter of interest, the driving input from All Motion to pSTS, CDCM more consistently recovers the correct sign, while SPM shows lower reliability. Overall, the results suggest that both methods are robust to misspecification, but CDCM may provide more stable directional inference. Full simulation details are provided in Supplementary Material E.

Before assessing consistency between the one-state models presented here and the two-state DCM in Hillebrandt et al,\cite{hillebrandt-etal-2014} we first examine results for goals two and three, as these establish whether averaging effective connectivity across experimental conditions, following their approach, is justified and thus whether such pooled estimates provide a reliable basis for comparison across methods. Given the overall agreement between the two one-state methods, we focus on these goals using CDCM. As noted above, the second objective is to determine which connections are credibly different from zero based on posterior uncertainty. 

Regarding this goal, we find that all parameters in the $\mathbf{A}$ and $\mathbf{C}$ matrices are credibly nonzero, whereas those in $\mathbf{B}$ are less certain. The diagonal elements of $\mathbf{B}$ (first and fourth positions of $\hat{\boldsymbol{\alpha}}_B$; Figure \ref{fig:group-alpha-comparison}) represent modulatory self-loops and are strongly regularized toward zero by the prior (Section \ref{sec:mdl}), resulting in posterior intervals that often include zero and thus inconclusive directionality. The third position of $\hat{\boldsymbol{\alpha}}_B$, corresponding to animate motion modulation of the backward connection from pSTS to V5, shows notable variability across conditions, with some intervals excluding zero and others not. The intervals in Figure \ref{fig:group-alpha-comparison}, however, are not adjusted for multiple testing and therefore likely underestimate posterior uncertainty.

Excluding the self-loops and the pSTS-to-V5 modulation, we find the parameters in the selected model from Hillebrandt et al\cite{hillebrandt-etal-2014} are credibly nonzero, with uncertainty in the self-loop estimates largely driven by prior regularization. For the third goal of examining the replicability of estimated connectivity patterns across subjects and modeling choices, the results in Figure \ref{fig:group-alpha-comparison} broadly support the conclusion of Hillebrandt et al\cite{hillebrandt-etal-2014} that connectivity patterns replicate across sessions and hemispheres. The only potential deviation is the third element of $\hat{\boldsymbol{\alpha}}_B$, which exhibits sign changes across conditions; however, given that uncertainty is likely underestimated and several intervals include or are near zero, there is no strong evidence of meaningful differences across sessions or hemispheres.

\begin{figure}[t!]
    \centering
    \includegraphics[width=\linewidth]{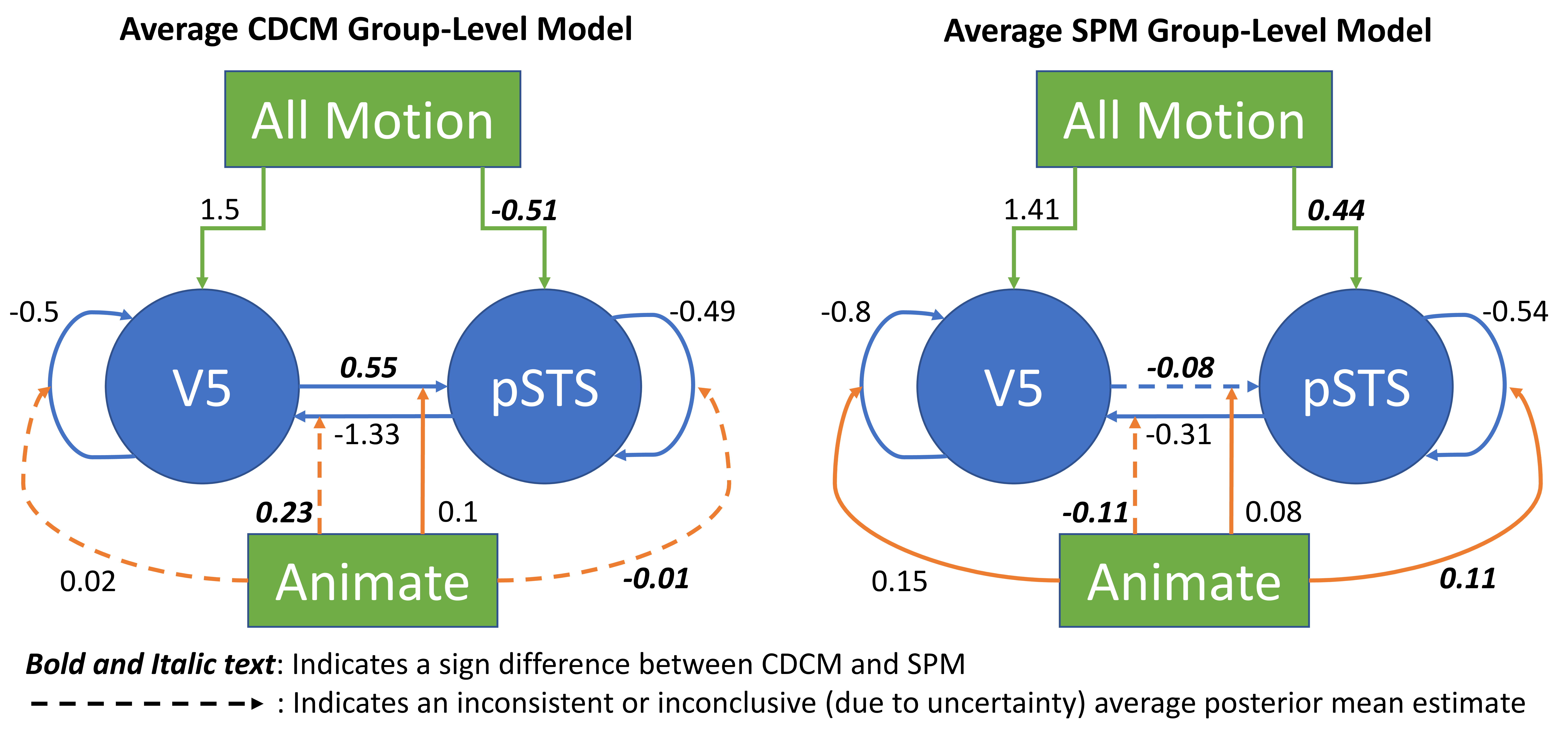}
    \caption{\protect\input{Figure-Captions/avg-group}}
    \label{fig:avg-group}
\end{figure}

Consequently, we return to the first goal and compare the results from Figure \ref{fig:hillebrandt-dcm} in Hillebrandt et al\cite{hillebrandt-etal-2014} and the averaged group-level results in Figure \ref{fig:avg-group}. Of the ten connectivity parameters, four differ in sign between CDCM and SPM, three of which are inconclusive due to uncertainty (dashed lines in Figure \ref{fig:avg-group}). The primary discrepancy between the one-state models concerns the driving input from All Motion to pSTS. While direct comparison with the two-state model is limited by differences in neural dynamics, partial comparisons remain informative. The parameterization of $\mathbf{C}$ is unchanged between one- and two-state models, and differences in $\mathbf{A}$ and $\mathbf{B}$ largely reflect within-region dynamics rather than between-region interactions.\cite{marreiros-etal-2008} Accordingly, we focus on agreement in the signs of $\mathbf{C}$ and the off-diagonal elements of $\mathbf{B}$, as off-diagonal entries of $\mathbf{A}$ are not reported.\cite{hillebrandt-etal-2014}

Comparing Figures \ref{fig:avg-group} and \ref{fig:hillebrandt-dcm}, all three models agree on the sign of the driving input to V5. For the input to pSTS, the estimated effect is negative, consistent with CDCM. Although the estimate in Hillebrandt et al\cite{hillebrandt-etal-2014} is closer to zero, their Bonferroni-corrected interval remains entirely negative $(-0.091, -0.030)$. For the modulatory parameters, both off-diagonal effects in Figure \ref{fig:hillebrandt-dcm} are positive, with a stronger effect on the forward connection from V5 to pSTS (mean $=1.22$) than on the backward connection (mean $=0.16$). The sign of the forward connection is consistent across all three models, whereas the backward connection is less stable, showing the greatest variation within $\hat{\boldsymbol{\alpha}}_B$ for both one-state approaches.

Using the 100 Unrelated Subjects dataset from the HCP-YA 2025 release, we provide a thorough assessment of the replicability of effective connectivity estimates across modeling approaches, subjects, and experimental conditions. Generally, we find consistency across these dimensions and with earlier work, but do so under the CDCM framework, a method built on a foundation of theoretical identifiability. Across small- and large-scale fMRI applications, CDCM achieves consistent and replicable connectivity findings; in controlled simulation settings, CDCM is more robust to misspecification than its complex predecessor.

\section{Discussion}\label{sec:discussion}

We introduced Canonical DCM (CDCM), a simplified framework for modeling effective connectivity in fMRI that enables both theoretical analysis and efficient estimation. While motivated by DCM, the proposed approach reflects a broader strategy for state-space ODE models: simplifying the system structure to obtain a representation that is amenable to theoretical analysis. In contrast to existing DCM formulations, which rely primarily on empirical assessment of parameter recovery,\cite{friston-etal-2003,marreiros-etal-2008,friston-etal-2014} CDCM admits a set of sufficient conditions for parameter identifiability. The use of a piecewise analytic solution to the neural ODE further enables efficient implementation using gradient-based Hamiltonian Monte Carlo and facilitates accurate uncertainty quantification. Ljung\cite{ljung-2010} states that a model should be formed taking two aspects into account: 1) it fits the data well, and 2) is not overly complex. Across simulation studies and real data applications, CDCM produces estimates comparable to those obtained from more complex approaches while offering improved MCMC-based computational efficiency and clearer theoretical guarantees within a simpler framework. 

A central motivation of this work is the role of identifiability in supporting replicability. Because DCM is inherently confirmatory, the reliability of scientific conclusions depends on the ability to consistently recover model parameters across datasets, sessions, and experimental conditions. By establishing explicit conditions under which parameters are uniquely identifiable, CDCM provides a foundation for stable and interpretable inference. This connection is reflected in both the simulation studies and the HCP application, where we observe general agreement with existing methods alongside consistency across repeated measurements, highlighting the importance of identifiability for reproducible connectivity estimates. More generally, the relationship between identifiability and replicability extends to state-space ODE models beyond neuroimaging, where unidentifiable parameters can similarly undermine the stability of inference.

We identify two key directions for future work. First, to meet Assumptions \ref{DCM-A1}--\ref{DCM-A4}, CDCM is currently restricted to fMRI block designs. A block design is a simple way to ensure that there are at least $d+1$ consecutive, equally-spaced observations in each block to identify system parameters from. Depending on the research questions, it may make more sense to have a task fMRI design more similar to that of the event-related design. An area of future work is to investigate the potential relaxation of the Assumptions \ref{DCM-A1}--\ref{DCM-A4} wherever possible. Wang et al\cite{wang-etal-2024} provide some conditions for the identifiability of linear ODE systems from degraded observations, specifically aggregated and time-scaled observations. With event-related or hybrid designs, observations are not necessarily degraded, but they are not consecutive and equally-spaced either. Also, even in a block design, if the model hypothesis includes $d$ ROIs and the experiment only has individual blocks with fewer than $d+1$ observations, can we still provide a set of sufficient conditions for $(\mathbf{s}^*,\theta^*)$-identifiability? Future research examining possible relaxations of Assumptions \ref{DCM-A1}--\ref{DCM-A4} could extend the applicability of CDCM to additional fMRI study designs without being overly restrictive as to how the study and hypotheses should be constructed.

Second, to our knowledge, there are currently no extensions of the DCM framework to incorporate longitudinal hypotheses in a joint estimation procedure. Many existing studies have a longitudinal component (such as pre- and post-treatment/intervention), and currently, the standard procedure is to estimate the DCM separately at each time point and then use a hierarchical parametric empirical Bayes approach to capture longitudinal effects.\cite{almgren-etal-2018,eo-etal-2023,huang-etal-2022,park-etal-2017} We believe there is valuable information worth capturing between the two scans and a joint estimation approach across multiple scanning sessions would provide useful insights. Given the complexity of existing DCM implementations, a longitudinal extension may not be realistic. Therefore, CDCM being simpler provides strong motivation for exploring potential longitudinal extensions to the methodology.

Overall, CDCM provides a conceptually transparent and computationally efficient approach to effective connectivity modeling, together with explicit theoretical guarantees that are often lacking in existing methods. By linking identifiability, estimation, and replicability within a unified framework, this work contributes to a more principled understanding of state-space ODE models and their application to complex biological data.

\subsection*{Data and Code Availability Statement}

The Attention to Visual Motion fMRI dataset is publicly accessible from the Wellcome Centre for Human Neuroimaging here: \url{https://www.fil.ion.ucl.ac.uk/spm/data/attention/}.

Data are provided, in part, by the Human Connectome Project, WU‑Minn Consortium (Principal Investigators: David C. Van Essen and Kamil Ugurbil; 1U54MH091657) funded by the 16 NIH Institutes and Centers that support the NIH Blueprint for Neuroscience Research, and by the McDonnell Center for Systems Neuroscience at Washington University. We acknowledge the contributions of the HCP consortium in making these data publicly available. 

We have developed an \texttt{R} package for simulating from and implementing CDCM available on GitHub: \url{https://github.com/kaitlyn-fales/cdcm}. The \texttt{R} code for reproducing all analyses in the manuscript and supplementary materials is available via GitHub at \url{https://github.com/kaitlyn-fales/Canonical-DCM-Method}.

\begin{singlespace}
\bibliographystyle{vancouver-custom}
\bibliography{references}
\end{singlespace}

\renewcommand{\thesection}{\Alph{section}}
\renewcommand{\thelemma}{S\arabic{lemma}}
\renewcommand{\thecor}{S\arabic{cor}}

\setcounter{section}{0}
\setcounter{table}{0}
\setcounter{figure}{0}
\setcounter{theorem}{0}

\counterwithin{table}{section}
\counterwithin{figure}{section}
\numberwithin{equation}{section}

\clearpage

{\centering \LARGE Supplementary Materials \par}

\section{Canonical HRF as a Special Case of the Balloon Model}\label{appx:special-case}

As we identify in the main text, we can consider the canonical HRF as a type of special case of the Balloon model. We show here why that is the case, by synthesizing the concepts and results of prior work\cite{buxton-etal-1998,glover-1999,friston-etal-2000,friston-2002,friston-etal-2003,zeidman-etal-2019}; no new results are introduced. For simplicity, we present the derivation in the univariate case, where $z(t)$ and the hemodynamic states $s(t), f(t), v(t), q(t)$ are scalar quantities rather than vectors. Following the standard DCM formulation,\cite{friston-etal-2000,friston-etal-2003,buxton-etal-1998} the hemodynamic Balloon model couples neural activity $z(t)$ to a vasoactive signal $s(t)$, blood inflow $f(t)\equiv f_{\mathrm{in}}(t)$, venous volume $v(t)$, and deoxyhemoglobin content $q(t)$:
\begin{align*}
\dot{s}(t) &= z(t) - \kappa s(t) - \gamma\,[f(t)-1], \\
\dot{f}(t) &= s(t), \\
\tau_h \dot{v}(t) &= f(t) - f_{\mathrm{out}}(v,t), \qquad f_{\mathrm{out}}(v,t)=v(t)^{1/\zeta}, \\
\tau_h \dot{q}(t) &= f(t) E(f(t)) - f_{\mathrm{out}}(v,t) \frac{q(t)}{v(t)}.
\end{align*}
Following the implementation used in modern SPM, as described in Zeidman et al,\cite{zeidman-etal-2019} the oxygen extraction function $E(f(t))$ is written in the normalized form,
$$
E(f)=\frac{1-(1-E_0)^{1/f}}{E_0},
$$
where $E_0$ is the resting oxygen extraction fraction. The corresponding fractional change in BOLD signal $y(t) = \Delta S(t)/S_0$ is,
\begin{equation*}
y(t) = \frac{\Delta S(t)}{S_0} \approx V_0 \Big[ k_1 (1-q(t)) + k_2 \Big( 1 - \frac{q(t)}{v(t)} \Big) + k_3 (1-v(t)) \Big],
\end{equation*}
with parameters $\kappa,\gamma,\tau_h,\zeta,E_0,V_0$, and $k_i$ for $i = 1,2,3$ as defined in Zeidman et al,\cite{zeidman-etal-2019} except we use $\zeta$ to denote Grubb's exponent.\cite{grubb-etal-1974}

To obtain a linear time-invariant (LTI) approximation, we linearize the system about its resting point:
$$
f=v=q=1, \qquad s=z=0.
$$
We define small deviations from rest:
$$
x_s=s, \quad x_f=f-1, \quad x_v=v-1, \quad x_q=q-1.
$$
Substituting these into the state equations and keeping only first-order terms from the Taylor expansion gives
\begin{align*}
\dot{x}_s &= z - \kappa x_s - \gamma x_f, \\
\dot{x}_f &= x_s, \\
\tau_h \dot{x}_v &= x_f - \frac{1}{\zeta} x_v, \\
\tau_h \dot{x}_q &= \omega_{E_0} x_f - \left( \frac{1}{\zeta} -1 \right) x_v - x_q,
\end{align*}
where $\omega_{E_0} = 1+ \frac{(1-E_0)\log(1-E_0)}{E_0}$ is a constant, as $E_0$ is a preset value in SPM.\cite{zeidman-etal-2019}

\textit{Note.} In the DCM implementation used by Zeidman et al\cite{zeidman-etal-2019} and in the modern version of SPM, the hemodynamic states $f,v,q$ are expressed in log form ($\log f, \log v, \log q$) for numerical stability and to ensure positivity. Here we use the ratio form (normalized to baseline), which is equivalent to first order since $\log(1+\delta)\approx \delta$ for small deviations. At rest, this corresponds to $f=v=q=1$ in ratio form (or $0$ in log form).

Collecting these equations into vector form with $\mathbf{w}(t) = [x_s, x_f, x_v, x_q]^\top$ gives
$$
\dot{\mathbf{w}}(t) = \mathbf{M}\mathbf{w}(t) + \boldsymbol{a} z(t), \qquad y(t) = \mathbf{d} \mathbf{w}(t),
$$
where
\begin{align*}
    \mathbf{M} &=
\begin{bmatrix}
-\kappa & -\gamma & 0 & 0\\
1 & 0 & 0 & 0\\
0 & 1/\tau_h & -1/(\zeta \tau_h) & 0\\
0 & \omega_{E_0}/\tau_h & -(1/\tau_h)(1/\zeta-1) & -(1/\tau_h)
\end{bmatrix}, \quad
\boldsymbol{a} =
\begin{bmatrix}
1\\0\\0\\0
\end{bmatrix}, \text{ and } \\[10pt]
\mathbf{d} &= V_0
\begin{bmatrix}
0 & 0 & (k_2-k_3) & -(k_1+k_2)
\end{bmatrix}.
\end{align*}

Applying the Laplace transform converts the coupled differential equations into algebraic equations in the Laplace domain, with Laplace-domain variable $\rho$.\cite{oppenheim-etal-1996} Using the property $\mathcal{L}\{\dot{\mathbf{w}}(t)\}=\rho \mathbf{W}(\rho)-\mathbf{w}(0)$, we obtain the algebraic system,
$$
(\rho\mathbf{I}-\mathbf{M})\mathbf{W}(\rho) = \boldsymbol{a} Z(\rho),
$$
where $\mathbf{W}(\rho)$ and $Z(\rho)$ are the Laplace transforms of $\mathbf{w}(t)$ and $z(t)$, and $\mathbf{I}$ denotes the identity matrix. Assuming zero initial conditions ($\mathbf{w}(0)=0$),
$$
\mathbf{W}(\rho) = (\rho\mathbf{I}-\mathbf{M})^{-1} \boldsymbol{a} Z(\rho), \qquad Y(\rho) = \mathbf{d}\mathbf{W}(\rho),
$$
so that the transfer function of the system is:
$$
H(\rho) \equiv \frac{Y(\rho)}{Z(\rho)} = \mathbf{d}(\rho\mathbf{I}-\mathbf{M})^{-1} \boldsymbol{a}.
$$

Intuitively, the Laplace transform replaces differentiation with multiplication by $\rho$, reducing the coupled differential equations to algebraic form. The matrix inverse, $(\rho\mathbf{I}-\mathbf{M})^{-1}$, encodes the system’s internal dynamics. Its poles (the roots of $\det(\rho\mathbf{I}-\mathbf{M})=0$) determine the system's characteristic time constants, controlling the rise, decay, and undershoot of the impulse response. 

The impulse response of the linearized hemodynamic system is defined as $h(t)=\mathcal{L}^{-1}\{H(\rho)\}$. For an arbitrary input $z(t)$, with $Z(\rho)=\mathcal{L}\{z(t)\}$ and $Y(\rho)=\mathcal{L}\{y(t)\}$,
$$
Y(\rho)=H(\rho)Z(\rho)
\quad \Longleftrightarrow \quad
y(t) = (h * z)(t),
$$
so the BOLD response is given by the convolution of the impulse response with the neural input. In the Volterra framework, this convolution corresponds to the first-order term in the Volterra expansion, and thus $h(t)$ is identified with the first-order Volterra kernel of the nonlinear Balloon model.\cite{friston-etal-2000}

For physiologically plausible parameters, $h(t)$ exhibits a biphasic form, with a positive peak followed by a smaller undershoot, corresponding to the dominant poles of the linearized system.\cite{buxton-etal-1998} The canonical HRF \eqref{eq:HRF} used in SPM is a convenient parametric approximation to this impulse response, given by:
\begin{equation*}
        h(t) = \frac{\beta_1^{\alpha_1}}{\Gamma(\alpha_1)}t^{\alpha_1 -1}e^{-\beta_1t} - c\frac{\beta_2^{\alpha_2}}{\Gamma(\alpha_2)}t^{\alpha_2 -1}e^{-\beta_2t},
\end{equation*}
with parameters $\alpha_1=6$, $\alpha_2=16$, $\beta_1=\beta_2 = 1$, and $c = 1/6$. Thus, the canonical HRF can be viewed as a parametric approximation to the impulse response of the linearized Balloon model, or equivalently, to the first-order Volterra kernel of the nonlinear hemodynamic system.\cite{glover-1999,friston-2002} For a gentle introduction to the Laplace transform and transfer-function representation, see Oppenheim et al.\cite{oppenheim-etal-1996}

\section{Diagonal Parameterization}\label{appx:diagonal-par}

Here, we provide further discussion of the diagonal parameterization used in SPM's DCM. We show how the standard SPM exponential mapping for diagonal elements affects the relationship between the original neural equation and the reparameterized form. Additionally, we discuss our alternative reparameterization strategy used in CDCM that preserves the additive structure of the neural model while allowing more flexible prior specification for diagonal and off-diagonal elements.

Consider the latent neural state $\mathbf{z}(t) \in \mathbb{R}^d$, with $t \in (0, \infty)$. Recall, the neural signal model in the bilinear, one-state DCM is,
\begin{equation}\label{eqn:neural-dcm}
\dot{\mathbf{z}}(t) = (\mathbf{A}  + \sum_{i=1}^{m} u_i(t) \mathbf{B}_i )\mathbf{z}(t) + \mathbf{C} \mathbf{u}(t).
\end{equation}
We can split the $\mathbf{A}$ and $\mathbf{B} = [\mathbf{B}_1,\dots,\mathbf{B}_m]$ connectivity matrices into diagonal and off-diagonal parts:
\begin{equation*}
\dot{\mathbf{z}}(t) = (\mathbf{A}^D + \sum_{i=1}^{m} u_i(t) \mathbf{B}_i^D) \mathbf{z}(t) + (\mathbf{A}^O + \sum_{i=1}^{m} u_i(t) \mathbf{B}_i^O) \mathbf{z}(t) + \mathbf{C} \mathbf{u}(t),
\end{equation*}
where $\mathbf{A}^D = \mathrm{diag}(a_1, \ldots, a_d)$, and
$\mathbf{B}_i^D = \mathrm{diag}(b_1^{(i)}, \ldots, b_d^{(i)})$, for $i=1,\dots,m$. Thus, the diagonal contribution for element $\ell$ is,
\begin{equation*}
a_\ell + \sum_{i=1}^{m} u_i(t) b_\ell^{(i)}, \quad \ell=1\dots,d.
\end{equation*}
A sufficient condition for numerical stability of the dynamic system is the baseline $a_\ell \le 0$ and  modulation $b_\ell^{(i)} \le -a_\ell$ for all $\ell$ and $i$. Essentially, each region should experience self-inhibition $(a_\ell \le 0)$ at baseline to constrain system dynamics, and any modulation to that baseline should not be too strong $(b_\ell^{(i)} \le -a_\ell)$ such that each region is always experiencing some level of self-inhibition. 

To achieve this, SPM uses a default intrinsic (self-inhibition) strength of $-0.5$ Hz, and applies an exponential mapping to the diagonal \cite{zeidman-etal-2019}:
\begin{equation*}
\mathbf{A}^D + \sum_{i=1}^{m} u_i(t) \mathbf{B}_i^D \quad \mapsto \quad 
-0.5 \, \exp\Big(\bar{\mathbf{A}}^{D} + \sum_{i=1}^{m} u_i(t) \bar{\mathbf{B}}_i^{D}\Big),
\end{equation*}
in which each diagonal element then becomes,
\begin{equation*}
-0.5\exp\Big(\bar{a}_\ell + \sum_{i=1}^{m} u_i(t) \bar{b}_\ell^{(i)}\Big).
\end{equation*}
Recovery of the original neural model \eqref{eqn:neural-dcm}, aside from the multiplicative constant of $-0.5$, requires,
\begin{equation*}
a_\ell + \sum_{i=1}^{m} u_i(t) b_\ell^{(i)} = \exp\Big(\bar{a}_\ell + \sum_{i=1}^{m} u_i(t) \bar{b}_\ell^{(i)}\Big), \quad \forall \ell \in \{1,\dots,d\}.
\end{equation*}
If all $u_i(t) = 0$, for $i = 1,\dots,m$, then,
\begin{equation*}
\bar{a}_\ell = \log(a_\ell).
\end{equation*}
If a single input $u_i(t) = 1$, then,
\begin{equation*}
\bar{b}_\ell^{(i)} = \log\Big(1 + \frac{b_\ell^{(i)}}{a_\ell}\Big).
\end{equation*}
When multiple inputs are active, cross-terms appear. For $u_i(t) = u_{i^\prime}(t) = 1$, $i \neq i^\prime$:
\begin{align*}
a_\ell + b_\ell^{(i)} + b_\ell^{(i^\prime)} &= a_\ell \exp\big(\bar{b}_\ell^{(i)} + \bar{b}_\ell^{(i^\prime)}\big) \\
&= a_\ell \Big(1 + \frac{b_\ell^{(i)}}{a_\ell}\Big) \Big(1 + \frac{b_\ell^{(i^\prime)}}{a_\ell}\Big) \\
&= a_\ell + b_\ell^{(i)} + b_\ell^{(i^\prime)} + \frac{b_\ell^{(i)} b_\ell^{(i^\prime)}}{a_\ell}.
\end{align*}
Thus, exact equivalence requires
\begin{equation*}
b_\ell^{(i)} b_\ell^{(i^\prime)} = 0 \quad \forall i \ne i^\prime,
\end{equation*}
which implies that multiple stimuli cannot simultaneously modulate an intrinsic connection. In practice, this is not an entirely unreasonable assumption to make, as it rarely would be of interest to researchers. However, it is important to note because this diagonal parameterization does not automatically guarantee equivalence of the neural model specification given in \eqref{eqn:neural-dcm}.

To view the reparameterization differently, consider small values of 
$\bar{a}_\ell$ and $\bar{b}_\ell^{(i)}$ (e.g., 
$|\bar{a}_\ell|, |\bar{b}_\ell^{(i)}| \le 0.2$). Using a second-order 
Taylor expansion,
\begin{equation*}
e^x \approx 1 + x + \frac{1}{2}x^2,
\end{equation*}
we obtain, when a single input $u_i(t) = 1$,
\begin{align*}
-0.5 \, e^{\bar{a}_\ell + \bar{b}_\ell^{(i)}} 
&\approx -0.5 \Bigl[
1 + \bar{a}_\ell + \bar{b}_\ell^{(i)}
+ \tfrac{1}{2}\bar{a}_\ell^2
+ \bar{a}_\ell \bar{b}_\ell^{(i)}
+ \tfrac{1}{2}\{\bar{b}_\ell^{(i)}\}^2
\Bigr].
\end{align*}
Retaining only first-order terms yields the simpler approximation
\begin{equation*}
-0.5 \, e^{\bar{a}_\ell + \bar{b}_\ell^{(i)}}
\approx -0.5\bigl(1 + \bar{a}_\ell + \bar{b}_\ell^{(i)}\bigr).
\end{equation*}
Thus, although the parametrization is multiplicative on the constrained 
scale, it behaves approximately additively in $\bar{a}_\ell$ and 
$\bar{b}_\ell^{(i)}$ for small values, inducing an approximate effect 
$a_\ell + b_\ell^{(i)}$ on the original scale.

In this view, the reparameterization only introduces a small error (approximately 2\%) and a constant offset term. Additionally, in the case where $u_i(t) = u_{i^\prime}(t) = 1$, with small $a_\ell, b_\ell^{(i)}$ and $b_\ell^{(j)}$ the cross-terms are near zero. However, SPM's diagonal priors do not necessarily guarantee small enough modulation elements $\bar{b}_\ell^{(i)}$ by default, making the Taylor approximation approach impractical. The standard SPM approach applies a restrictive prior (variance $1/64$) to all elements of $\mathbf{A}$, including off-diagonal, and uses a wider prior (variance $1$) for all elements of $\mathbf{B}_i$.\cite{zeidman-etal-2019} The rationale for this comes from the fact that the modulation elements are effectively scaled by the baseline intrinsic connections, as seen in the earlier argument above. Therefore, SPM's approach implies either the imposition of an additional assumption on intrinsic modulatory parameters, or a more restrictive prior on them in order to recover \eqref{eqn:neural-dcm}.

Alternatively, we can reparameterize only the diagonal of $\mathbf{A}$ while keeping the $\mathbf{B}_i$ matrices unchanged, outside of a more restrictive prior, which is the approach we take for CDCM (discussed in Section \ref{sec:mdl}). In this approach, the diagonal elements of both $\mathbf{A}$ and $\mathbf{B}_i$ are assigned restrictive priors with variance $1/64$, while the off-diagonal elements of both matrices receive wider priors with variance $1$. Reparameterizing only the diagonal of $\mathbf{A}$ preserves the additive structure of \eqref{eqn:neural-dcm}, controls dynamic system stability via the diagonal priors, and avoids the restrictive assumption that multiple diagonal modulatory terms in $\mathbf{B}_i$ must be mutually exclusive.

\section{Detailed Proofs}\label{appx:proofs}

\setcounter{theorem}{0}
\setcounter{prop}{0}

In this section, we present the detailed proofs of our proposition and theorems. 

\subsection{Proof of Proposition \ref{prop:dcm_sol_trajectory}}
\begin{prop}
    
\end{prop}
\begin{proof}
Let $b=0,\dots,B$ be given. For $t \in [t^{(b)}, t^{(b+1)})$, 
\begin{align*}
\dot{\mathbf{z}}(t; \mathbf{s}^*,\theta^*) 
&= \left(\mathbf{A} + \sum_{i=1}^m \Tilde{u}_i^{(b)}\mathbf{B}_i\right) \mathbf{z}(t; \mathbf{s}^*,\theta^*) + \mathbf{C}\Tilde{\mathbf{u}}^{(b)} \\
&= \Tilde{\mathbf{A}}^{(b)} \mathbf{z}(t; \mathbf{s}^*,\theta^*) + \Tilde{\mathbf{c}}^{(b)}.
\end{align*}
Therefore, on $[t^{(b)}, t^{(b+1)})$, the trajectory $\mathbf{z}(t; \mathbf{s}^*,\theta^*)$ is identical to
$$
\mathbf{v}(t-t^{(b)};\mathbf{z}(t^{(b)};\mathbf{s}^*,\theta^*),(\Tilde{\mathbf{A}}^{(b)},\Tilde{\mathbf{c}}^{(b)})),
$$
establishing the first claim.

As the closed-form solution of \eqref{eq: affine_ODE} is given by \eqref{eq: affine_ODE_sol}, substituting $t-t^{(b)}$ for $t$, $\mathbf{z}(t^{(b)})$ for $\mathbf{v}(0)$, and $(\Tilde{\mathbf{A}}^{(b)},\Tilde{\mathbf{c}}^{(b)})$ for $(\mathbf{A},\mathbf{c})$ in \eqref{eq: affine_ODE_sol} gives the stated closed-form expression for $\mathbf{z}(t; \mathbf{s}^*,\theta^*)$.
\end{proof}

\subsection{Proof of Theorem \ref{thm:canonical-dcm}}

We first introduce some notations. For $b=0,1,\dots,B$, we define 
\begin{align*}
    \mathbf{v}^{(b)}(t) = \mathbf{v}(t;\,\mathbf{z}(t^{(b)}), (\Tilde{\mathbf{A}}^{(b)}, \Tilde{\mathbf{c}}^{(b)}))
\end{align*}
as the solution at time $t\in\mathbb{R}_{+}$ of the affine
system 
\begin{equation}
    \dot{\mathbf{v}}(t) = \Tilde{\mathbf{A}}^{(b)}\mathbf{v}(t) + \Tilde{\mathbf{c}}^{(b)},
\end{equation}
with the initial condition $\mathbf{z}(t^{(b)})$. By Proposition~\ref{prop:dcm_sol_trajectory}, this solution coincides with the corresponding segment of the original trajectory: \[\mathbf{v}^{(b)}(t) = \mathbf{z}(t^{(b)}+t),\quad t\in[t^{(b)}, t^{(b+1)}). \] We then write $\mathbf{v}_j^{(b)}$ for observations taken at equally spaced intervals of length $r$:
\begin{align}\label{def: v_j^b}
    \mathbf{v}_j^{(b)} = \mathbf{v}^{(b)}(jr) = \mathbf{z}(t^{(b)}+jr),\,\,j=0,1,2,\dots
\end{align}

We present results from Duan et al\cite{duan-etal-2020} that are used to establish identifiability of the block-specific parameters $(\Tilde{\mathbf A}^{(b)}, \Tilde{\mathbf c}^{(b)})$ in Lemma \ref{thm:block-dynamics}. We then use Lemma \ref{thm:block-dynamics} to prove Theorem \ref{thm:canonical-dcm}.


 \begin{lemma}[Duan et al,\cite{duan-etal-2020} Lemma 4.2]\label{thm:duan-lemma}
     Let $\mathbf{v}_0,\dots,\mathbf{v}_{d+1}$ be any $d+2$ vectors in $\mathbb{R}^d$. 
     Define the augmented vectors $\mathbf{x}_j = [\mathbf{v}_j,1]^\top$ for $j=0,\dots,d+1$. Define the following matrices:
     \begin{align*}
         \mathbf{Z}_i &:= [\mathbf{v}_i,\dots,\mathbf{v}_{i+d-1}] \in \mathbb{R}^{d\times d} \mbox{ for } i \in \{0,1,2\}\\
         \mathbf{X}_i &:=[\mathbf{x}_i,\dots,\mathbf{x}_{i+d}] \in \mathbb{R}^{(d+1)\times(d+1)} \mbox{ for } i \in \{0,1\}.
     \end{align*}
     Suppose there exists a matrix $\Psi\in\mathbb{R}^{(d+1)\times (d+1)}$ that maps the first block of augmented vectors to the second, such that:
     $$\Psi \mathbf{X}_0 = \mathbf{X}_1.$$
     If the matrix $\mathbf{X}_0$ is invertible, then $\Psi$ must take the block form:
     $$\Psi = \begin{bmatrix} \Phi & \mathbf{p} \\ 0 & 1 \end{bmatrix}$$
     where the top-left block $\Phi \in \mathbb{R}^{d \times d}$ and the top-right vector $\mathbf{p} \in \mathbb{R}^d$ are given by:
     \begin{align*}
         \Phi &= (\mathbf{Z}_2 - \mathbf{Z}_1)(\mathbf{Z}_1 - \mathbf{Z}_0)^{-1}\\
         \mathbf{p} &= \mathbf{v}_{i+1} - \Phi \mathbf{v}_i\quad \text{for any } i \in \{0, \dots, d\}.
     \end{align*}
\end{lemma}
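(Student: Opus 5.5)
The plan is to determine $\Psi$ directly from the equation $\Psi\mathbf{X}_0=\mathbf{X}_1$ and the invertibility of $\mathbf{X}_0$, and then read off its blocks. Since $\mathbf{X}_0$ is invertible, $\Psi=\mathbf{X}_1\mathbf{X}_0^{-1}$ is uniquely determined. I will therefore characterize it row by row instead of computing the inverse explicitly.

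\emph{Step 1 (last row).} Write the last row of $\Psi$ as $\mathbf{w}^\top$. The last row of $\mathbf{X}_1$ is the all-ones vector $\mathbf{1}^\top$, because every augmented vector $\mathbf{x}_j$ ends in $1$. So $\mathbf{w}^\top\mathbf{X}_0=\mathbf{1}^\top$. The last row of $\mathbf{X}_0$ is also $\mathbf{1}^\top$, so $\mathbf{e}_{d+1}^\top\mathbf{X}_0=\mathbf{1}^\top$ as well. Invertibility of $\mathbf{X}_0$ then forces $\mathbf{w}=\mathbf{e}_{d+1}$. Hence $\Psi$ has the stated block form, with top-left block $\Phi\in\mathbb{R}^{d\times d}$, top-right vector $\mathbf{p}\in\mathbb{R}^d$, bottom-left block $0$, and bottom-right entry $1$.

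\emph{Step 2 (affine recursion).} Reading the first $d$ rows of $\Psi\mathbf{x}_j=\mathbf{x}_{j+1}$ gives
$$\Phi\mathbf{v}_j+\mathbf{p}=\mathbf{v}_{j+1},\qquad j=0,\dots,d.$$
This immediately yields $\mathbf{p}=\mathbf{v}_{i+1}-\Phi\mathbf{v}_i$ for any $i\in\{0,\dots,d\}$. Subtracting the equations for consecutive indices $j$ and $j+1$, for $j=0,\dots,d-1$, eliminates $\mathbf{p}$ and gives $\Phi(\mathbf{v}_{j+1}-\mathbf{v}_j)=\mathbf{v}_{j+2}-\mathbf{v}_{j+1}$. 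Collected as columns, these say $\Phi(\mathbf{Z}_1-\mathbf{Z}_0)=\mathbf{Z}_2-\mathbf{Z}_1$; note that $\mathbf{Z}_2$ uses $\mathbf{v}_{d+1}$, which is available.

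\emph{Step 3 (invertibility of $\mathbf{Z}_1-\mathbf{Z}_0$).} This is the only nontrivial point and the main obstacle. I will handle it with determinant-preserving column operations on $\mathbf{X}_0$. For $j=0,\dots,d-1$ (in increasing order, using the original columns), replace column $\mathbf{x}_j$ by $\mathbf{x}_{j+1}-\mathbf{x}_j$, and keep the last column $\mathbf{x}_d$. Each replacement is an elementary column operation, so the determinant is preserved up to sign. The differences have last entry $0$, so the resulting matrix is
$$\begin{bmatrix}\mathbf{Z}_1-\mathbf{Z}_0 & \mathbf{v}_d\\ 0 & 1\end{bmatrix},$$
whose determinant is $\det(\mathbf{Z}_1-\mathbf{Z}_0)$. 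Hence $\det(\mathbf{Z}_1-\mathbf{Z}_0)=\pm\det\mathbf{X}_0\neq0$, and multiplying the relation from Step 2 on the right by $(\mathbf{Z}_1-\mathbf{Z}_0)^{-1}$ gives $\Phi=(\mathbf{Z}_2-\mathbf{Z}_1)(\mathbf{Z}_1-\mathbf{Z}_0)^{-1}$, completing the proof.
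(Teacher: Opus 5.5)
Your proof is correct and complete. The last-row argument fixes the block form, and the recursion $\Phi\mathbf{v}_j+\mathbf{p}=\mathbf{v}_{j+1}$ ($j=0,\dots,d$) yields both formulas. Your column-operation identity $\det\mathbf{X}_0=\pm\det(\mathbf{Z}_1-\mathbf{Z}_0)$ supplies the invertibility of $\mathbf{Z}_1-\mathbf{Z}_0$, which the statement tacitly asserts.

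The paper gives no proof of its own here. It imports the result from Duan et al.\ and only applies it in Step (i) of Lemma~\ref{thm:block-dynamics}, so your argument serves as a self-contained verification. The same determinant identity also shows that Assumption~\ref{DCM-A4} is equivalent to linear independence of the first $d$ within-block increments $\mathbf{v}^{(b)}_{j+1}-\mathbf{v}^{(b)}_{j}$, $j=0,\dots,d-1$.
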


\begin{lemma}[Duan et al,\cite{duan-etal-2020} Theorem 4.5]\label{thm:duan-id}
    Let $\Phi \in \mathbb{R}^{d \times d}$. The following are equivalent:
    \begin{description}\setlength{\itemsep}{-0.5em}
        \item[(a)] For any $\mathbf{p} \in \mathbb{R}^d$, there exists a real, unique solution pair $(\mathbf{A},\mathbf{c}) \in \mathbb{R}^{d \times d} \times \mathbb{R}^d$ to the system,
        \begin{align*}
            e^{\mathbf{A}} &= \Phi, \\
            w(\mathbf{A})\mathbf{c} &= \mathbf{p},
        \end{align*}
        where $w(\mathbf{A}) := \sum_{k=0}^\infty \frac{1}{(k+1)!}\mathbf{A}^k$.
        \item[(b)] There exists a unique real solution $\mathbf{A} \in \mathbb{R}^{d \times d}$ to $\Phi = e^{\mathbf{A}}$.
        \item[(c)] The eigenvalues of $\Phi$ are all positive real and each Jordan block of $\Phi$ only occurs once. 
    \end{description}
\end{lemma}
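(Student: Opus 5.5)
The plan is to prove the cycle of implications (a)$\Rightarrow$(b)$\Rightarrow$(c)$\Rightarrow$(a). The heavy lifting goes through the classical characterization of real matrix logarithms (Culver, 1966): a real matrix $\Phi$ has a real logarithm that is \emph{unique} if and only if every eigenvalue of $\Phi$ is positive real and no elementary divisor (Jordan block) of $\Phi$ is repeated. This is exactly (b)$\Leftrightarrow$(c), so the remaining work is relating the pair problem in (a) to the single equation $e^{\mathbf{A}}=\Phi$. The tool for that is the spectral description of $w(\mathbf{A})$. If $\lambda$ is an eigenvalue of $\mathbf{A}$, then $w(\lambda)=(e^{\lambda}-1)/\lambda$ is an eigenvalue of $w(\mathbf{A})$, with $w(0)=1$. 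Hence $w(\mathbf{A})$ is singular exactly when $\mathbf{A}$ has an eigenvalue $2\pi i k$ with $k\neq 0$.

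\textbf{(a)$\Rightarrow$(b).} Existence of a real $\mathbf{A}$ with $e^{\mathbf{A}}=\Phi$ is immediate from (a) with any $\mathbf{p}$. For uniqueness, take $\mathbf{p}=0$. If $\mathbf{A}\neq\mathbf{A}'$ were two real logarithms of $\Phi$, then $(\mathbf{A},0)$ and $(\mathbf{A}',0)$ would be two distinct solution pairs, contradicting (a). Choosing $\mathbf{p}=0$ avoids any issue of whether $w(\mathbf{A}')$ is invertible.

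\textbf{(b)$\Rightarrow$(c).} This is the harder direction of Culver's theorem, and I would argue by contrapositive, constructing a second real logarithm in each failure case.
\begin{itemize}
\item If $\Phi$ has a complex pair of eigenvalues or a negative eigenvalue, shift the chosen branch of the logarithm on the corresponding real Jordan block by $\pm 2\pi i$, in conjugate pairs. This keeps the logarithm real but changes it.
\item If all eigenvalues are positive but some Jordan block $J$ appears twice, note that $\log(J\oplus J)$ has the freedom to add $\begin{bmatrix}0 & 2\pi\\ -2\pi & 0\end{bmatrix}\otimes I$. This is a real matrix commuting with $\log J\oplus\log J$ whose exponential is the identity.
\end{itemize}
In each case the new matrix is a real logarithm distinct from the original, contradicting uniqueness.

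\textbf{(c)$\Rightarrow$(a).} Under (c), Culver's theorem gives a unique real $\mathbf{A}$ with $e^{\mathbf{A}}=\Phi$. This $\mathbf{A}$ is the principal logarithm, so its eigenvalues are $\log\lambda_j\in\mathbb{R}$ for the positive eigenvalues $\lambda_j$ of $\Phi$. Then $w(\mathbf{A})$ has eigenvalues $(\lambda_j-1)/\log\lambda_j>0$ (equal to $1$ when $\lambda_j=1$), so $w(\mathbf{A})$ is invertible. Therefore $\mathbf{c}=w(\mathbf{A})^{-1}\mathbf{p}$ exists and is unique, and the pair $(\mathbf{A},\mathbf{c})$ is unique because $\mathbf{A}$ is.

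The main obstacle is (b)$\Rightarrow$(c): producing genuinely distinct \emph{real} logarithms in the repeated-block case, and checking that the modification commutes with the original logarithm so that the exponential is unchanged. I would simply cite Culver's theorem there rather than reprove it.
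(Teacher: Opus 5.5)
Your proof is correct. The paper gives no argument of its own for this lemma. It is quoted from Duan et al.\ and used only in the proof of Lemma~\ref{thm:block-dynamics}, through (c)$\Rightarrow$(b) together with Corollary~\ref{thm:duan-cor}. Your proposal therefore supplies what the paper delegates, and it takes the natural route. (b)$\Leftrightarrow$(c) is exactly Culver's characterization of real matrices with a unique real logarithm; condition (c) is essentially his statement. (c)$\Rightarrow$(a) is the spectral-mapping fact that $w(\mathbf{A})$ has eigenvalues $(e^{\lambda}-1)/\lambda \neq 0$ whenever $\mathbf{A}$ has real spectrum, which is also what underlies Corollary~\ref{thm:duan-cor}. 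Taking $\mathbf{p}=\mathbf{0}$ in (a)$\Rightarrow$(b) is the right way to sidestep any question about invertibility of $w(\mathbf{A}')$.

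There are two small loose ends.

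\begin{enumerate}
\item In your contrapositive for (b)$\Rightarrow$(c), also list a zero eigenvalue. In that case $\Phi$ is singular and has no logarithm, so (b) fails outright. You should also say explicitly that the second-logarithm constructions are only needed when some real logarithm exists.
\item In (c)$\Rightarrow$(a), identifying the unique real logarithm with the principal one relies on a standard fact: the principal logarithm of a real matrix with no eigenvalues on $(-\infty,0]$ is real.
\end{enumerate}

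You can avoid that fact, and Culver entirely for (a)$\Leftrightarrow$(b). Suppose $\mathbf{A}$ is the unique real logarithm and has a nonreal eigenvalue $\lambda$. Let $P_\lambda$ be the spectral projector of $\mathbf{A}$ for $\lambda$. Then $\mathbf{A}+2\pi i(P_\lambda-P_{\bar\lambda})$ is a different real matrix that commutes with $\mathbf{A}$ and has the same exponential; this is your own conjugate-shift construction. Hence the spectrum of $\mathbf{A}$ is real, so $w(\mathbf{A})$ is invertible, and (b)$\Rightarrow$(a) follows directly. Culver is then needed only to connect with (c).
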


\begin{cor}[Duan et al,\cite{duan-etal-2020} Corollary 4.7]\label{thm:duan-cor}
    Let $\mathbf{A} \in \mathbb{R}^{d \times d}$, and $t\in \mathbb{R}_+$. If $t\mathbf{A}$ is the unique real logarithm of $e^{t\mathbf{A}}$, then $w(t;\mathbf{A})$ is invertible.
\end{cor}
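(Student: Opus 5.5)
The plan is to diagonalize the question through the spectral mapping theorem and then show that uniqueness of the real logarithm excludes exactly the eigenvalues at which $w$ degenerates. I read $w(t;\mathbf{A})$ as the series $w$ of Lemma~\ref{thm:duan-id} evaluated at $t\mathbf{A}$, i.e.\ $w(t\mathbf{A})=\sum_{k\ge 0} \frac{(t\mathbf{A})^k}{(k+1)!}$. If the intended convention carries an extra factor $t$, this changes nothing, since $t>0$ when the corollary is used with $t=r$.

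\textbf{Step 1 (reduce to eigenvalues).} Write $w(t\mathbf{A}) = f(t\mathbf{A})$, where $f(x) = \sum_{k\ge0} x^k/(k+1)!$. This $f$ is entire and satisfies $f(x) = (e^x-1)/x$ for $x\neq 0$, with $f(0)=1$. By the spectral mapping theorem for entire functions of a matrix (or directly from the Jordan form over $\mathbb{C}$), the eigenvalues of $w(t\mathbf{A})$ are $f(t\lambda)$, where $\lambda$ ranges over the eigenvalues of $\mathbf{A}$. Hence $w(t\mathbf{A})$ is singular if and only if some eigenvalue $\mu=t\lambda$ of $t\mathbf{A}$ satisfies $f(\mu)=0$. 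Since $f(0)=1$, this happens if and only if $\mu\neq 0$ and $e^{\mu}=1$, that is, $\mu = 2\pi i k$ for some nonzero integer $k$.

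\textbf{Step 2 (exclude such eigenvalues).} Suppose, for contradiction, that $t\mathbf{A}$ has an eigenvalue $\mu=2\pi i k$ with $k\neq 0$. Because $t\mathbf{A}$ is real, $\bar\mu=-2\pi i k\neq\mu$ is also an eigenvalue. Moreover, the Jordan blocks of $t\mathbf{A}$ for $\mu$ and $\bar\mu$ have identical sizes, since they are complex conjugates of one another.

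Next I use the standard fact that $\exp$ maps a Jordan block $J_s(\nu)$ to a matrix similar to $J_s(e^\nu)$. This holds because $\exp'(\nu)=e^\nu\neq 0$, so the nilpotent part keeps its full index. Consequently $\Phi=e^{t\mathbf{A}}$ has at least two Jordan blocks of the same size for the eigenvalue $e^{\mu}=e^{\bar\mu}=1$.

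This contradicts condition (c) of Lemma~\ref{thm:duan-id}, which requires each Jordan block of $\Phi$ to occur only once. By the equivalence (b)$\Leftrightarrow$(c) in that lemma, $\Phi$ would then have no unique real logarithm, contradicting the hypothesis that $t\mathbf{A}$ is its unique real logarithm. Therefore no eigenvalue of $t\mathbf{A}$ lies in $2\pi i(\mathbb{Z}\setminus\{0\})$, and by Step 1 $w(t\mathbf{A})$ is invertible.

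\textbf{Main obstacle.} The one step needing care is the Jordan-block claim in Step 2, namely that $\exp$ preserves block sizes and that conjugate eigenvalues of a real matrix carry matching Jordan structure.

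If I want to avoid Lemma~\ref{thm:duan-id}, I would instead exhibit a second real logarithm directly. Take the real invariant subspace associated with the pair $\pm 2\pi i k$ and, on it, add $2\pi$ times a real rotation generator that commutes with $t\mathbf{A}$. Concretely, in the real Jordan form, add the block-diagonal matrix $2\pi\,\mathrm{diag}\!\left(\begin{bmatrix}0&-1\\1&0\end{bmatrix}\otimes \mathbf{I}\right)$ on that invariant subspace. Its exponential is the identity on that subspace and it commutes with the real Jordan block there. The result is a different real matrix $\mathbf{L}\neq t\mathbf{A}$ with $e^{\mathbf{L}}=e^{t\mathbf{A}}$, which again contradicts uniqueness.
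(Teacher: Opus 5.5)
Your proof is correct. The paper gives no argument of its own for this statement. It is imported from Duan et al.\ and used only as a black box inside the proof of Lemma~\ref{thm:block-dynamics}, so your proposal supplies a proof the paper omits rather than paralleling one.

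Step~1 correctly reduces invertibility to the absence of eigenvalues of $t\mathbf{A}$ in $2\pi i(\mathbb{Z}\setminus\{0\})$. This works in either convention, since the paper's $w(t;\mathbf{A})=\sum_{k\ge 0}t^{k+1}\mathbf{A}^k/(k+1)!$ equals $t\,w(t\mathbf{A})$, with $w$ the one-argument series of Lemma~\ref{thm:duan-id}. Step~2 excludes these eigenvalues via the direction (b)$\Rightarrow$(c) of Lemma~\ref{thm:duan-id}. The two facts you flag as the main obstacle are standard and correctly applied: the exponential preserves Jordan block sizes because $\exp'\neq 0$, and conjugate eigenvalues of a real matrix carry identical Jordan structure.

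Your fallback argument is the more economical route. It needs only the real Jordan form and $e^{\mathbf{L}+\mathbf{K}}=e^{\mathbf{L}}e^{\mathbf{K}}$ for commuting $\mathbf{L},\mathbf{K}$. Since the rotation trick works for any conjugate pair $a\pm ib$ with $b\neq 0$, it in fact shows the hypothesis forces the entire spectrum of $t\mathbf{A}$ to be real. Every eigenvalue $(e^{t\lambda}-1)/(t\lambda)$ of $w(t\mathbf{A})$ is then strictly positive.

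This also shows the corollary is more than the paper needs. In Lemma~\ref{thm:block-dynamics}, Assumption~\ref{DCM-A3} already makes the eigenvalues of $r\tilde{\mathbf{A}}^{(b)}$ real. So your Step~1 alone yields invertibility of $w(r\tilde{\mathbf{A}}^{(b)})$, with no appeal to uniqueness of the logarithm.

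One small caveat: with the paper's two-argument $w(t;\mathbf{A})$, strict positivity of $t$ is genuinely required, not merely convenient. For $d=1$ and $t=0$ the hypothesis holds ($0$ is the unique real logarithm of $1$), while $w(0;\mathbf{A})=0$. Your remark that the corollary is applied with $t=r>0$ covers this.
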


We now present Lemma \ref{thm:block-dynamics}, which establishes identifiability of the block-specific parameters $(\Tilde{\mathbf A}^{(b)}, \Tilde{\mathbf c}^{(b)})$ as functions of the latent discrete observations $\{\mathbf v_0^{(b)},\mathbf v_1^{(b)},\dots,\mathbf v_{d+1}^{(b)}\}$ within block $b$.

\begin{lemma}\label{thm:block-dynamics}
Let a block $b\in\mathcal B^*$ be fixed, over which the stimulus vector is constant and equal to
$\Tilde{\mathbf u}^{(b)} \in \mathbb{R}^m$.
Consider the DCM latent trajectory $\mathbf z(t;\mathbf s^*,\theta^*)$ defined by \eqref{eqn:canonical-dcm}. Recall the definition of the block-specific matrix and affine term for block $b$:
\begin{equation*}
        \Tilde{\mathbf A}^{(b)}
    =
    \mathbf A + \sum_{i=1}^m \Tilde u_i^{(b)}\mathbf B_i \in \mathbb{R}^{d \times d},
    \qquad
    \Tilde{\mathbf c}^{(b)} 
    =
    \mathbf C\Tilde{\mathbf u}^{(b)} \in \mathbb{R}^{d}.
\end{equation*}
Under Assumptions \ref{DCM-A3} and \ref{DCM-A4}, $\Tilde{\mathbf A}^{(b)}$ and $\Tilde{\mathbf c}^{(b)}$ are uniquely determined as functions of $\{\mathbf v_0^{(b)},\mathbf v_1^{(b)},\dots,\mathbf v_{d+1}^{(b)}\}$.
\end{lemma}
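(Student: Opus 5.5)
Within block $b$, Proposition~\ref{prop:dcm_sol_trajectory} and the closed form \eqref{eq: affine_ODE_sol} show that consecutive samples obey an exact affine recursion
\begin{equation*}
\mathbf v_{j+1}^{(b)} = \Phi^{(b)}\mathbf v_j^{(b)} + \mathbf p^{(b)},\qquad \Phi^{(b)} := e^{r\Tilde{\mathbf A}^{(b)}},\quad \mathbf p^{(b)} := r\,w(r\Tilde{\mathbf A}^{(b)})\Tilde{\mathbf c}^{(b)}.
\end{equation*}
The second formula comes from writing $\int_0^r e^{s\Tilde{\mathbf A}^{(b)}}\,ds\,\Tilde{\mathbf c}^{(b)}$ as a series. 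This form avoids needing $\Tilde{\mathbf A}^{(b)}$ to be invertible. In augmented form the recursion reads $\Psi^{(b)}\mathbf X_0^{(b)} = \mathbf X_1^{(b)}$, with $\Psi^{(b)} = \begin{bmatrix}\Phi^{(b)} & \mathbf p^{(b)}\\ 0 & 1\end{bmatrix}$. By Assumption~\ref{DCM-A1} the block contains the samples $\mathbf v_0^{(b)},\dots,\mathbf v_{d+1}^{(b)}$, so $\mathbf X_1^{(b)}$ is observable.

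The plan has two steps: first recover $(\Phi^{(b)},\mathbf p^{(b)})$ from the data, then invert the map $(\Tilde{\mathbf A},\Tilde{\mathbf c})\mapsto(\Phi,\mathbf p)$.

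For the first step, Assumption~\ref{DCM-A4} says $\mathbf X_0^{(b)}$ is invertible, so $\Psi^{(b)} = \mathbf X_1^{(b)}(\mathbf X_0^{(b)})^{-1}$ is a function of the observations. Lemma~\ref{thm:duan-lemma} then gives $\Phi^{(b)}$ and $\mathbf p^{(b)}$ explicitly. The point needing care is that any competing parameter pair $(\Tilde{\mathbf A}',\Tilde{\mathbf c}')$ producing the same samples also satisfies $\Psi'\mathbf X_0^{(b)}=\mathbf X_1^{(b)}$. Hence $\Psi'=\Psi^{(b)}$, which gives $e^{r\Tilde{\mathbf A}'}=\Phi^{(b)}$ and $r\,w(r\Tilde{\mathbf A}')\Tilde{\mathbf c}'=\mathbf p^{(b)}$.

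For the second step, I would apply Lemma~\ref{thm:duan-id} to $\Phi^{(b)}$ after rescaling time, so that the equations are $e^{\mathbf M}=\Phi^{(b)}$ with $\mathbf M = r\Tilde{\mathbf A}$. This requires condition (c). By Assumption~\ref{DCM-A3}, $\Tilde{\mathbf A}^{(b)}$ has $d$ distinct real eigenvalues $\lambda_k$. The eigenvalues of $\Phi^{(b)}$ are therefore $e^{r\lambda_k}$, which are positive and real. They are distinct because $x\mapsto e^{rx}$ is injective on $\mathbb R$, so $\Phi^{(b)}$ is diagonalizable with simple eigenvalues and each Jordan block occurs once.

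Lemma~\ref{thm:duan-id}(b) then gives that $r\Tilde{\mathbf A}^{(b)}$ is the unique real logarithm of $\Phi^{(b)}$, so $\Tilde{\mathbf A}'=\Tilde{\mathbf A}^{(b)}$. Corollary~\ref{thm:duan-cor} gives that $w(r\Tilde{\mathbf A}^{(b)})$ is invertible, so $\Tilde{\mathbf c}^{(b)} = r^{-1}w(r\Tilde{\mathbf A}^{(b)})^{-1}\mathbf p^{(b)}$ is unique. Alternatively, part (a) of Lemma~\ref{thm:duan-id} gives both conclusions at once.

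The main obstacle I expect is bookkeeping the time scaling between the lemmas, which are stated for unit step, and our step $r$. I need to confirm that $w(\cdot)$ matches the integral $\int_0^1 e^{s\mathbf M}ds$ for $\mathbf M=r\Tilde{\mathbf A}$, and I need to be careful in applying Lemma~\ref{thm:duan-lemma} exactly. Its hypothesis requires a $\Psi$ with bottom row $(0,1)$, and the block form of $\Psi^{(b)}$ supplies this.
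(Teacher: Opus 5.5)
Your proposal is correct and follows essentially the same route as the paper's proof. You recover $(\Phi^{(b)},\mathbf p^{(b)})$ from $\mathbf X_1^{(b)}(\mathbf X_0^{(b)})^{-1}$ via Assumption~\ref{DCM-A4} and Lemma~\ref{thm:duan-lemma}, use Assumption~\ref{DCM-A3} to verify condition (c) of Lemma~\ref{thm:duan-id} so that $r\Tilde{\mathbf A}^{(b)}$ is the unique real logarithm of $\Phi^{(b)}$, and invert $w(r\Tilde{\mathbf A}^{(b)})$ by Corollary~\ref{thm:duan-cor} to get $\Tilde{\mathbf c}^{(b)}$ — exactly as the paper does, except that your explicit competing-pair argument ($\Psi'\mathbf X_0^{(b)}=\mathbf X_1^{(b)}$ forces $\Psi'=\Psi^{(b)}$) makes precise the uniqueness step the paper leaves implicit.
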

\begin{proof}
As the stimulus vector is constant within block $b$, for $t \in [t^{(b)}, t^{(b+1)})$, the trajectory follows the affine dynamics governed by $(\Tilde{\mathbf{A}}^{(b)},\Tilde{\mathbf{c}}^{(b)})$. Leveraging results from Duan et al\cite{duan-etal-2020}, we show that  $\Tilde{\mathbf{A}}^{(b)}$ and $\Tilde{\mathbf{c}}^{(b)}$ can be expressed as functions of $\{\mathbf v_0^{(b)},\mathbf v_1^{(b)},\dots,\mathbf v_{d+1}^{(b)}\}$.

    \textbf{Step (i) --} Using Lemma \ref{thm:duan-lemma}, we establish that $\Phi^{(b)}$ and $\mathbf{p}^{(b)}$, defined respectively as the upper-left and upper-right blocks of
    \begin{equation*}
        \text{exp}
        \left\{ r
            \begin{bmatrix}
                \Tilde{\mathbf{A}}^{(b)} & \Tilde{\mathbf{c}}^{(b)} \\
                0 & 0
            \end{bmatrix}
        \right\},
    \end{equation*} are uniquely determined by the first $d+2$ discrete observations of block $b$, $\mathbf{v}^{(b)}_0,\dots,\mathbf{v}^{(b)}_{d+1}$.

Let $\mathbf{x}^{(b)}(t) := [\mathbf{v}^{(b)}(t),1]^\top \in \mathbb{R}^{d+1}$, which gives the linear system,
    \begin{align}\label{eqn:affine-system}
        \dot{\mathbf{x}}^{(b)}(t) &= 
        \begin{bmatrix}
            \Tilde{\mathbf{A}}^{(b)} & \Tilde{\mathbf{c}}^{(b)} \\
            0 & 0
        \end{bmatrix} \mathbf{x}^{(b)}(t) = 
        \begin{bmatrix}
            \Tilde{\mathbf{A}}^{(b)} & \Tilde{\mathbf{c}}^{(b)} \\
            0 & 0
        \end{bmatrix}
        \begin{bmatrix}
            \mathbf{v}^{(b)}(t) \\
            1
        \end{bmatrix} = 
        \begin{bmatrix}
            \Tilde{\mathbf{A}}^{(b)}\mathbf{v}^{(b)}(t) + \Tilde{\mathbf{c}}^{(b)} \\
            0
        \end{bmatrix}.
    \end{align}
    In particular, the solution to system (\ref{eqn:affine-system}) is 
    \begin{equation}\label{eq: lems3-eq2}
     \mathbf{x}^{(b)}(t) =(\Psi^{(b)})^r \mathbf{x}^{(b)}(t-r) =  \hdots= (\Psi^{(b)})^t \mathbf{x}^{(b)}(0)   
    \end{equation}
    where we define
    \begin{equation*}
        \Psi^{(b)} := \exp
        \left\{
            \begin{bmatrix}
                \Tilde{\mathbf{A}}^{(b)} & \Tilde{\mathbf{c}}^{(b)} \\
                0 & 0
            \end{bmatrix}
        \right\}.
    \end{equation*}
    
    Let $\mathbf{x}_j^{(b)} := \mathbf{x}^{(b)}(t_j)$, $j \in \{0,\dots,d+1\}$. 
    Define $\mathbf{X}_{0}^{(b)}$, $\mathbf{X}_{1}^{(b)} \in \mathbb{R}^{(d+1)\times(d+1)}$ as $\mathbf{X}_{0}^{(b)} = [\mathbf{x}_{0}^{(b)},\dots,\mathbf{x}_{d}^{(b)}]$, and $\mathbf{X}_{1}^{(b)} = [\mathbf{x}_1^{(b)},\dots,\mathbf{x}_{d+1}^{(b)}]$. By \eqref{eq: lems3-eq2}, there is a recursive relation between $\mathbf{x}_{j+1}^{(b)}$ and $\mathbf{x}_{j}^{(b)}$, where $\mathbf{x}_{j+1}^{(b)} = (\Psi^{(b)})^r\mathbf{x}_{j}^{(b)}$ for each $j \in \{0,\dots,d\}$ since $t_{j+1}-t_j = r$. Then, \[\mathbf{X}_{1}^{(b)} = (\Psi^{(b)})^r\mathbf{X}_{0}^{(b)}.\] Since $(\mathbf{X}_{0}^{(b)})^{-1}$ exists by Assumption \ref{DCM-A4}, $ (\Psi^{(b)})^r = \mathbf{X}_{1}^{(b)}(\mathbf{X}_{0}^{(b)})^{-1}$ and by Lemma \ref{thm:duan-lemma}, 
    \begin{align}\label{eq: Phib^r_form}
        (\Psi^{(b)})^r= \begin{bmatrix}
            \Phi^{(b)} &\mathbf{p}^{(b)}\\
            0 &1
        \end{bmatrix}
    \end{align}
    where
    $\Phi^{(b)}$ and $ \mathbf{p}^{(b)}$ are given by 
    \begin{equation}
    \begin{aligned}
        \Phi^{(b)}&= (\mathbf{Z}_2^{(b)} - \mathbf{Z}_1^{(b)})(\mathbf{Z}_1^{(b)} - \mathbf{Z}_0^{(b)})^{-1} \\
       \mathbf{p}^{(b)} &= \mathbf{v}^{(b)}_{j+1} - \Phi^{(b)}\mathbf{v}^{(b)}_j,\,\,\,j \in \{0,\dots,d\}
    \end{aligned}
    \end{equation}
   where $\mathbf{Z}^{(b)}_i := [\mathbf{v}^{(b)}_i,\dots,\mathbf{v}^{(b)}_{i+d-1}]$ for $i\in \{0,1,2\}$. This provides a mapping from $\{\mathbf{v}^{(b)}_0,\dots,\mathbf{v}^{(b)}_{d+1}\}$ to $\Phi^{(b)}$ and $\mathbf{p}^{(b)}$.
   
    \textbf{Step (ii) --} Show that $\Tilde{\mathbf{A}}^{(b)}$ and $\Tilde{\mathbf{c}}^{(b)}$ are uniquely identified from $\Phi^{(b)}$ and $\mathbf{p}^{(b)}$, which implies identification of $\Tilde{\mathbf{A}}^{(b)}$ and $\Tilde{\mathbf{c}}^{(b)}$ from $\{\mathbf{v}^{(b)}_0,\dots,\mathbf{v}^{(b)}_{d+1}\}$.

    Following the argument of Duan et al,\cite{duan-etal-2020} 
    \begin{equation}\label{eq: trans_mat_equiv}
        \text{exp}
        \left\{ r
            \begin{bmatrix}
                \Tilde{\mathbf{A}}^{(b)} & \Tilde{\mathbf{c}}^{(b)} \\
                0 & 0
            \end{bmatrix}
        \right\} = \begin{bmatrix}
            e^{r\Tilde{\mathbf{A}}^{(b)}} & w(r;\Tilde{\mathbf{A}}^{(b)})\Tilde{\mathbf{c}}^{(b)} \\
            0 & 1
        \end{bmatrix}
    \end{equation}
    where $w(t;\mathbf{A})$ denotes the antiderivative of $e^{t\mathbf{A}}$ given by
\begin{equation}\label{def: anti_deriv_eAt}
    w(t;\mathbf{A}) = \sum_{k=0}^\infty \frac{t^{k+1}}{(k+1)!}\mathbf{A}^k, 
    \end{equation}
    so that $\frac{d}{dt}w(t;\mathbf{A}) = e^{t\mathbf{A}}$.
    
Combining \eqref{eq: Phib^r_form} and \eqref{eq: trans_mat_equiv},
    \begin{align*}
         e^{r\Tilde{\mathbf{A}}^{(b)}} &= \Phi^{(b)} = (\mathbf{Z}_2^{(b)} - \mathbf{Z}_1^{(b)})(\mathbf{Z}_1^{(b)} - \mathbf{Z}_0^{(b)})^{-1}, \nonumber\\
         w(r\Tilde{\mathbf{A}}^{(b)})(r\Tilde{\mathbf{c}}^{(b)}) &= \mathbf{p}^{(b)} = \mathbf{v}^{(b)}_{j+1} - \Phi^{(b)}\mathbf{v}^{(b)}_j, \,\, \forall j \in \{0,\dots,d\}.
    \end{align*}
    where we also use
    \begin{equation*}
        w(r; \Tilde{\mathbf{A}}^{(b)}) = \sum_{k=0}^\infty \frac{r^{k+1}}{(k+1)!}  (\Tilde{\mathbf{A}}^{(b)})^k = r \sum_{k=0}^\infty \frac{1}{(k+1)!}  (r\Tilde{\mathbf{A}}^{(b)})^k = r w(1; r\Tilde{\mathbf{A}}^{(b)}).
    \end{equation*}

    By Assumption \ref{DCM-A3}, $\Tilde{\mathbf{A}}^{(b)}$ is diagonalizable, and therefore $\Tilde{\mathbf{A}}^{(b)} = \mathbf{P} \mathbf{D}\mathbf{P}^{-1}$ for an invertible $\mathbf{P}\in \mathbb{R}^{d\times d}$ and a diagonal matrix $\mathbf{D} = \text{diag}(\lambda_1,\dots,\lambda_d)$ such that $\lambda_i \ne \lambda_j$ for $i\ne j$. Then 
    \begin{equation*}
     e^{r\Tilde{\mathbf{A}}^{(b)}} = \sum_{k=0}^\infty \frac{(r\Tilde{\mathbf{A}}^{(b)})^k}{k!} =    \mathbf{P}\sum_{k=0}^\infty \frac{(r\mathbf{D})^k}{k!} \mathbf{P}^{-1} = \mathbf{P}e^{r\mathbf{D}} \mathbf{P}^{-1}
    \end{equation*}
    where $e^{r\mathbf{D}} = \text{diag}(e^{r\lambda_1},\dots,e^{r\lambda_d})$. In particular, the eigenvalues of $e^{r\Tilde{\mathbf{A}}^{(b)}}$ are all positive real and distinct, therefore condition \textbf{(c)} in Lemma~\ref{thm:duan-id} is satisfied for $\Phi^{(b)} = e^{r\Tilde{\mathbf{A}}^{(b)}}$. Then there exists a unique real logarithm of $r\Tilde{\mathbf{A}}^{(b)} = \log(e^{r\Tilde{\mathbf{A}}^{(b)}}) = \log((\mathbf{Z}_2^{(b)} - \mathbf{Z}_1^{(b)})(\mathbf{Z}_1^{(b)} - \mathbf{Z}_0^{(b)})^{-1})$. Moreover, by Corollary \ref{thm:duan-cor}, if $r\Tilde{\mathbf{A}}^{(b)}$ is the unique logarithm of $e^{r\Tilde{\mathbf{A}}^{(b)}}$, then $w(r\Tilde{\mathbf{A}}^{(b)})$ is invertible. Therefore, 
    \begin{align*}
        \Tilde{\mathbf{A}}^{(b)} &= r^{-1}\log((\mathbf{Z}_2^{(b)} - \mathbf{Z}_1^{(b)})(\mathbf{Z}_1^{(b)} - \mathbf{Z}_0^{(b)})^{-1})\\
        \Tilde{\mathbf{c}}^{(b)} &= r^{-1} w(r\Tilde{\mathbf{A}}^{(b)})^{-1}(\mathbf{v}^{(b)}_{1} - \Phi^{(b)}\mathbf{v}^{(b)}_0).
    \end{align*}
    where $\log(\mathbf{A})$ denotes a matrix logarithm of the matrix $\mathbf{A}$. Thus, $\Tilde{\mathbf A}^{(b)}$ and $\Tilde{\mathbf c}^{(b)}$ are functions of $\{\mathbf v_0^{(b)},\dots,\mathbf v_{d+1}^{(b)}\}$. 

\end{proof}

Now, we use Lemma \ref{thm:block-dynamics} to prove Theorem \ref{thm:canonical-dcm}.

\begin{theorem}
    
\end{theorem}
\begin{proof}
Let $\mathbf z_j := \mathbf{z}(t_j;\mathbf{s}^*,\theta^*)$ for $j=1,\dots,n$.
We prove the contrapositive: if there exist $\mathbf{s}'$ and $\theta'$ such that
$\mathbf{z}(t_j;\mathbf{s}',\theta') = \mathbf{z}(t_j;\mathbf{s}^*,\theta^*)$ for all $j=1,\dots,n$, then $\mathbf{s}'=\mathbf{s}^*$ and $\theta'=\theta^*$. Thus, it suffices to show that $(\mathbf{s}^*,\theta^*)$ are uniquely determined by the discrete trajectory $\{\mathbf z_j\}_{j=1}^n$. We show that $\mathbf{s}^*$ and $\theta^* = \{\mathbf{A},\mathbf{B}_1,\dots,\mathbf{B}_m,\mathbf{C}\}$ can be expressed as functions of $\{\mathbf z_j\}_{j=1}^n$.

By Assumption \ref{DCM-A2}, there exist $m+1$ blocks
$\mathcal B^*=\{b_1,\dots,b_{m+1}\}$ with each block $b_k$ satisfying Assumption \ref{DCM-A1}, such that
$$
\Tilde{\mathbf U}^* =
\begin{bmatrix}
1 & \Tilde{\mathbf u}^{(b_1)\top}\\
\vdots & \vdots\\
1 & \Tilde{\mathbf u}^{(b_{m+1})\top}
\end{bmatrix}
$$
is invertible.

We let $\mathbf v_j^{(b)} := \mathbf z(t^{(b)} + jr)$ for $j=0,\dots,d+1$ denote the first $d+2$ latent states within block $b$, as defined in Lemma \ref{thm:block-dynamics}. Then, by Lemma \ref{thm:block-dynamics}, for each $b_k\in\mathcal B^*$,
$\Tilde{\mathbf A}^{(b_k)}$ and $\Tilde{\mathbf c}^{(b_k)}$ are functions of the first $d+2$ observations $\{\mathbf v_0^{(b_k)},\dots,\mathbf v_{d+1}^{(b_k)}\}$. 

For each selected block $b_k$,
$$
\Tilde{\mathbf A}^{(b_k)} = \mathbf A + \sum_{i=1}^m \Tilde u_i^{(b_k)}\mathbf B_i.
$$
Stacking these equations yields
\begin{align*}
    \begin{bmatrix}
\Tilde{\mathbf A}^{(b_1)}\\
\vdots\\
\Tilde{\mathbf A}^{(b_{m+1})}
\end{bmatrix}
=
\begin{bmatrix}
\mathbf I_d & \Tilde u_1^{(b_1)}\mathbf I_d & \cdots & \Tilde u_m^{(b_1)}\mathbf I_d\\
\vdots & \vdots & & \vdots\\
\mathbf I_d & \Tilde u_1^{(b_{m+1})}\mathbf I_d & \cdots & \Tilde u_m^{(b_{m+1})}\mathbf I_d
\end{bmatrix}
\begin{bmatrix}
\mathbf A\\
\mathbf B_1\\
\vdots\\
\mathbf B_m
\end{bmatrix} .
\end{align*}
Since $\Tilde{\mathbf U}^*$ and $\mathbf{I}_d$ are invertible, the block matrix 
\begin{equation*}
   \Tilde{\mathbf U}^* \otimes  \mathbf{I}_d=\begin{bmatrix}
\mathbf I_d & \Tilde u_1^{(b_1)}\mathbf I_d & \cdots & \Tilde u_m^{(b_1)}\mathbf I_d\\
\vdots & \vdots & & \vdots\\
\mathbf I_d & \Tilde u_1^{(b_{m+1})}\mathbf I_d & \cdots & \Tilde u_m^{(b_{m+1})}\mathbf I_d
\end{bmatrix}
\end{equation*}
is invertible. Therefore,
\begin{equation*}
    \begin{bmatrix}
\mathbf A\\
\mathbf B_1\\
\vdots\\
\mathbf B_m
\end{bmatrix}
=
\begin{bmatrix}
\mathbf I_d & \Tilde u_1^{(b_1)}\mathbf I_d & \cdots & \Tilde u_m^{(b_1)}\mathbf I_d\\
\vdots & \vdots & & \vdots\\
\mathbf I_d & \Tilde u_1^{(b_{m+1})}\mathbf I_d & \cdots & \Tilde u_m^{(b_{m+1})}\mathbf I_d
\end{bmatrix}^{-1}
\begin{bmatrix}
\Tilde{\mathbf A}^{(b_1)}\\
\Tilde{\mathbf A}^{(b_2)}\\
\vdots\\
\Tilde{\mathbf A}^{(b_{m+1})}
\end{bmatrix}.
\end{equation*}
In particular, $\mathbf{A}$ and $\mathbf{B}_1,\dots,\mathbf{B}_m$ are functions of observations in the discrete trajectory $\{\mathbf{z}(t^{(b)}+jr); \,j=0,\dots,d+1,\,b \in \mathcal{B}^*\}$.

Similarly, writing $\mathbf C=[\mathbf c_1,\dots,\mathbf c_m]$, where $\mathbf c_i\in\mathbb R^d$ is the $i$th column of $\mathbf C$, we have
$$
\Tilde{\mathbf c}^{(b_k)}
=
\sum_{i=1}^m \Tilde u_i^{(b_k)}\mathbf c_i,
\qquad \text{for } k=1,\dots,m+1.
$$
Stacking over the selected blocks gives
\begin{equation*}
    \begin{bmatrix}
\Tilde{\mathbf c}^{(b_1)}\\
\Tilde{\mathbf c}^{(b_2)}\\
\vdots\\
\Tilde{\mathbf c}^{(b_{m+1})}
\end{bmatrix}
=
\begin{bmatrix}
\mathbf I_d & \Tilde u_1^{(b_1)}\mathbf I_d & \cdots & \Tilde u_m^{(b_1)}\mathbf I_d\\
\mathbf I_d & \Tilde u_1^{(b_2)}\mathbf I_d & \cdots & \Tilde u_m^{(b_2)}\mathbf I_d\\
\vdots & \vdots & & \vdots\\
\mathbf I_d & \Tilde u_1^{(b_{m+1})}\mathbf I_d & \cdots & \Tilde u_m^{(b_{m+1})}\mathbf I_d
\end{bmatrix}
\begin{bmatrix}
\mathbf 0\\
\mathbf c_1\\
\vdots\\
\mathbf c_m
\end{bmatrix}.
\end{equation*}
Therefore,
\begin{equation*}
    \begin{bmatrix}
\mathbf 0\\
\mathbf c_1\\
\vdots\\
\mathbf c_m
\end{bmatrix}
=
\begin{bmatrix}
\mathbf I_d & \Tilde u_1^{(b_1)}\mathbf I_d & \cdots & \Tilde u_m^{(b_1)}\mathbf I_d\\
\mathbf I_d & \Tilde u_1^{(b_2)}\mathbf I_d & \cdots & \Tilde u_m^{(b_2)}\mathbf I_d\\
\vdots & \vdots & & \vdots\\
\mathbf I_d & \Tilde u_1^{(b_{m+1})}\mathbf I_d & \cdots & \Tilde u_m^{(b_{m+1})}\mathbf I_d
\end{bmatrix}^{-1}
\begin{bmatrix}
\Tilde{\mathbf c}^{(b_1)}\\
\Tilde{\mathbf c}^{(b_2)}\\
\vdots\\
\Tilde{\mathbf c}^{(b_{m+1})}
\end{bmatrix},
\end{equation*}
and $\mathbf{C}$ is uniquely determined by $\{\mathbf{z}(t^{(b)}+jr); \,j=0,\dots,d+1,\,b \in \mathcal{B}^*\}$.


Finally, we identify the initial state $\mathbf{s}^*$. Let $\Tilde{\mathbf{u}}_{\rm init}$ denote the constant stimulus vector during $[0,r)$. Typically $\Tilde{\mathbf{u}}_{\rm init} =0$, corresponding to no stimulus before the first TR, or $\Tilde{\mathbf{u}}_{\rm init} =\Tilde{\mathbf{u}}^{(0)}$, corresponding
to the stimulus during the first block. During this interval, the system evolves according to $\Tilde{\mathbf{A}}_{\rm init} = \mathbf{A} + \sum_{i=1}^m \Tilde{u}_{\textrm{init},i} \mathbf{B}_i$ and $\Tilde{\mathbf{c}}_{\rm init} = \mathbf{C}\Tilde{\mathbf{u}}_{\rm init}$. By the closed-form solution of the affine ODE, 
\begin{align*}
    \mathbf{z}(r) = e^{r \Tilde{\mathbf{A}}_{\rm init}}\mathbf{s}^*+w(r; \Tilde{\mathbf{A}}_{\rm init})\Tilde{\mathbf{c}}_{\rm init}.
\end{align*}
That is,
\begin{align*}
    \mathbf{s}^* = e^{-r \Tilde{\mathbf{A}}_{\rm init}}\{ \mathbf{z}(r)-w(r; \Tilde{\mathbf{A}}_{\rm init})\Tilde{\mathbf{c}}_{\rm init}\}.
\end{align*}
Since $\mathbf{A}, \mathbf{B}_1,\dots,\mathbf{B}_m$, and $\mathbf{C}$ are functions of observations $\{\mathbf{z}(t^{(b)}+jr); \,j=0,\dots,d+1,\,b \in \mathcal{B}^*\}$, so are $\Tilde{\mathbf{A}}_{\rm init}$ and $\Tilde{\mathbf{c}}_{\rm init}$. It follows that $\mathbf{s}^*$ is a function of $\{\mathbf{z}(r)\}\cup\{\mathbf{z}(t^{(b)}+jr); \,j=0,\dots,d+1,\,b \in \mathcal{B}^*\}$.

Therefore, $(\mathbf{s}^*,\theta^*)$ are uniquely determined by the discrete trajectory $\{\mathbf{z}(rj); \,\,j=1,\dots,n\}$, completing the proof.
\end{proof}

\subsection{Proof of Theorem \ref{thm:one-to-one}}
\begin{theorem}
    
\end{theorem}
\begin{proof}

We begin by providing a more explicit definition of the discrete time convolution \eqref{eqn:discrete-convolve} in the main text. For a function $f:[0,\infty)\to \mathbb{R}$, we define $f[j] = f(t_j) = f(rj)$ as the sampled vector with sampling interval $r$ for $j\in \mathbb{N}$. We model the mean $\mu[j] = \mu(t_j)$ of the $j$th observation $\mathbf{y}[j] = \mathbf{y}(t_j)$ as 
\begin{equation}\label{eqn:discrete-convolve-full}
     \mu[j] = (\mathbf{z} * h)[j] = \sum_{i=-\infty}^\infty h[i]\mathbf{z}[j-i] = \sum_{i=0}^j h[i]\mathbf{z}[j-i],
 \end{equation}
since $h[i]=h(ri) = 0$ when $i<0$ and $\mathbf{z}[j-i]=0$ when $i>j$, as neither the neural signal nor the canonical HRF have nonzero values prior to the start of the scan.

From (\ref{eqn:discrete-convolve-full}), we have $\mu(t_j) = \mu[j] = \sum_{i=0}^j h[i]\mathbf{z}[j-i]$. Explicitly writing out the first few terms of $\mu[j]$ gives,
\begin{align*}
    \mu[1] &= h[0]\mathbf{z}[1] + h[1]\mathbf{z}[0], \\
    \mu[2] &= h[0]\mathbf{z}[2] + h[1]\mathbf{z}[1] + h[2]\mathbf{z}[0], \\
    \mu[3] &= h[0]\mathbf{z}[3] + h[1]\mathbf{z}[2] + h[2]\mathbf{z}[1] + h[3]\mathbf{z}[0], \\
    &\vdots \\
    \mu[n] &= h[0]\mathbf{z}[n] + h[1]\mathbf{z}[n-1] + \dots + h[n-1]\mathbf{z}[1] + h[n]\mathbf{z}[0].
\end{align*}
Note that $h[0] = h(t_0) = h(0) = 0$ from the definition of the canonical HRF. After eliminating this term, we can represent the system of equations in matrix form:
\begin{equation*}
    \begin{bmatrix}
        \mu[1] \\ \mu[2] \\ \mu[3] \\ \vdots \\ \mu[n]
    \end{bmatrix} =
    \begin{bmatrix}
        h[1]   & 0      & 0      & \cdots & 0   \\
        h[2]   & h[1]   & 0      & \cdots & 0   \\
        h[3]   & h[2]   & h[1]   & \ddots & 0   \\
        \vdots & \vdots & \vdots & \ddots & 0   \\
        h[n] & h[n-1] & h[n-2] & \dots & h[1]
    \end{bmatrix}
    \begin{bmatrix}
        \mathbf{z}[0] \\ \mathbf{z}[1] \\ \mathbf{z}[2] \\ \vdots \\ \mathbf{z}[n-1]
    \end{bmatrix},
\end{equation*}
or, alternatively, $\Tilde{\mu} = \mathbf{H}\Tilde{\mathbf{z}}$ where $\Tilde{\mu} := [\mu[1],\dots,\mu[n]]^\top$ and $\Tilde{\mathbf{z}}:=[\mathbf{z}[0],\dots,\mathbf{z}[n-1]]^\top$. The matrix $\mathbf{H} \in \mathbb{R}^{n \times n}$ is a lower triangular matrix, and $\det(\mathbf{H}) = \prod_{j=1}^n h[1] = (h[1])^n$. If $h[1] \neq 0$, then $\det(\mathbf{H}) \neq 0$, and $\mathbf{H}$ is invertible. Thus, $\Tilde{\mathbf{z}} = \mathbf{H}^{-1}\Tilde{\mu}$, so for a given $\Tilde{\mathbf{z}}$, there is a one-to-one mapping from $\Tilde{\mathbf{z}}$ to $\Tilde{\mu}$ when $\mathbf{H}$ is invertible.

Finally, since $\mathbb{E}[\mathbf{y}(t_j)] = \mu(t_j) + \beta$ with $\beta$ constant across $j$, the mapping from $\Tilde{\mu}_s$ to $\{\mathbb{E}[\mathbf{y}(t_j)]\}$ is a translation and therefore injective. Hence, the one-to-one correspondence between $\Tilde{\mathbf{z}}$ and $\Tilde{\mu}$ extends to a one-to-one mapping between $\Tilde{\mathbf{z}}$ and $\{\mathbb{E}[\mathbf{y}(t_j)]\}$. Moreover, $\beta$ is uniquely determined from $\mathbb{E}[\mathbf{y}(t_j)] - \mu(t_j)$.
\end{proof}

\section{Group-Level Bayesian Hierarchical Model Details}\label{appx:grp-mdl}

For each subject $k = 1,\dots,K$, let $\hat{\theta}_{z,k} \in \mathbb{R}^{p_{\theta_z}}$ denote the estimated subject-level neural parameter posterior means, $\mathbf{S}_k \in \mathbb{R}^{{p_{\theta_z}}\times {p_{\theta_z}}}$ the estimated within-subject posterior covariance, and $\mathbf{b}_k \in \mathbb{R}^{q_s}$ the standardized and/or effect-coded subject-level covariates. We define the mean structure:
\begin{equation*}
    \eta_k = \boldsymbol{\alpha} + \Theta\mathbf{b}_k,
\end{equation*}
where $\boldsymbol{\alpha} \in \mathbb{R}^{p_{\theta_z}}$ are group-level intercepts and $\Theta \in \mathbb{R}^{{p_{\theta_z}}\times q_s}$ are regression coefficients. We standardize continuous covariates and effect-code categorical covariates because the interpretation of the group-level intercepts $\boldsymbol{\alpha}$ is then in terms of the ``average subject'' and is not specific to particular reference level(s). The hierarchical model can be written as,
\begin{align*}
\theta_{z,k} \mid \boldsymbol{\alpha}, \Theta, \mathbf{T} 
    &\sim \text{N}_{p_{\theta_z}}\big(\eta_k,\; \mathbf{T}\big), \\
\hat{\theta}_{z,k} \mid \theta_{z,k}
    &\sim \text{N}_{p_{\theta_z}}\big(\theta_{z,k},\; \mathbf{S}_k\big),
\end{align*}
where the latent vectors $\theta_{z,k}$ represent the true (unobserved) subject-level neural parameters, and $\mathbf{T} \in \mathbb{R}^{{p_{\theta_z}}\times {p_{\theta_z}}}$ is the between-subject covariance matrix describing heterogeneity across subjects. 

Because both levels are multivariate normal, the latent $\theta_{z,k}$ can be integrated out analytically:
\begin{align*}
p(\hat{\theta}_{z,k} \mid \boldsymbol{\alpha}, \Theta, \mathbf{T})
    &= \int p(\hat{\theta}_{z,k} \mid \theta_{z,k})\, p(\theta_{z,k} \mid \boldsymbol{\alpha}, \Theta, \mathbf{T})\, d\theta_{z,k} \\
    &= \int 
        \text{N}_{p_{\theta_z}}(\hat{\theta}_{z,k} \mid \theta_{z,k}, \mathbf{S}_k)\;
        \text{N}_{p_{\theta_z}}(\theta_{z,k} \mid \eta_k, \mathbf{T})\;
        d\theta_{z,k} \\
    &= \text{N}_{p_{\theta_z}}\big(\hat{\theta}_{z,k} \mid \eta_k,\; \mathbf{T} + \mathbf{S}_k\big).
\end{align*}
Thus, the marginalized likelihood used for estimation is,
\begin{equation*}
    \hat{\theta}_{z,k} \sim \text{N}_{p_{\theta_z}}\big(\eta_k,\; \mathbf{T} + \mathbf{S}_k\big),
\end{equation*}
where the total covariance is the sum of the between-subject variability $\mathbf{T}$ and the within-subject uncertainty $\mathbf{S}_k$.
This form is algebraically equivalent to the hierarchical model above, but is computationally more efficient because the $\theta_{z,k}$ are integrated out. The between-subject covariance $\mathbf{T}$ is expressed through a Cholesky factorization with an LKJ prior \cite{lewandowski-etal-2009}:
$$
\mathbf{T} = \mathbf{L}_\tau \mathbf{L}_\tau^\top, 
\quad 
\mathbf{L}_\tau = \mathrm{diag}(\tau) \, \mathbf{L}_{\mathrm{corr}},
$$
where $\tau \in \mathbb{R}_{>0}^{p_{\theta_z}}$ are between-subject standard deviations and $\mathbf{L}_{\mathrm{corr}}$ is the Cholesky factor of a correlation matrix capturing correlations among parameters. Priors are specified as,
\begin{align*}
\boldsymbol{\alpha} &\sim \text{N}_{p_{\theta_z}}(\mathbf{0}, \mathbf{I}), \\
\mathrm{vec}(\Theta) &\sim \text{N}_{{p_{\theta_z}}q_s}(\mathbf{0}, \mathbf{I}),\\
\tau &\sim \text{N}^+_{p_{\theta_z}}(\mathbf{0}, \mathbf{I}),\\
\mathbf{L}_{\mathrm{corr}} &\sim \mathrm{LKJ}(2),
\end{align*}
where $\text{N}^+$ denotes a half-normal distribution. Similar to CDCM, we use NUTS \cite{hoffman-gelman-2014} for posterior distribution sampling of group-level parameters with initial values from the output of the Pathfinder algorithm.\cite{zhang-etal-2022}

\section{HCP Social Cognition Task Simulation Study}\label{appx:hcp-sim}

Comparisons of group-level results from the HCP social cognition task indicate agreement between CDCM and SPM, particularly in the signs of estimated parameters, with one notable exception: the driving input from the All Motion stimulus to the pSTS. For this parameter, CDCM consistently estimates a negative effect, whereas SPM estimates a positive effect. As the ground truth is unknown in real data, we conduct a follow-up simulation study to assess which method more accurately recovers the underlying parameters under the specified model.

In this simulation study, we create four simulated datasets of the same size as the HCP data (100 subjects $\times$ 2 sessions $\times$ 2 hemispheres), and estimate the single subject DCMs for all datasets using both CDCM and SPM. The results are summarized at the group-level for each of the datasets in the same way as they are for the real data. From the real data, we can treat the group-level posterior means as the true set of neural parameters, and simulate data from those values. The four simulated datasets arise from the $2 \times 2$ combinations of true values (CDCM group-level as the truth, and SPM group-level as the truth) and data-generating models (CDCM DGM and SPM DGM). Within each dataset, the true values we use correspond to the posterior means for that session and hemisphere combination. As an example, for the dataset using CDCM as the truth and SPM as the DGM, then for LR phase encoding and the left hemisphere, the true values used are the CDCM group-level posterior means from the LR phase encoding, mask L panel of Figure \ref{fig:group-alpha-comparison}, and the simulated data are generated from those values using SPM. 

We also retain the set of subject IDs such that, in each dataset, we use the $\mathbf{U}$ matrix for that subject and session, just as in the real data. For the two datasets using SPM as the data-generating model, we also use the individual estimated hemodynamic parameters from the Balloon model for that subject, session, and hemisphere in the real data. Since there are no hemodynamic parameters for CDCM, this step is not necessary for the other two datasets. After simulating the true signals, we add Gaussian noise with a target SNR of 1.68, just as in our earlier simulations. Similar to Section \ref{sec:balloon-sim}, the simulated signals for all datasets have a range close to four. To ensure that both methods are fit to identical data on a scale comparable to the real data, the additive Gaussian noise is reduced when necessary to prevent the data range from exceeding four and triggering SPM’s internal rescaling. As a result, the noise is more constrained in V5 than in pSTS, leading to a higher average SNR in V5.

The convergence criterion for each estimation procedure is identical as described for the real data. For CDCM, there are five chains total across the four datasets that do not converge initially and require an increase in the target acceptance probability from $0.9$ to either 0.95 (3 chains) or 0.99 (2 chains) to converge according to a multivariate ESS satisfying $\alpha_{ESS} = \varepsilon_{ESS} = 0.05$. These five chains all come from the two datasets using SPM's estimates as the truth. SPM's VL optimization converges in all cases within 512 iterations. 

The simulation results are shown in Table \ref{tab:sensitivity}, with the left three columns defining the session and hemisphere condition, the true set of values, and the DGM, followed by the results from each estimation procedure. When summarized at the group-level, both estimation procedures tend to be overconfident, with narrow HPD intervals that often miss the true value. We still observe this in cases where the truth, DGM, and estimation procedure all match. 

\begin{table}[t!]
\centering
\caption{Simulated HCP dataset group-level results. Simulated subjects' effective connectivity networks are estimated using both estimation procedures, and then we estimate the group-level parameters controlling for subject-level covariates. From the group-level results, we report the proportion of correct sign $(+/-)$ matches between the truth and the posterior mean estimates, as well as the mean distance to the true values. Both methods have a large proportion of correct sign matches, and similar average distances to the true values.}
\begin{tabular}{ccc|cc|cc}
\hline
\textbf{Condition} & \textbf{Truth} & \textbf{DGM} & \multicolumn{2}{c|}{\textbf{Correct Sign}} & \multicolumn{2}{c}{\textbf{Mean Dist. to Truth}} \\
 &  &  & \textbf{CDCM} & \textbf{SPM} & \textbf{CDCM} & \textbf{SPM} \\ \hline
LR-L & CDCM & CDCM & 1.0 & 0.8 & 0.113 & 0.286 \\
LR-L & CDCM & SPM & 1.0 & 0.9 & 0.123 & 0.167 \\
LR-L & SPM & CDCM & 0.7 & 0.9 & 0.183 & 0.048 \\
LR-L & SPM & SPM & 0.8 & 1.0 & 0.245 & 0.046 \\ \hline
LR-R & CDCM & CDCM & 1.0 & 0.7 & 0.073 & 0.386 \\
LR-R & CDCM & SPM & 1.0 & 0.8 & 0.106 & 0.292 \\
LR-R & SPM & CDCM & 1.0 & 0.7 & 0.129 & 0.131 \\
LR-R & SPM & SPM & 0.8 & 0.9 & 0.228 & 0.105 \\ \hline
RL-L & CDCM & CDCM & 0.9 & 0.6 & 0.117 & 0.196 \\
RL-L & CDCM & SPM & 0.9 & 0.8 & 0.113 & 0.113 \\
RL-L & SPM & CDCM & 0.6 & 0.7 & 0.271 & 0.131 \\
RL-L & SPM & SPM & 0.6 & 0.9 & 0.377 & 0.043 \\ \hline
RL-R & CDCM & CDCM & 0.8 & 0.5 & 0.124 & 0.434 \\
RL-R & CDCM & SPM & 0.8 & 0.7 & 0.198 & 0.263 \\
RL-R & SPM & CDCM & 0.8 & 0.9 & 0.396 & 0.238 \\
RL-R & SPM & SPM & 0.8 & 1.0 & 0.261 & 0.093 \\ \hline
\end{tabular}
\label{tab:sensitivity}
\end{table}

This apparent lack of coverage is likely driven in part by the structure of the simulation design. Within each dataset, the 100 subjects are simulated as repeated noisy realizations of the same underlying parameter vector. This provides a controlled setting that reflects the structure of the real data while limiting between-subject variability, allowing the group-level model to efficiently pool information across observations. At the same time, the group-level analysis retains subject-level covariates as in the real data, even though the simulated parameter values do not vary systematically with those covariates. Together, these features lead to relatively narrow group-level intervals, so coverage is not the most informative metric for this comparison. Instead, we focus on the proportion of correct sign matches to the truth, as well as the mean distance to the truth, given that the sign difference between CDCM and SPM for the second estimate in $\hat{\boldsymbol{\alpha}}_C$ is what prompted this simulation study.

Even though both methods exhibit poor coverage, they both estimate the correct sign of the parameter the majority of the time, and do so with similar magnitude, as the mean distances between the true values and estimated values are small. This is the case even when considering the typical scale of the neural parameters in DCM. The results in Table \ref{tab:sensitivity} also show that the data-generating mechanisms matter less for simulation performance than the true parameter set used to simulate the data, as there is a larger drop in parameter recovery when the true values do not match the estimation procedure than when the DGM does not match the estimation procedure. For a more visual representation of the results shown in Table \ref{tab:sensitivity}, we refer the reader to Supplementary Material \ref{appx:fig-tab}, where results from each simulated dataset are plotted alongside the true values in the same fashion as in Figure \ref{fig:group-alpha-comparison}.

Table \ref{tab:sensitivity} shows the overall simulation results, which demonstrate that both methods, even under misspecification and a different set of parameters than originally estimated for the real data, do a reasonable job of estimating values close to the truth, even if the intervals do not capture the truth. Regarding the original motivation for the simulation, the difference in direction for the driving input of All Motion into the pSTS, CDCM estimates the correct sign for this parameter every time, whereas SPM only estimates the correct sign 68.75\% of the time. However, CDCM’s proportion is nearly 75\%, as the posterior mean estimates for this input are only marginally positive for the dataset generated using SPM and DGM, resulting in intervals that narrowly exclude zero. Overall, the simulation results indicate that both models provide reasonable estimates of the network parameters across a range of conditions. In the real data analysis, however, the direction of the driving input from All Motion to the pSTS cannot be determined conclusively; accordingly, we do not attempt to interpret the biological significance of this parameter.

\section{Supplementary Figures and Tables}\label{appx:fig-tab}

Here, we present several supplementary figures and tables pertaining to the simulation in Section \ref{sec:balloon-sim}, and the HCP real data application from Section \ref{sec:hcp}.

\begin{figure}[ht!]
    \centering
    \includegraphics[width=0.9\linewidth]{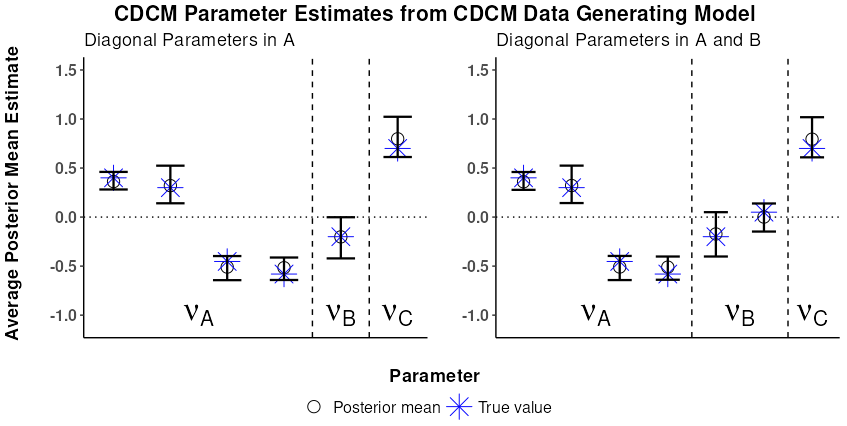}
    \caption{Average posterior mean estimates and 95\% HPD intervals across 50 simulations using CDCM where data are also generated using CDCM. The left panel includes diagonal parameters in $\mathbf{A}$ only (positions three and four), whereas the right includes diagonal parameters in $\mathbf{A}$ (positions three and four) as well as one diagonal parameter in $\mathbf{B}$ (position six). Intervals and true values for the diagonal parameters in $\mathbf{A}$ have been reparameterized. Here, like in the simulation from Section \ref{sec:comp-sim}, CDCM recovers all parameters in both scenarios. As mentioned in the main text, the intervals are slightly overcovering compared to that of SPM.}
\end{figure}

\begin{figure}[t!]
    \centering
    \includegraphics[width=0.9\linewidth]{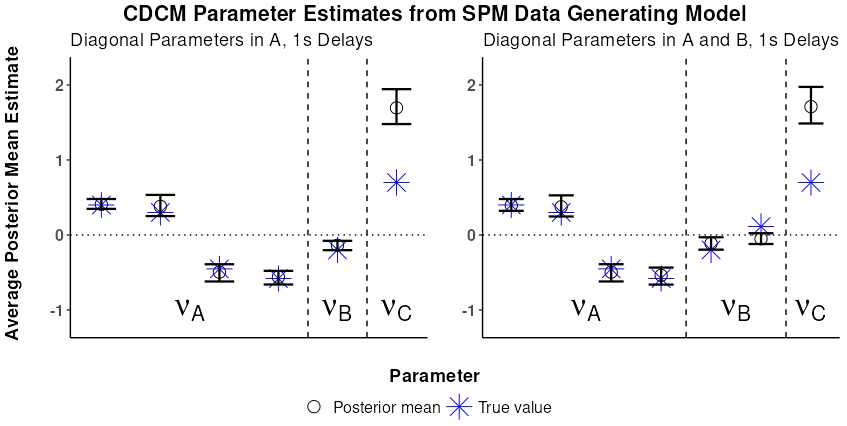}
    \caption{Average posterior mean estimates and 95\% HPD intervals across 50 simulations using CDCM where data are generated using SPM with 1 s signal delays. The left panel includes diagonal parameters in $\mathbf{A}$ only (positions three and four), whereas the right includes diagonal parameters in $\mathbf{A}$ (positions three and four) as well as one diagonal parameter in $\mathbf{B}$ (position six). Intervals and true values for the diagonal parameters in $\mathbf{A}$ have been reparameterized. The results with delays do not differ much from the results shown for CDCM in Figure \ref{fig:misspec-result}.}
\end{figure}

\begin{table}[b!]
\centering
\caption{Summary of NUTS sampler hyperparameter tuning changes for the HCP social cognition task analysis. For these cases, the CDCM default settings of a target acceptance probability of 0.9 and 5,000 warm-up iterations resulted in either divergence or maximum tree depth warnings.}
\resizebox{\textwidth}{!}{
\begin{tabular}{ccccc}
\hline
\textbf{Subject ID} & \textbf{\begin{tabular}[c]{@{}c@{}}Session \\ (Phase Encoding)\end{tabular}} & \textbf{Hemisphere} & \textbf{\begin{tabular}[c]{@{}c@{}}Target Acceptance\\ Probability\end{tabular}} & \textbf{\begin{tabular}[c]{@{}c@{}}Warm-up \\ Iterations\end{tabular}} \\ \hline
101107 & LR & L & 0.80 &  \\
118730 & RL & R & 0.99 &  \\
131217 & LR & L & 0.80 & 8,000 \\
149337 & RL & L & 0.95 &  \\
149539 & LR & R & 0.99 & 8,000 \\
154734 & LR & L & 0.80 & 8,000 \\
162733 & LR & L & 0.95 & 8,000 \\
163129 & RL & L & 0.80 & 8,000 \\
189450 & RL & R & 0.95 &  \\
245333 & RL & L & 0.99 &  \\
414229 & RL & R & 0.99 &  \\ \hline
\end{tabular}}
\end{table}

\begin{figure}[ht!]
    \centering
    \includegraphics[width=0.9\linewidth]{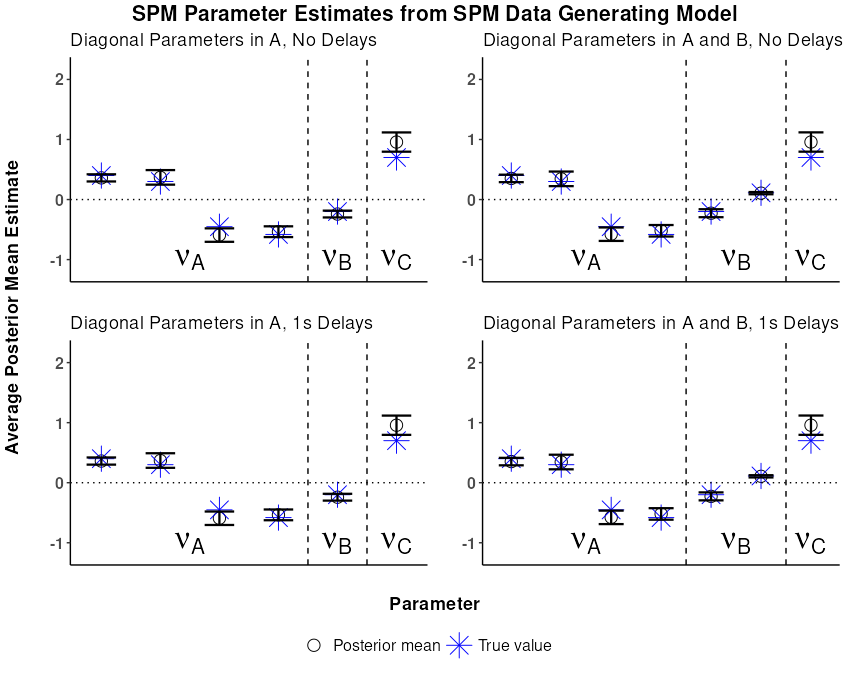}
    \caption{Average posterior mean estimates and 95\% HPD intervals across 50 simulations using SPM where data are also generated using SPM. The top two panels do not incorporate signal delays between regions and stimuli, and the bottom two incorporate 1s delays. The left two panels include diagonal parameters in $\mathbf{A}$ only (positions three and four), whereas the right two include diagonal parameters in $\mathbf{A}$ (positions three and four) as well as one diagonal parameter in $\mathbf{B}$ (position six). Intervals and true values for the diagonal parameters in $\mathbf{A}$ and $\mathbf{B}$ have been reparameterized. Most notably, SPM appears to be overly confident in the parameter estimates, and does not cover all parameters. This is especially true for the value in $\mathbf{C}$, interestingly, in which SPM consistently overestimates the parameter.}
\end{figure}

\begin{figure}[ht!]
    \centering
    \includegraphics[width=\linewidth]{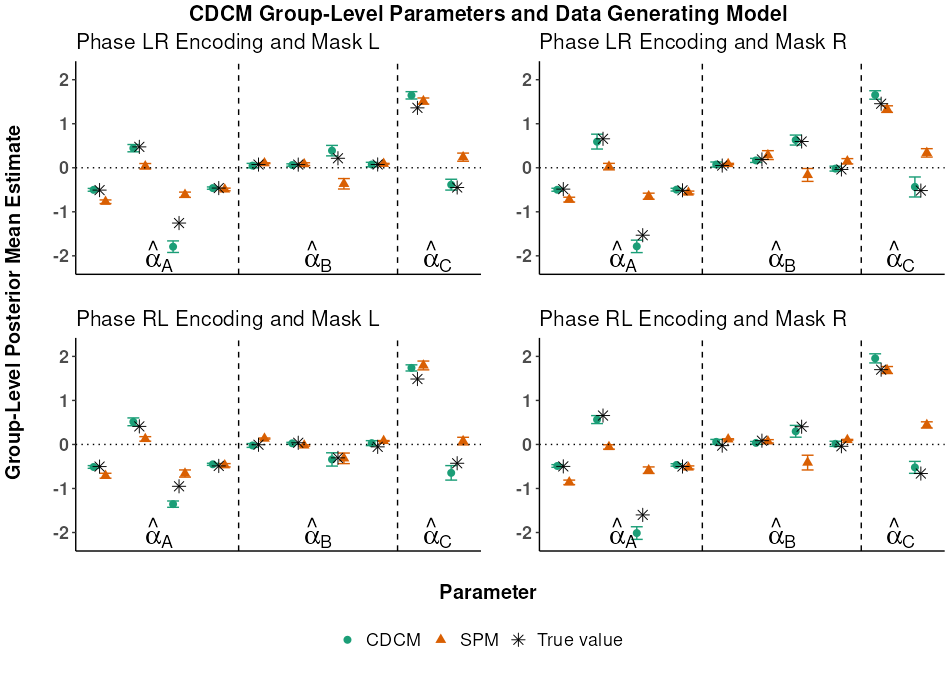}
    \caption{Group-level posterior mean estimates and 95\% HPD interval comparison between CDCM and SPM across all phase encoding and hemisphere combinations for the HCP follow-up simulation study. The data are simulated using CDCM's group-level posterior means shown in Figure \ref{fig:group-alpha-comparison}, and CDCM's data generating model; the true values are denoted by the stars. Posterior means and intervals for the diagonal parameters have been reparameterized according to the respective modeling definition to facilitate direct comparison. Diagonal parameters are in positions one, four, five, and eight within each panel. Unsurprisingly, CDCM outperforms SPM in terms of recovering the true parameters, but both methods are consistent in many cases. For the key parameter of interest, the second value in $\mathbf{C}$, SPM continues to estimate a positive value, whereas CDCM correctly estimates a negative value.}
\end{figure}

\begin{figure}[ht!]
    \centering
    \includegraphics[width=\linewidth]{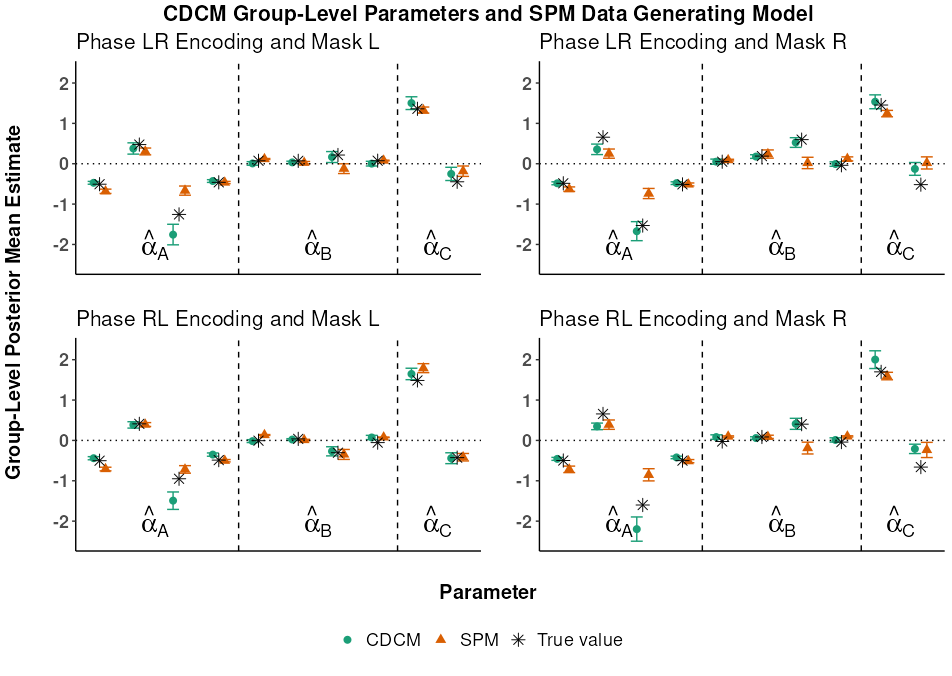}
    \caption{Group-level posterior mean estimates and 95\% HPD interval comparison between CDCM and SPM across all phase encoding and hemisphere combinations for the HCP follow-up simulation study. The data are simulated using CDCM's group-level posterior means shown in Figure \ref{fig:group-alpha-comparison}, and SPM's data generating model; the true values are denoted by the stars. Posterior means and intervals for the diagonal parameters have been reparameterized according to the respective modeling definition to facilitate direct comparison. Diagonal parameters are in positions one, four, five, and eight within each panel. Even with the mismatch between true values and DGM from the real data, both models estimate values very close to the truth in nearly all cases. Concerning the key parameter of interest in $\mathbf{C}$, both estimation procedures estimate a correct negative value, or at the very least, a value whose HPD interval includes zero.}
\end{figure}

\begin{figure}[ht!]
    \centering
    \includegraphics[width=\linewidth]{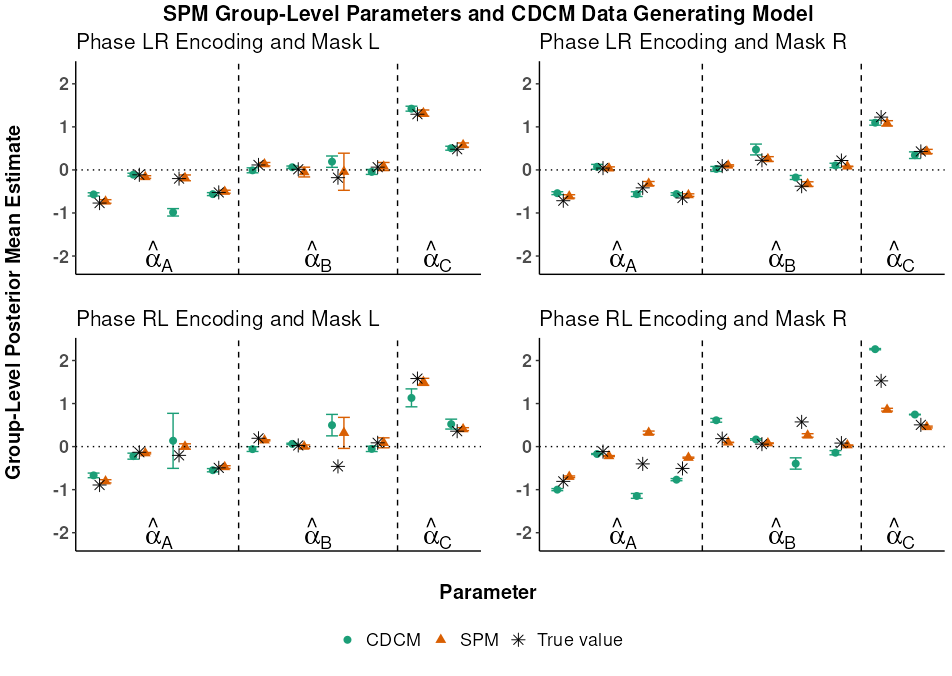}
    \caption{Group-level posterior mean estimates and 95\% HPD interval comparison between CDCM and SPM across all phase encoding and hemisphere combinations for the HCP follow-up simulation study. The data are simulated using SPM's group-level posterior means shown in Figure \ref{fig:group-alpha-comparison}, and CDCM's data generating model; the true values are denoted by the stars. Posterior means and intervals for the diagonal parameters have been reparameterized according to the respective modeling definition to facilitate direct comparison. Diagonal parameters are in positions one, four, five, and eight within each panel. For the LR phase encoding panels, both models perform fairly consistently, even with the true value and DGM mismatch. For RL phase coding using the left hemisphere, this remains true for all parameters except for that of the third modulation parameter, in which both models incorrectly estimate the sign. Interestingly, both models struggle overall with RL phase encoding and right hemisphere, as both are very confident, but not as close to the truth as in the other session $\times$ hemisphere combinations. For the key parameter of interest, the second value in $\mathbf{C}$, both methods correctly estimate a positive value.}
\end{figure}

\begin{figure}[ht!]
    \centering
    \includegraphics[width=\linewidth]{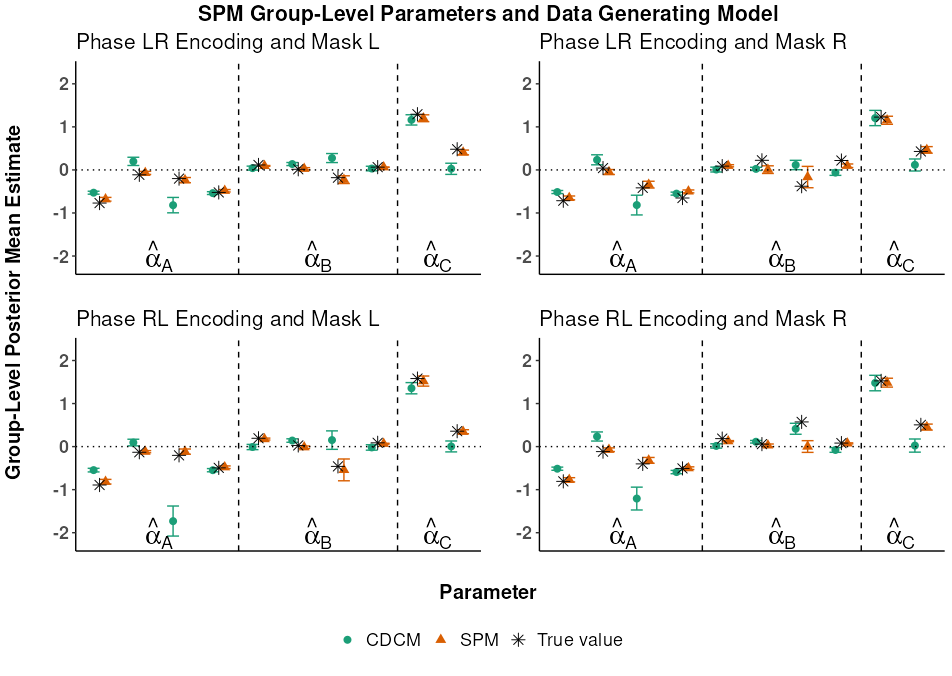}
    \caption{Group-level posterior mean estimates and 95\% HPD interval comparison between CDCM and SPM across all phase encoding and hemisphere combinations for the HCP follow-up simulation study. The data are simulated using SPM's group-level posterior means shown in Figure \ref{fig:group-alpha-comparison}, and SPM's data generating model; the true values are denoted by the stars. Posterior means and intervals for the diagonal parameters have been reparameterized according to the respective modeling definition to facilitate direct comparison. Diagonal parameters are in positions one, four, five, and eight within each panel. Unsurprisingly, SPM outperforms CDCM in terms of recovering the true parameters, but both methods are consistent in many cases. For the key parameter of interest, the second value in $\mathbf{C}$, SPM correctly estimates a positive value in all cases, and CDCM estimates values that are just barely positive, with all intervals including zero.}
\end{figure}

\end{document}